\documentclass[11pt,fleqn]{amsart}

\RequirePackage{amsthm,amsmath,amsfonts,amssymb}
\RequirePackage[authoryear, round]{natbib}
\RequirePackage{graphicx}
\RequirePackage{mathrsfs, dsfont}
\usepackage{xcolor}
\usepackage{enumerate}
\usepackage{abbreviationsGeneral-ext}

\numberwithin{equation}{section}

\theoremstyle{remark}
\newtheorem{assumption}{Assumption}[section] 
\newtheorem{definition}[assumption]{Definition}
\newtheorem{example}[assumption]{Example}
\newtheorem{remark}[assumption]{Remark}

\theoremstyle{plain}
\newtheorem{corollary}[assumption]{Corollary}
\newtheorem{lemma}[assumption]{Lemma}
\newtheorem{proposition}[assumption]{Proposition}
\newtheorem{theorem}[assumption]{Theorem}

\newcommand{\ds}{\displaystyle}
\newcommand{\FF}{\emph{first-vs-full}}
\newcommand{\FL}{\emph{first-vs-last}}

\begin{document}


\title[U-statistic change-point tests under alternatives]{First versus full or first versus last: U-statistic change-point tests under fixed and local alternatives}
\author{Herold Dehling}
\author{Daniel Vogel}
\author{Martin Wendler}
\date{February 15, 2026}

\address{
Fakult\"at f\"ur Mathematik, Ruhr-Universit\"at Bo\-chum, 
44780 Bochum, Germany}
\email{herold.dehling@rub.de}

\address{
MEDICE, 
58239 Iserlohn, Germany}
\email{daniel.vogel@tu-dortmund.de}

\address{
Otto-von-Guericke-Universit{\"a}t Magdeburg, 
39106 Magdeburg, Germany}
\email{martin.wendler@ovgu.de}




\begin{abstract}
The use of U-statistics in the change-point context has received considerable attention in the literature. We compare two approaches of constructing CUSUM-type change-point tests, which we call the \FF\ and \FL\ approach. Both have been pursued by different authors. The question naturally arises if the two tests substantially differ and, if so, which of them is better in which data situation. In large samples, both tests are similar: they are asymptotically equivalent under the null hypothesis and under sequences of local alternatives. In small samples, there may be quite noticeable differences, which is in line with a different asymptotic behavior under fixed alternatives. We derive a simple criterion for deciding which test is more powerful. 
We examine the examples Gini's mean difference, the sample variance, and Kendall's tau in detail. 
Particularly, when testing for changes in scale by Gini's mean difference, we show that the \FF\ approach has a higher power if and only if the scale changes from a smaller to a larger value -- regardless of the population distribution or the location of the change. The asymptotic derivations are under weak dependence. The results are illustrated by numerical simulations and data examples.
\end{abstract}


\keywords{%
change-point analysis, 
CUSUM test, 
Gini's mean difference,
Kendall's tau, 
sample variance.
{\it MSC 2020:} 62F03, 60F17, 62E20%
}

\maketitle

\tableofcontents


\section{Introduction}

The classical approach for testing the constancy of the mean of a time series $X_1, \ldots, X_n$ is the CUSUM test based on the test statistic
\be \label{eq:cusum1}
	T_n = \max_{1 \le k \le n} \frac{k}{\sqrt{n}} \left\vert \bar{X}_{1:k} - \bar{X}_{1:n} \right\vert,
\ee
where $\bar{X}_{i:j}$ denotes the sample mean computed from $X_i, \ldots, X_j$ for any two integers $i,j$ such that $1 \le i < j \le n$. The test statistic may equally be written as
\be
	 \label{eq:cusum2}
	T_n = \max_{1 \le k \le n} \frac{k(n-k)}{n^{3/2}} \left\vert \bar{X}_{1:k} - \bar{X}_{(k+1):n} \right\vert.
\ee
These two representations (\ref{eq:cusum1}) and (\ref{eq:cusum2}) suggest two different views of the CUSUM test statistic: one may either view it, for each $k$, as a comparison of the sample mean of the first part of the sample $X_1, \ldots, X_k$ to the sample mean of the whole sample or as a comparison of the mean of the first part to the mean of the remaining part $X_{k+1}, \ldots, X_n$. 
When the CUSUM approach is extended to a general setting, i.e., when we want to test the constancy of some parameter $\theta \in \R$ of the marginal distribution of the observed process, for which an estimator $\hat\theta_n$ is available, we may consider either of the test statistics 
	$T_1(\hat\theta_n) = \max_{1 \le k \le n} k n^{-1/2} \vert \hat\theta_{1:k} - \hat\theta_{1:n} \vert$
and
	 \label{eq:cusum:gen2}
	$T_2(\hat\theta_n) = \max_{1 \le k \le n} k(n-k)n^{-3/2} \vert \hat\theta_{1:k} - \hat\theta_{(k+1):n}\vert$
with $\hat\theta_{i:j}$ being defined analogously to $\bar{X}_{i:j}$. 
If $\hat\theta_n$ is a \emph{linear} statistic, i.e., if
\be \label{eq:linear}
	\hat\theta_n = \frac{1}{n} \sum_{i=1}^n \xi (X_i)
\ee
for some function $\xi$, both test statistics are identical, but in general they are not. 
Both routes to generalizing the CUSUM test statistic, \emph{first-vs-full} and \emph{first-vs-last}, have been taken by different authors. 
In the present paper, we analyze the difference between $T_1(\hat\theta_n)$ and $T_2(\hat\theta_n)$ in the case $\hat\theta_n$ is a U-statistic. 

Let $(X_i)_{i\in\Z}$ be a $p$-dimensional, stationary stochastic process. A one-sample U-statistic or order 2 is defined as
\be \label{eq:u-statistic}
  U_n = \frac{1}{\binom{n}{2}} \sum_{1\leq i<j\leq n} h(X_i,X_j), 
\ee
where $h: \R^p \times \R^p \to \R$ is a kernel function satisfying $h(x,y) = h(y,x)$. 
In the case of i.i.d.\ data, $U_n$ is an unbiased estimator of the parameter $\theta=E(h(X,Y))$, where $X,Y$ are two independent random variables with the same marginal distribution as $X_1$.  If the underlying data are weakly dependent, $U_n$ is still a consistent estimator of $\theta$ as long as the conditions of the U-statistics ergodic theorem are satisfied, see, e.g., \citet{aaronson:burton:dehling:gilat:hill:weiss:1996}. 
Thus $U_n$ can be used to test for stationarity of a process against the alternative of a change in the parameter $\theta$. Thus change-point tests based on U-statistics have been studied by many others, e.g., \citet{sen:1983}, \citet{hawkins:1989}, \citet{gombay:horvath:1995}, \citet{gombay:2001}, \citet{gombay:horvath:2002}, \citet{chen:qin:2010}, and \citet{matteson:james:2014}. Tests for weakly dependent series have been considered by \citet{buecher:kojadinovic:2016} and \citet{dehling:vogel:wendler:wied:2017}.
Recently, \citet{liu:zhou:zhang:liu:2020}, \citet{wang:zhu:volgushev:shao:2022}, \citet{boniece:horvath:jacobs:2024}, and \citet{zhao:zhu:tan:2024} consider the U-statistic-based change-point tests in the high-dimensional settings. Other lines of research are devoted to sequential (or \emph{online}) change-point tests \citep[e.g.][]{gombay:2000, kirch:stoehr:2022} and the use of anti-symmetric (or \emph{two-sample}) U-statistics \citep[e.g.][]{yu:chen:2022, dehling:vuk:wendler:2022, wegner:wendler:2024}. However, the latter is not pursued here. We restrict our attention to symmetric kernels in the offline (or \emph{retrospective}) setting. 

%
We define the \FF\ and the \FL\ U-statistic change-point tests based on the kernel $h$ as 
\be \label{eq:cusum:ustat1}
	T_1(h) = \max_{1 \le k \le n} \frac{k}{\sqrt{n}} \left\vert U_{1:k} - U_{1:n} \right\vert, 
	\qquad
	T_2(h) = \max_{1 \le k \le n} \frac{k(n-k)}{n^{3/2}} \left\vert U_{1:k} - U_{(k+1):n} \right\vert,
\ee
respectively, where 
\[
	U_{k:l} = \frac{1}{\binom{l-k+1}{2}} \sum_{k\leq i<j\leq l} h(X_i,X_j),
\]
for any $1 \le k < l \le n$. The former test statistic has been considered, e.g., by \citet{dehling:vogel:wendler:wied:2017} and the latter by \citet{buecher:kojadinovic:2016}.

A prominent example for a U-statistic is Gini's mean difference with the kernel $h(x,y) = |x-y|$. The corresponding \FF\ CUSUM test has been considered by \citet{gerstenberger:vogel:wendler:2020} and was found to be a quite competitive change-point test for scale. Another popular U-statistic is Kendall's tau. Its use in the change-point context has been studied in the \FL\ version by \citet{quessy:said:favre:2013} and in the \FF\ version by \citet{dehling:vogel:wendler:wied:2017}. 
In fact, a discussion emerging during the review process of the latter paper has, to a large degree, prompted the research of the present paper. It has been argued that the \FL\ version is presumably uniformly more powerful against one-sudden-change alternatives. However, this is not the case. 

To illustrate the situation and to motivate all following derivations, we shall quote simulated powers for Gini's mean difference (GMD) in a simple example setting. Consider a series $X_1, \ldots, X_n$ of univariate, independent and centered normal variables. At the beginning of the sequence, the observations have standard deviation 1 and at time $[n/3]$, the standard deviation changes to $\sigma > 0$. We apply the studentized versions of the \FF\ and the \FL\ GMD test at the 5\% significance level using the asymptotic null distribution and the variance estimator
\[
	\hat\sigma_{\rm GMD}^2 = \frac{2}{n} \sum_{i=1}^n \Big( \frac{1}{n}\sum_{j=1}^n |X_i- X_j| - g_n \Big)^2, 
\]
where $g_n$ is the sample GMD of $X_1, \ldots, X_n$. 
We observe the following:
\begin{enumerate}[(S1)]
\item \label{list:I} For sample size $n=4000$ and $\sigma = 1.08$, the \FF\ test has a power of 0.79 and the \FL\ test has a power of 0.79. Their relative difference in power is less than 0.1\%. Under the null hypotheses, i.e., $\sigma = 1$, both test have a rejection frequency of 0.05.
\item \label{list:II}
For sample size $n=60$ and $\sigma = 2$ the \FF\ test has a power of 0.70 and the \FL\ test has a power of 0.61. If the variance changes from 1 to $0.5$, the two tests have a power of 0.65 and 0.71, respectively. In both cases, the relative difference is about 10\%.
Under the null hypothesis, the size of both tests is between 0.03 and 0.04.
\end{enumerate}

The rejection frequencies stated above are each based on 10,000 runs and rounded to two decimals. These numbers may serve as illustration for general principles which will be shown in this paper: 
\begin{enumerate}[(A)]
\item Both test statistics, $T_1(h)$ and $T_2(h)$, have the same asymptotic distribution under the null hypothesis.
\item \label{list:B} Both test statistics have, for any non-degenerate U-statistic kernel $h$, the same asymptotic distribution under local alternatives. 
\item \label{list:C} For sequences of fixed alternatives, the test statistics generally behave differently. It does depend on the specific alternative which test has a better power. Particularly, the ranking of the tests is determined by only three numbers: 
\be \label{eq:thetas}
	\theta_F = E(h(X,X')), \qquad
	\theta_G = E(h(Y,Y')), \quad \mbox{ and } \quad
	\theta_{FG} = E(h(X,Y')), 
\ee
where $X, X', Y, Y'$ are independent with $X, X' \sim F$, $Y, Y' \sim G$ and $F$ and $G$ being the data distributions before and after the change, respectively.  
\item \label{list:D} Specifically for Gini's mean difference, the \FF\ test is always more powerful 
if $\theta_F < \theta_G$, and it is less powerful if $\theta_F > \theta_G$.
\item \label{list:E} Specifically for the sample variance, both tests are equally powerful unless there is a simultaneous change in mean. If there is a change in mean, the same relation as for Gini's mean difference holds.
%
\end{enumerate}

Particularly, (\ref{list:B}) implies that in large samples the tests have a similar power, which is illustrated by (S\ref{list:I}) above. 
Likewise, (\ref{list:C}) suggests that for smaller samples there may be substantial differences, as is illustrated by (S\ref{list:II}).

The organization of the paper follows the outline above: 
In Section \ref{sec:null}, we study both test statistics under the null hypothesis, in Section \ref{sec:local} under sequences of local alternatives. 
Section \ref{sec:fixed} is devoted to fixed alternatives and the investigation of the differences of the tests. 
Within the fixed-alternative scenario, we further derive results about the consistency of the change-point tests (Section 
\ref{sec:consistency}), the change-point location estimation (Section \ref{sec:location}), and the asymptotic normality of the test statistics (Section \ref{sec:asym.dist}).
In Section \ref{sec:examples}, we examine several popular U-statistics: Gini's mean difference, the sample variance, the sample covariance, and Kendall's tau. We show in particular (\ref{list:D}) and (\ref{list:E}) and provide further simulation results.
Section \ref{sec:data} contains data examples, Section \ref{sec:conclusion} concludes. All proofs are deferred to the appendix in supplementary material.

%
%
%
%
%
%
%
%
%
%
\section{Asymptotic results}
\label{sec:asym}

We study the asymptotic behavior of both test statistics under the null hypothesis, under sequences of local alternatives, under sequences of fixed alternatives, and the asymptotic behavior of the corresponding maximizing points, which serve as estimators for the location of the change under one-change-point alternatives.
Throughout, $(X_i)_{i\in\Z}$ is a $p$-dimensional, strongly stationary sequence with marginal distribution $F$ and 
$h: \R^p \times \R^p \to \R$ a symmetric kernel function. We restrict our attention to real-valued kernels, but, as in the case of Kendall's tau, the observations may be multivariate. 
Define, for $n \in \N$, the short-hand notation 
\be \label{eq:process.FvsF}
	D^F_n(k) =  k 
	\left(  U_{1:k} - U_{1:n} \right), 
	\qquad 2 \le k \le n, 
\ee
and
\be \label{eq:process.FvsL}
	D^L_n(k) =  \frac{k(n-k)}{n} \left(  U_{1:k} - U_{(k+1):n} \right),
		\qquad 2 \le k \le n, 
\ee
for the weighted \emph{D}ifferences processes of the first-vs-\emph{F}ull and first-vs-\emph{L}ast approach, respectively, so that the corresponding test statistics (\ref{eq:cusum:ustat1}) can be written as
	$T_1(h) = n^{-1/2} \max_{1\le k \le n} \vert D^F_n(k) \vert$ and
	$T_2(h) = n^{-1/2} \max_{1\le k \le n} \vert D^L_n(k) \vert$.

%
%
%
%
%
%
%
%
%
%

\subsection{Null hypothesis}
\label{sec:null}

The main tool for treating U-statistics asymptotically is the Hoeffding decomposition. For the stationary sequence $(X_i)_{i \in \Z}$, we write $U_{1:n}$ as  
\begin{equation*}
	U_{1:n} = 
	\theta + 
	\frac{2}{n}\sum_{i=1}^{n}h_{1}\left(X_{i}\right)
	+\frac{2}{n\left(n-1\right)}\sum_{1\leq i<j\leq n}h_{2}\left(X_{i},X_{j}\right),
\end{equation*}
where
\be \label{eq:hoeffding}
\theta  = E h(X,Y), \quad
h_1(x) = E h(x,Y)-\theta, 
\ee
\[
h_2(x,y) =h(x,y) - h_1(x) -h_1(y) - \theta,
\] 
and $X$, $Y$ are independent copies of $X_0$.
We assume the following condition to hold.
\begin{assumption} \label{ass:nh} \mbox{ \\ } 
\begin{enumerate}[(1)]
	\item \label{assu:weakcon}
	$\displaystyle
		\left(\frac{1}{\sqrt{n}}\sum_{i=1}^{[nt]}h_1(X_i)\right)_{t\in[0,1]} \to \sigma _h W$ \ \ 
   in distribution in the Skorokhod space $D[0,1]$   \\[0.5ex]
	
	as $n \rightarrow \infty$, where $W$ is a standard Brownian motion and 
	\be \label{eq:sigma2}
		\sigma^2_h = 4 \sum_{k=-\infty}^\infty \Cov(h_1(X_0),h_1(X_k)). 
	\ee
	\item \label{assu:h2}
	$\displaystyle
		E\bigg[\bigg(\max_{l\leq n}\Big|\sum_{1\leq i<j\leq l}h_{2}(X_{i},X_{j})\Big|\bigg)^2\bigg], \quad
		E\bigg[\bigg(\max_{l\leq n}\Big|\sum_{l\leq i<j\leq n}h_{2}(X_{i},X_{j})\Big|\bigg)^2\bigg]$ \\[0.5ex]
	are both $o(n^{3})$ as $n \to \infty$.
\end{enumerate}
\end{assumption}

Then we have the following result.

\begin{theorem} \label{th:nh}
If Assumption \ref{ass:nh} is satisfied, then 
\begin{enumerate}[(1)]
\item \label{theo:hypo1}
$\left(n^{-1/2} D^F_n([nt]) \right)_{0 \le t \le 1} \to \, \sigma_h B$ 
in distribution in the Skorokhod space $D[0,1]$ as $n \to \infty$, where $B$ is a Brownian bridge on $[0,1]$ and $\sigma_h$ defined in (\ref{eq:sigma2}).
\item \label{theo:hypo2}
$ \frac{1}{\sqrt{n}} \sup_{0\le t \le 1} \left| D_n^F([nt]) - D_n^L([nt]) \right| \to 0$ in probability as $n \to \infty$. 
\end{enumerate}
\end{theorem}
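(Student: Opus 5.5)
Both parts follow from the Hoeffding decomposition applied on every segment $\{i,\ldots,j\}$. For $1\le k<l\le n$ write
\[
	U_{k:l}-\theta = \frac{2}{l-k+1}\sum_{i=k}^{l} h_1(X_i) + \frac{1}{\binom{l-k+1}{2}}\sum_{k\le i<j\le l} h_2(X_i,X_j),
\]
and set $S_k=\sum_{i=1}^k h_1(X_i)$. The plan is to show that the linear parts give exactly the classical CUSUM processes, and that the degenerate parts are uniformly negligible after the normalisation $n^{-1/2}$.

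\textbf{Step 1 (linear part of $D^F_n$).}
Insert the decomposition into $D^F_n(k)=k(U_{1:k}-U_{1:n})$. The linear contribution is
\[
	k\Big(\tfrac{2}{k}S_k-\tfrac{2}{n}S_n\Big)=2\Big(S_k-\tfrac{k}{n}S_n\Big).
\]
By Assumption~\ref{ass:nh}(\ref{assu:weakcon}) and the continuous mapping theorem (the map $x\mapsto x(t)-t\,x(1)$ is continuous on $D[0,1]$ at continuous limits), this converges to $\sigma_h\,(W(t)-tW(1))=\sigma_h B(t)$. Here the factor $2$ in the Hoeffding decomposition is absorbed in the $4$ in (\ref{eq:sigma2}), so one needs to check that the normalisation in Assumption~\ref{ass:nh}(\ref{assu:weakcon}) is meant with $2h_1$; otherwise only the constant changes.

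\textbf{Step 2 (degenerate part of $D^F_n$).}
The degenerate contribution is
\[
	\frac{2}{k-1}\sum_{1\le i<j\le k}h_2 \;-\; \frac{2k}{n(n-1)}\sum_{1\le i<j\le n}h_2 .
\]
The second term is bounded by $C n^{-1}\max_l|\sum_{i<j\le l}h_2|$. Its $L^2$-norm is $o(n^{1/2})$, so after multiplying by $n^{-1/2}$ it is $o_P(1)$.

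The first term is the main obstacle, because the weight $1/(k-1)$ is large for small $k$ and the maximal inequality only controls the partial sums up to $n$. I would split at $k\le \delta n$ versus $k>\delta n$. For $k>\delta n$ the bound is $(\delta n)^{-1}\max_l|\cdot|=o_P(n^{1/2})$. For small $k$ I would try a dyadic blocking: on $2^m<k\le 2^{m+1}$ apply Assumption~\ref{ass:nh}(\ref{assu:h2}) with $n$ replaced by $2^{m+1}$, which gives $o(2^{3m/2})$ in $L^2$. Dividing by $2^m$ gives $o(2^{m/2})$. Then a union bound over $m\le\log_2 n$ via Chebyshev bounds the probability by $\sum_m \epsilon_m 2^{m}/(\eta^2 n)$, which is small because $\epsilon_m\to0$ and the geometric sum is dominated by the largest $m$. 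Small $m$ contribute $O(2^{m/2})/\sqrt n\to0$ anyway. This establishes part~(\ref{theo:hypo1}).

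\textbf{Step 3 (part (\ref{theo:hypo2})).}
Compute the difference directly. Since
\[
	U_{1:n}\tbinom{n}{2}=U_{1:k}\tbinom{k}{2}+U_{(k+1):n}\tbinom{n-k}{2}+\sum_{i\le k<j}h(X_i,X_j),
\]
it is cleaner to work again with the Hoeffding parts. The linear parts of $D^F_n$ and $D^L_n$ coincide exactly: both equal $2(S_k-\frac{k}{n}S_n)$, since for linear statistics the two CUSUMs agree, as noted in the introduction. Hence
\[
	D^F_n(k)-D^L_n(k)
\]
involves only $h_2$-sums. These are $\frac{2}{k-1}\sum_{i<j\le k}h_2$, $\frac{2}{n-k-1}\sum_{k<i<j\le n}h_2$ and $\frac{2}{n-1}\sum_{i<j\le n}h_2$, each with bounded weights such as $k/n$ or $(n-k)/n$, and the $\theta$ terms cancel. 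The first and third are handled as in Step~2. The second, with weight $k/n$, is handled symmetrically using the second maximal moment in Assumption~\ref{ass:nh}(\ref{assu:h2}), with the same dyadic argument near $k=n$. This gives $\sup_t n^{-1/2}|D^F_n-D^L_n|\to0$ in probability, and consequently $n^{-1/2}D^L_n([nt])$ has the same Brownian bridge limit.
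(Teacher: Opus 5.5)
Your proof is correct and is essentially the paper's argument. Both use the Hoeffding decomposition and the exact agreement of the linear parts of $D_n^F$ and $D_n^L$ as $2(S_k-\frac{k}{n}S_n)$. Both then show by dyadic blocking that the degenerate parts are uniformly $o_P(\sqrt{n})$. The paper does this in Proposition~\ref{prop:local:degenerate}, bounding $E[\max_m Y_m^2]$ by $\sum_m E[Y_m^2]$ instead of your union bound, so it needs no split at $\delta n$.

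Your caveat about the factor $2$ is well founded. With $\sigma_h^2$ as in (\ref{eq:sigma2}), Assumption~\ref{ass:nh}(\ref{assu:weakcon}) has to be read with $2h_1$ in place of $h_1$. The paper's own proof silently drops this factor when writing the linear part.
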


We have the following remarks. 
\begin{remark} \mbox{ \\ } \rm \label{rem:null}
\begin{enumerate}[(1)]
\item Assumption \ref{ass:nh} is a \emph{high-level} condition, which is \emph{usually} fulfilled, but where it is not necessarily obvious how it translates to explicit conditions on 
$(X_i)_{i \in \Z}$ and $h$. One such set of conditions is that $(X_i)_{i \in \Z}$ are independent and identically distributed and 
$E h^2(X_1,X_2) < \infty$. For dependent sequences, more technical conditions are necessary. One weak set of conditions, allowing a broad class of short-range dependent processes as detailed in Section \ref{app:pned} of the Appendix, is given by Lemma \ref{lem:dependent_data}. 

\item \label{2}
The theorem implies that both $\sigma_h^{-1} T_1(h)$ and $\sigma_h^{-1} T_2(h)$ converge to $\sup_{0\le \lambda\le 1} | B(\lambda)|$ for a Brownian bridge $B$, i.e., the corresponding studentized tests have the same asymptotic behavior under the null and hence the same critical values. The estimation of $\sigma^2_h$ for short-range dependent sequences has been treated by \citet{dehling:vogel:wendler:wied:2017}. Alternatively, a bootstrap scenario for U-statistic processes has been considered by \citet{buecher:kojadinovic:2016}.

\end{enumerate}
\end{remark}

%
%
%
%
%
%
%
%
%
%

\subsection{Local alternatives}
\label{sec:local}

We study local alternatives in the following set-up: As before, $(X_i)_{i \in \Z}$ is a $p$-dimensional, strongly stationary process with marginal distribution $F$. Further consider an array of $p$-dimensional random variables $(Y_{i,n})_{i\in\Z, n\in \N}$, where each row $(Y_{i,n})_{i\in\Z}$ for any fixed $n$ is a stationary process with marginal distribution $G_n$. 
We then construct an array $(X_{i,n})_{i\in\Z, n\in \N}$ as follows:
There is a fixed number $\tau^\star \in (0,1)$ such that
\[
   X_{i,n} = 
	\begin{cases}
			  X_i      &  \ \text{ for } \, i \le [n\tau^\star], \\
				Y_{i,n}  &  \ \text{ for } \, i \ge [n\tau^\star]+1.
	\end{cases}
\]

The observed data in the local-alternative setting is then the triangular array $(X_{i,n})_{1\le i \le n, n \in \N}$ obtained as a cut-out from $(X_{i,n})_{i\in\Z, n\in \N}$. This means, e.g., the processes $D_n^F$ and $D_n^L$, defined in (\ref{eq:process.FvsF}) and (\ref{eq:process.FvsL}), respectively, are henceforth, for each $n \in \N$, computed on $X_{1,n},\ldots,X_{n,n}$ instead of $X_1,\ldots, X_n$.
We adopt the convention to omit the commas separating several indices if there is no ambiguity. The reason for this slight misuse of notation will become apparent in Assumption \ref{ass:local2} below. For instance, $X_{in}$ is short for $X_{i,n}$ and $h_{2ijn}$ is short for $h_{2,i,j,n}$.
Let furthermore
\[
	\theta_F = E(h(X,X')), \quad
	\theta_{G_n} = E(h(Y,Y')), \quad 
	\theta_{FG_n} = E(h(X,Y')), 
\]
and
\[
	\rho_{FG_n} = \theta_{FG_n}-(\theta_F+\theta_{G_n})/2,
\]
where $X, X', Y, Y'$ are independent with $X, X' \sim F$, $Y, Y' \sim G_n$.
The idea of the \emph{local} alternative is that $G_n$ approaches $F$ is some suitable sense as $n \to \infty$.

\begin{assumption} \label{ass:local1}
Let $F$ and $(G_n)_{n\in\N}$ be such that, for $n \to \infty$,
\begin{enumerate}[(1)] 
\item $\sqrt{n} (\theta_F - \theta_{G_n}) \to \Delta$ for some $\Delta \in \R$ and
\item $\sqrt{n} \rho_{FG_n} \to 0$.
\end{enumerate}
\end{assumption}

We further require technical conditions on the kernel $h$. Let $(\tilde{X}_{in})_{1\leq i\leq n, n\in\N}$ be an independent copy of the triangular scheme 
$(X_{in})_{1\leq i\leq n, n\in\N}$ and define further
\begin{align} \label{theta_ijn}
		\theta_{ijn} & =E\left[h(X_{in},\tilde{X}_{jn})\right],\nonumber \\
		h_{1ijn}(x)  & =E\left[h(x,X_{jn})\right]-\theta_{ijn}, \\
		h_{2ijn}(x,y)& =h(x,y)-E\left[h(X_{in},y)\right]-E\left[h(x,X_{jn})\right]+\theta_{ijn}. \nonumber
\end{align}
These terms constitute the Hoeffding decomposition when the underlying data generating process is a triangular array. 

\begin{assumption} \label{ass:local2}
Let $(X_{in})_{1\leq i\leq n, n\in\N}$ and $h$ be such that
\begin{enumerate}[(1)] 
\item 
$\ds E\bigg[\bigg(  \max_{2\leq l \leq n}\Big|\sum_{1\leq i<j\leq l}h_{2ijn}(X_{in},X_{jn})\Big|   \bigg)^2\bigg], \quad
     E\bigg[\bigg(  \max_{2\leq l \leq n}\Big|\sum_{l\leq i<j\leq n}h_{2ijn}(X_{in},X_{jn})\Big|  \bigg)^2\bigg]$ \\
are $o(n^3)$ for $n \to \infty$, and 

\vspace{0.5ex}
\item the four terms
\[
 \frac{1}{n} E\bigg( \max_{k\leq n}\Big|\sum_{i=1}^k \{ h_{1i1n}(X_{in})-h_{1}(X_i)\} \Big| \bigg)^2, \quad
 \frac{1}{n} E\bigg( \max_{k\leq n}\Big|\sum_{i=1}^k \{ h_{1inn}(X_{in})-h_{1}(X_i)\} \Big| \bigg)^2,
\]
\[
 \frac{1}{n^3} E\bigg( \max_{k\leq n}\Big|\sum_{i=1}^k i \{ h_{1i1n}(X_{in})-h_{1}(X_i) \} \Big| \bigg)^2, \
 \frac{1}{n^3} E\bigg( \max_{k\leq n}\Big|\sum_{i=1}^k i \{ h_{1inn}(X_{in})-h_{1}(X_i) \} \Big| \bigg)^2
\]
all converge to zero as $n \to \infty$.
\end{enumerate}
\end{assumption}

We then have the following process convergence result in the local-alternative set-up.

\begin{theorem}\label{th:local} Let Assumptions \ref{ass:nh} (1), \ref{ass:local1}, and \ref{ass:local2} hold. Then
\begin{enumerate}[(1)]
\item 
$\ds \left( \frac{1}{\sqrt{n}} D^F_n([n t]) - \phi(t) \right)_{0 \le t \le 1}  \to \, \sigma_h B$ 
in distribution in the Skorokhod space $D[0,1]$ as $n \to \infty$, where $B$ is a Brownian bridge on $[0,1]$, $\sigma_h$ 
as in Theorem \ref{th:nh}, and
\[
		\phi(t) = 
		\begin{cases}t(1-\tau^\star)\Delta &\text{ for } t<\tau^\star\\
		(1-t)\tau^\star\Delta &\text{ for } t\geq\tau^\star\end{cases}
\]
\item
$\ds \frac{1}{\sqrt{n}} \sup_{0\le t \le 1} \left| D_n^F([nt]) - D_n^L([nt]) \right| \to 0$ in probability as $n \to \infty$. 
\end{enumerate}
\end{theorem}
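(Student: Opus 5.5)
We will use the triangular-array Hoeffding decomposition (\ref{theta_ijn}). For each $n$ and $1\le i<j\le n$ we write
\[
h(X_{in},X_{jn}) = \theta_{ijn} + h_{1ijn}(X_{in}) + h_{1jin}(X_{jn}) + h_{2ijn}(X_{in},X_{jn}).
\]
This identity follows from the definitions because $E[h(X_{in},y)] = h_{1jin}(y)+\theta_{ijn}$ by symmetry of $h$. With $\tau_n=[n\tau^\star]$, the kernel functions $h_{1ijn}$ depend on $j$ only through whether $j\le\tau_n$ or $j>\tau_n$. So $h_{1ijn}=h_{1i1n}$ for $j\le\tau_n$ and $h_{1ijn}=h_{1inn}$ for $j>\tau_n$.

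Similarly, $\theta_{ijn}$ takes only the three values $\theta_F$, $\theta_{FG_n}$ and $\theta_{G_n}$. Consequently $\binom{l-k+1}{2}U_{k:l}$ splits into three parts:
\begin{enumerate}[(a)]
\item a deterministic part built from these three constants;
\item a linear part $\sum_i c_i(k,l)\,h_{1\cdot}(X_{in})$ whose weights count partners before and after $\tau_n$;
\item a degenerate part.
\end{enumerate}

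First we dispose of the degenerate part. Every $U_{1:k}$, $U_{1:n}$ and $U_{(k+1):n}$ enters $D_n^F$ and $D_n^L$ with a prefactor of order $k/\binom{k}{2}$ or $n/\binom{n-k}{2}$. Combined with Assumption \ref{ass:local2}(1), this makes the degenerate part $o_P(\sqrt n)$ uniformly in $k$. For small $k$, respectively $n-k$, we bound the normalised sum by $k^{-1}\max_{l}|\sum_{i<j\le l}h_{2ijn}|$. Here we split at $k\le n\varepsilon$ and use the crude bound there, exactly as in the proof of Theorem \ref{th:nh}. We expect this part to go through verbatim from the null case, only with $h_{2ijn}$ in place of $h_2$.

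Next, the linear part. For $i\le\tau_n$ the relevant function is $h_{1i1n}$ or $h_{1inn}$ depending on the partner $j$. We replace every $h_{1i\cdot n}(X_{in})$ by $h_1(X_i)$. The resulting errors have the form $\sum_{i\le k}\{h_{1i1n}(X_{in})-h_1(X_i)\}$, possibly weighted by $i$ because the coefficients $c_i$ are affine in $i$ (from partial sums over $i<j\le k$). These errors are uniformly $o_P(\sqrt n)$ by the four conditions of Assumption \ref{ass:local2}(2). Summation by parts handles the weights $(k-i)$ and $(n-i)$ using the $\frac1{n^3}E\max|\sum i\{\cdot\}|^2$ terms.

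After this replacement, the linear part of $D_n^F(k)$ becomes $\frac{2k}{k}\sum_{i\le k}h_1(X_i)-\frac{2k}{n}\sum_{i\le n}h_1(X_i)+o_P(\sqrt n)$, up to negligible $1/k$ corrections. By Assumption \ref{ass:nh}(1) this converges to $\sigma_h(W(t)-tW(1))=\sigma_h B(t)$. For $D_n^L$ one gets the same expression, $\frac{k(n-k)}{n}\bigl(\frac2k S_k-\frac2{n-k}(S_n-S_k)\bigr)=2S_k-\frac{2k}{n}S_n$. Hence the linear parts of $D_n^F$ and $D_n^L$ coincide up to $o_P(\sqrt n)$.

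Finally, the deterministic parts. This is where the two approaches could differ, and we expect it to be the main bookkeeping obstacle. For $k\le\tau_n$ we have $E U_{1:k}=\theta_F$. Since $U_{1:n}$ mixes $\theta_F$, $\theta_{FG_n}$ and $\theta_{G_n}$ with weights $\tau^2$, $2\tau(1-\tau)$ and $(1-\tau)^2$, we use $\theta_{FG_n}=(\theta_F+\theta_{G_n})/2+\rho_{FG_n}$ to write
\[
E U_{1:n}=\tau\theta_F+(1-\tau)\theta_{G_n}+O(\rho_{FG_n})+O(1/n).
\]
Therefore $n^{-1/2}k(\theta_F-EU_{1:n})\to t(1-\tau^\star)\Delta$ by Assumption \ref{ass:local1}. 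The case $k>\tau_n$ is analogous and yields $(1-t)\tau^\star\Delta$, so the deterministic part of $n^{-1/2}D^F_n$ converges to $\phi(t)$ uniformly. For $D_n^L$ the deterministic part is $\frac{k(n-k)}{n}(EU_{1:k}-EU_{(k+1):n})$. The same substitution shows that it equals the $D_n^F$ term up to a remainder of order $n\rho_{FG_n}+O(1)$, which is $o(\sqrt n)$ uniformly in $k$. This gives (2), and together with the previous paragraphs and Slutsky's lemma in $D[0,1]$ it gives (1).
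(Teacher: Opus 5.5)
Your proposal is correct and follows essentially the paper's route. You use the triangular Hoeffding decomposition with the three-valued $\theta_{ijn}$, and make the degenerate parts negligible via Assumption \ref{ass:local2}(1) and Proposition \ref{prop:local:degenerate}. You then replace $h_{1ijn}(X_{in})$ by $h_1(X_i)$ via Assumption \ref{ass:local2}(2), which gives the common linear term $2S_k-\frac{2k}{n}S_n$ for both processes, and show the deterministic parts agree up to $O(n\rho_{FG_n})=o(\sqrt n)$.

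You differ from the paper in two small respects. First, you derive part (1) directly from this decomposition, whereas the paper cites \citet{dehling:vogel:wendler:wied:2017}. Second, for small $k$ (resp.\ small $n-k$) the degenerate term cannot be handled by a single split at $n\varepsilon$ with a crude bound. It needs the dyadic blocking argument of Proposition \ref{prop:local:degenerate}, which the proof of Theorem \ref{th:nh} you refer to actually uses.
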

Assumption \ref{ass:local1} captures the common notion of a series of local alternatives to approach the null hypothesis at a 
$(1/\sqrt{n})$-rate. 
Assumption \ref{ass:local2} is a technical regularity condition. Easier-to-verify conditions for independent and dependent data sequences
are given by Lemmas \ref{lem:local2} and \ref{lem:local3}, respectively, of Section \ref{app:assumptions_local} in the supplementary material. 

The crucial assumption here is \ref{ass:local1} (2). In many standard situations, Assumption \ref{ass:local1} (2) is fulfilled if 
\ref{ass:local1} (1) holds. In Proposition \ref{prop:densities}, we treat the case that $F$ and $G_n$ have densities $f$ and $g_n$, respectively, belonging to a one-parameter family $p_\lambda(x)$, $\lambda\in \Lambda$, where $\Lambda \subset \R$ is an interval. 
We assume that the function $\lambda\mapsto p_\lambda(x)$ is differentiable for almost all $x$, and we denote its partial derivative by $\frac{\partial}{\partial \lambda} p_\lambda (x)$. Let $\lambda_0$ 
be an interior point of $\Lambda$, let $a\in \R$, and consider the densities
\[
 f(x) =p_{\lambda_0}(x), \qquad 
g_n(x) = p_{\lambda_0+a/\sqrt{n}} (x).
\]
Then, under some technical conditions, Assumption~\ref{ass:local1} is satisfied.

\begin{proposition} \label{prop:densities}
Suppose there is a measurable function $v(x)$ such that for some $\epsilon >0$
\begin{equation}\label{eq:dom-conv-1}
   \Big| p_\lambda (x) \Big| +     \Big| \frac{\partial}{\partial \lambda} p_\lambda (x) \Big| \leq v(x)
	 \ \mbox{ for all } \ \lambda \in (\lambda_0-\epsilon, \lambda_0+\epsilon)
\end{equation}
and such that
\begin{equation}
\label{eq:dom-conv-2}
     \iint |h(x,y)| v(x)\, v(y) \, dx\, dy  <\infty 
\end{equation}
Then Assumption~\ref{ass:local1} is satisfied for the distributions $F$ and $G_n$ with densities $f(x)$ and $g_n(x)$ as defined above, and with
\[
  \Delta=-2\, a\, \iint h(x,y)\,  \frac{\partial}{\partial \lambda} p_{\lambda_0}(x) \, p_{\lambda_0}(x) \, dx. 
\]
\end{proposition}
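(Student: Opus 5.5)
The plan is to reduce everything to the behaviour of the bilinear functional
\[
\Theta(\lambda,\mu)=\iint h(x,y)\,p_\lambda(x)\,p_\mu(y)\,dx\,dy ,\qquad \lambda,\mu\in(\lambda_0-\epsilon,\lambda_0+\epsilon).
\]
By (\ref{eq:dom-conv-1}) and (\ref{eq:dom-conv-2}), $\Theta$ is well defined there. It is symmetric, $\Theta(\lambda,\mu)=\Theta(\mu,\lambda)$, because $h$ is symmetric. With $\lambda_n=\lambda_0+a/\sqrt n$ and $\delta_n=a/\sqrt n$ we have
\[
\theta_F=\Theta(\lambda_0,\lambda_0),\qquad \theta_{G_n}=\Theta(\lambda_n,\lambda_n),\qquad \theta_{FG_n}=\Theta(\lambda_0,\lambda_n).
\]
Put
\[
c:=\iint h(x,y)\,\frac{\partial}{\partial\lambda}p_{\lambda_0}(x)\,p_{\lambda_0}(y)\,dx\,dy .
\]
This is the quantity in the stated $\Delta$, with the inner integral read as a double integral in $y$. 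The goal is to show
\[
\Theta(\lambda_0,\lambda_n)-\Theta(\lambda_0,\lambda_0)=c\,\delta_n+o(\delta_n),\qquad
\Theta(\lambda_n,\lambda_n)-\Theta(\lambda_0,\lambda_n)=c\,\delta_n+o(\delta_n).
\]

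For the first expansion, I write the difference divided by $\delta_n$ as
\[
\iint h(x,y)\,p_{\lambda_0}(x)\,\frac{p_{\lambda_n}(y)-p_{\lambda_0}(y)}{\delta_n}\,dx\,dy .
\]
\begin{itemize}
\item For almost every $y$, the difference quotient converges to $\frac{\partial}{\partial\lambda}p_{\lambda_0}(y)$, by the assumed differentiability.
\item By the mean value theorem applied to $\lambda\mapsto p_\lambda(y)$ on $[\lambda_0,\lambda_n]$, which lies in the $\epsilon$-neighbourhood for large $n$, the quotient is bounded in absolute value by $v(y)$, using (\ref{eq:dom-conv-1}).
\item The integrand is therefore dominated by $|h(x,y)|v(x)v(y)$, which is integrable by (\ref{eq:dom-conv-2}).
\end{itemize}
Dominated convergence then gives the limit $c$, after using the symmetry of $h$ to move the derivative onto the first argument.

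The second expansion works the same way, with the quotient in the $x$-variable and the factor $p_{\lambda_n}(y)$ in place of $p_{\lambda_0}$. Here $p_{\lambda_n}(y)\to p_{\lambda_0}(y)$ pointwise, because differentiability implies continuity in $\lambda$, and $|p_{\lambda_n}(y)|\le v(y)$. The same dominating function applies, and the limit is again $c$.

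With the two expansions in hand, both parts of Assumption~\ref{ass:local1} follow by algebra. Adding the expansions gives
\[
\theta_{G_n}-\theta_F=2c\,\delta_n+o(\delta_n),
\]
so $\sqrt n(\theta_F-\theta_{G_n})\to-2ac=\Delta$. For the second part,
\[
\rho_{FG_n}=\Theta(\lambda_0,\lambda_n)-\tfrac12\big(\Theta(\lambda_0,\lambda_0)+\Theta(\lambda_n,\lambda_n)\big)
=\tfrac12\big[\Theta(\lambda_0,\lambda_n)-\Theta(\lambda_0,\lambda_0)\big]-\tfrac12\big[\Theta(\lambda_n,\lambda_n)-\Theta(\lambda_0,\lambda_n)\big].
\]
This equals $\tfrac12 c\,\delta_n-\tfrac12 c\,\delta_n+o(\delta_n)=o(1/\sqrt n)$, which is Assumption~\ref{ass:local1}(2).

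The main obstacle is justifying the mean value bound. It requires $\lambda\mapsto p_\lambda(x)$ to be differentiable on the whole interval for almost every $x$, not merely at $\lambda_0$; I read the hypothesis in this sense. If only a.e.\ differentiability in $\lambda$ were available, I would instead use absolute continuity, writing
\[
p_{\lambda_n}(y)-p_{\lambda_0}(y)=\int_{\lambda_0}^{\lambda_n}\frac{\partial}{\partial\lambda}p_s(y)\,ds,
\]
which gives the same bound by $v(y)$. A second point is that the second expansion cannot be obtained from differentiability at $\lambda_0$ alone. It needs the cross term, with $x$ and $y$ evaluated at different parameters, to be handled by dominated convergence, which is why the dominating function must be uniform in $\lambda$ over the whole neighbourhood.
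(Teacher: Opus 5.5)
Your proof is correct and is essentially the paper's argument: a first-order expansion of $\Theta$ by dominated convergence, using the dominating function $|h(x,y)|v(x)v(y)$ and a mean-value bound on the difference quotients, with the same value of $\Delta$ once the typo in its formula is read as a double integral against $p_{\lambda_0}(y)$, as you did. The only difference is bookkeeping: the paper splits $p_{\lambda_0}(x)p_{\lambda_0}(y)-p_{\lambda_n}(x)p_{\lambda_n}(y)$ into two linear terms plus an explicit cross term, which up to a constant factor is $\sqrt n\,\rho_{FG_n}$, whereas you telescope through $\Theta(\lambda_0,\lambda_n)$ and get $\rho_{FG_n}=o(n^{-1/2})$ as half the difference of two increments with the same leading coefficient $c$.
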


This covers in particular the Gaussian scale family, which we use extensively in the simulations in Section \ref{sec:examples}, see Example \ref{ex:scale_family} in the Appendix.



Theorems \ref{th:nh} and \ref{th:local} together state the two tests, the \FF\ and the \FL, are \emph{asymptotically equivalent}, i.e., for any array $(X_{in})$ of random variables falling either under the null hypothesis or a local alternative, the probability of a differing test decision tends to zero as the sample size tends to infinity.

%
%
%
%
%
%
%
%
%
%

\subsection{Fixed alternatives}
\label{sec:fixed}

So far we have noted that the two U-statistic change-point test statistics are asymptotically equivalent. Nevertheless, noteworthy differences in power are observed in simulations. The question remains if these differences can be backed by theoretical results and if conditions can be given to identify situations where one test is better than the other. 
Toward this end, we consider sequences of fixed alternatives. 
Let $(X_{i})_{i\in\Z}$ and $(Y_{i})_{i\in\Z}$ be two $p$-dimensional, strongly stationary sequences with marginal distributions $F$ and $G$, respectively.
We further require the $2p$-dimensional process $(X_i, Y_i)_{i\in\Z}$ to be strongly stationary.
Then the data array $(X_{in})_{i\in\Z, n\in \N}$ is constructed analogously to the local-alternative set-up:
\[
   X_{in} = 
	\begin{cases}
			  X_i      &  \ \text{ for } \, i \le [n\tau^\star], \\
				Y_{i}  &  \ \text{ for } \, i \ge [n\tau^\star]+1
	\end{cases}
\]
for some fixed $\tau^\star \in (0,1)$. The observed data is, as before, the triangular cut-out $(X_{in})_{1\leq i\leq n, n\in\N}$.
%
%
%
%
%
%
Let $(\tilde{X}_{in})_{1\leq i\leq n, n\in\N}$ be an independent copy of $(X_{in})_{i\in\Z, n\in \N}$. 
Recall the definitions of $\theta_F$, $\theta_G$, and $\theta_{FG}$, cf.~(\ref{eq:thetas}), and let
\be \label{eq:offset}
	\rho_{FG} = \theta_{FG} - (\theta_F+\theta_G)/2.
\ee
We call $\rho_{FG}$ the \emph{eccentricity} of $\theta_{FG}$ as it describes the deviation of the mixed parameter 
$\theta_{FG}$ from the midpoint between $\theta_F$ and $\theta_G$. This offset will play a crucial role in the following: it determines the power ranking of the test.

Below we employ again the notation $\theta_{ijn}$, $h_{1ijn}(x)$, and $h_{2ijn}(x,y)$, as defined in (\ref{theta_ijn}), but keep in mind that the arrays $(X_{in})$ and $(\tilde{X}_{in})$ are now defined slightly different than in Section \ref{sec:local}.

\begin{assumption} \label{ass:fixed}
Let $(X_{in})_{1\leq i\leq n, n\in\N}$ and $h$ be such that
\begin{enumerate}[(1)] 
\item the terms 
\label{enum:fixed1}
\[
E\bigg[\bigg(\max_{l\leq n}\Big|\sum_{1\leq i<j\leq l}h_{2ijn}(X_{in},X_{jn})\Big|\bigg)^2\bigg], \quad
E\bigg[\bigg(\max_{l\leq n}\Big|\sum_{l\leq i<j\leq n}h_{2ijn}(X_{in},X_{jn})\Big|\bigg)^2\bigg]
\]
are $o(n^4)$ for $n \to \infty$, and 	

\vspace{0.5ex}
\item the eight terms
\label{enum:fixed2}
\be
  \label{eq:fixed3}
	E\bigg[\bigg(\max_{l\leq n}\Big|\sum_{i=1}^lh_{1i1n}(X_{in})\Big|\bigg)^2\bigg], \quad
	E\bigg[\bigg(\max_{l\leq n}\Big|\frac{1}{l}\sum_{i=1}^lih_{1i1n}(X_{in})\Big|\bigg)^2\bigg],
\ee\be 
  \label{eq:fixed4}
	E\bigg[\bigg(\max_{l\leq n}\Big|\sum_{i=1}^lh_{1inn}(X_{in})\Big|\bigg)^2\bigg], \quad 
	E\bigg[\bigg(\max_{l\leq n}\Big|\frac{1}{l}\sum_{i=1}^lih_{1inn}(X_{in})\Big|\bigg)^2\bigg],
\ee
\be \label{eq:fixed5} 
	E\bigg[\bigg(\max_{l\leq n}\Big|\sum_{i=l+1}^nh_{1i1n}(X_{in})\Big|\bigg)^2\bigg], \quad
	E\bigg[\bigg(\max_{l\leq n}\Big|\frac{1}{n-l}\sum_{i=l+1}^n(n-i+1)h_{1i1n}(X_{in})\Big|\bigg)^2\bigg], 
\ee
\be \label{eq:fixed6}
	E\bigg[\bigg(\max_{l\leq n}\Big|\sum_{i=l+1}^nh_{1inn}(X_{in})\Big|\bigg)^2\bigg], \quad 
	E\bigg[\bigg(\max_{l\leq n}\Big|\frac{1}{n-l}\sum_{i=l+1}^n(n-i+1)h_{1inn}(X_{in})\Big|\bigg)^2\bigg]
\ee
are all $o(n^2)$ for $n \to \infty$.
\end{enumerate}
\end{assumption}

Similarly to Assumption \ref{ass:local2} in the previous section, Assumption \ref{ass:fixed} is a general technical condition. For independent data, it greatly simplifies, see Lemma \ref{lem:fixed:independent}. Analogous conditions for dependent data are given in Lemma 
\ref{lem:fixed:pned}.
Then we have the following result concerning the behavior of the change-point tests under fixed alternatives.
\begin{theorem}\label{th:fixed} 
Under the Assumption \ref{ass:fixed} we have
\begin{enumerate}[(1)] 
\item $\displaystyle
\sup_{t\in[0,1]}\left|\frac{1}{n}D_{n}^F([nt])-\Psi_1(t)\right|\to 0$ 
in probability as $n \to \infty$, where \\[0.5ex]
\begin{equation*}
\Psi_1(t) = 
\begin{cases}
	t(1-\tau^\star)(\theta_F-\theta_G)-2t\tau^\star(1-\tau^\star)\rho_{FG} & \text{ for } t<\tau^\star, \\
	(1-t)\tau^\star(\theta_F-\theta_G)+2\frac{t-\tau^\star}{t}\tau^\star\rho_{FG}-2t\tau^\star(1-\tau^\star)\rho_{FG} & \text{ for } t\geq\tau^\star.
\end{cases}
\end{equation*}
	
\item $\displaystyle
\sup_{t\in[0,1]}\left|\frac{1}{n}D_n^L([nt])-\Psi_2(t)\right| \to 0$
in probability as $n \to \infty$, where \\[0.5ex]
\begin{equation*}
\Psi_2(t) =
\begin{cases}
t(1-\tau^\star)(\theta_F-\theta_G)-2\frac{t(\tau^\star-t)}{1-t}(1-\tau^\star)\rho_{FG} &\text{ for } t<\tau^\star,\\
(1-t)\tau^\star(\theta_F-\theta_G)+2\frac{(t-\tau^\star)}{t}\tau^\star\rho_{FG}-2(t-\tau^\star)\tau^\star\rho_{FG} &\text{ for } t\geq\tau^\star. 
\end{cases}
\end{equation*}
\end{enumerate}
\end{theorem}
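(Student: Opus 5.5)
The plan is to apply the Hoeffding decomposition to the triangular array $(X_{in})$, using the quantities $\theta_{ijn}$, $h_{1ijn}$, $h_{2ijn}$ from (\ref{theta_ijn}). This splits every partial double sum into three parts: a deterministic part, which produces $\Psi_1$ and $\Psi_2$; a linear part, which is negligible by Assumption \ref{ass:fixed} (\ref{enum:fixed2}); and a degenerate part, which is negligible by Assumption \ref{ass:fixed} (\ref{enum:fixed1}). Write $m=[n\tau^\star]$. For any $1\le a\le b\le n$ we have
\[
\sum_{a\le i<j\le b} h(X_{in},X_{jn}) = \sum_{a\le i<j\le b}\theta_{ijn} + \sum_{a\le i<j\le b}\big\{h_{1ijn}(X_{in})+h_{1jin}(X_{jn})\big\} + \sum_{a\le i<j\le b} h_{2ijn}(X_{in},X_{jn}).
\]
The key structural observation concerns $h_{1ijn}(x)=E h(x,X_{jn})-\theta_{ijn}$. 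It depends on $j$ only through the regime of $X_{jn}$, so it equals $h_{1i1n}$ if $j\le m$ and $h_{1inn}$ if $j>m$. Likewise, $\theta_{ijn}\in\{\theta_F,\theta_{FG},\theta_G\}$ according to the regimes of $i$ and $j$.

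\emph{Deterministic part.} For $k\le m$ the sum $\sum_{1\le i<j\le k}\theta_{ijn}$ equals $\binom{k}{2}\theta_F$. For $k>m$ it equals $\binom{m}{2}\theta_F+m(k-m)\theta_{FG}+\binom{k-m}{2}\theta_G$. The analogous formulas hold for the segments $(k+1):n$ and $1:n$. Dividing by the binomial normalisations and inserting into $\frac{k}{n}(U_{1:k}-U_{1:n})$ and $\frac{k(n-k)}{n^2}(U_{1:k}-U_{(k+1):n})$ gives deterministic functions of $k/n$. These converge uniformly in $t=k/n$, the error being $O(1/n)$ after the prefactors, once the quotients of binomials are replaced by their limits $(\tau^\star/t)^2$ and similar. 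Rewriting $\theta_{FG}=\rho_{FG}+(\theta_F+\theta_G)/2$ and simplifying should reproduce exactly $\Psi_1$ and $\Psi_2$; this is routine algebra, checked separately on $t<\tau^\star$ and $t\ge\tau^\star$.

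\emph{Linear part.} By the structural observation, $\sum_{1\le i<j\le k}h_{1ijn}(X_{in})$ equals $\sum_i \#\{j\in(i,k],j\le m\}\,h_{1i1n}(X_{in})+\sum_i\#\{j\in(i,k],j>m\}\,h_{1inn}(X_{in})$. Each count is an affine function of $i$ with coefficients bounded by $n$. Hence the linear part is a combination, with coefficients $O(n)$, of sums $\sum_{i\le l}h_{1i\cdot n}$ and $\sum_{i\le l} i\,h_{1i\cdot n}$ over a bounded number of index ranges, and likewise for $h_{1jin}(X_{jn})$. After multiplication by the prefactor $k/(n\binom{k}{2})\asymp 1/(nk)$, each contribution has the form $n^{-1}\sum_{i\le l}h_{1i\cdot n}$ or $n^{-1}\,l^{-1}\sum_{i\le l}i\,h_{1i\cdot n}$, up to bounded factors. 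These are $o_P(1)$ uniformly in $l$ by (\ref{eq:fixed3})--(\ref{eq:fixed4}) and Chebyshev. The terms for $U_{(k+1):n}$ in $D_n^L$ are handled the same way with the reversed weights $(n-i+1)$, using (\ref{eq:fixed5})--(\ref{eq:fixed6}).

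\emph{Degenerate part; main obstacle.} Let $S_k=\sum_{1\le i<j\le k}h_{2ijn}(X_{in},X_{jn})$. It enters as roughly $S_k/(nk)$. For $k\ge\delta n$ it is bounded by $\delta^{-1}n^{-2}\max_l|S_l|=o_P(1)$ by Assumption \ref{ass:fixed} (\ref{enum:fixed1}). The difficulty is small $k$, where $1/(nk)$ is not $O(n^{-2})$. I would use a dyadic blocking argument. For $k\le m$ all involved observations come from the stationary sequence $(X_i)$ and $h_{2ijn}=h_2$. Hence the same moment bound, applied at sample size $N$, gives $E\max_{l\le N}|S_l|^2\le CN^4$ for all $N$ (here $o(N^4)$ implies $O(N^4)$). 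For $k\in(N/2,N]$ with $N=2^{-r}n$ we have $|S_k|/(nk)\le 2\max_{l\le N}|S_l|/(nN)$. Chebyshev then gives, for $\epsilon>0$,
\[
P\big(\max_{k\le\delta n}|S_k|/(nk)>\epsilon\big)\le \epsilon^{-2}\sum_{r:\,2^{-r}\le 2\delta} \frac{4C (2^{-r}n)^4}{n^2(2^{-r}n)^2}\le C'\epsilon^{-2}\delta^2.
\]
This can be made small by choosing $\delta$ small, and then we let $n\to\infty$. The segment $(k+1):n$ for $k$ close to $n$ is treated symmetrically with the second term in Assumption \ref{ass:fixed} (\ref{enum:fixed1}). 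Its prefactor $(n-k)/\binom{n-k}{2}$ behaves the same way, and there the data are the stationary $Y_i$. Combining the three parts yields both uniform convergences.
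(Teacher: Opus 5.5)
Your proposal is correct and follows essentially the same route as the paper: a Hoeffding decomposition of the triangular array, and a regime-wise evaluation of $\theta_{ijn}$ and $h_{1ijn}$ that produces $\Psi_1,\Psi_2$ and reduces the linear part to the sums in Assumption \ref{ass:fixed} (\ref{enum:fixed2}), with Assumption \ref{ass:fixed} (\ref{enum:fixed1}) handling the degenerate part. Your dyadic blocking for the degenerate part at small $k$ is more careful than the paper, which simply asserts uniform negligibility there; the only fix needed is that the scale-$N$ moment bound for the pure-$F$ partial sums must come from row $\lceil N/\tau^\star\rceil$, where the first $N$ observations all precede the change, not from row $N$ itself.
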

As the limit distribution under the null-hypothesis is the same, the different limits under fixed alternatives give information about the differences in power for finite samples. A short calculation yields
\begin{equation*}
\Psi_2(t)-\Psi_1(t) =
\begin{cases}
  2 \displaystyle \frac{t^2 (1-\tau^\star)^2}{1-t} \rho_{FG} \ \  &\text{ for } t<\tau^\star, \\[1.5ex] 
  2 (1-t)(\tau^\star)^2 	\rho_{FG} &\text{ for } t\geq\tau^\star. \end{cases}
\end{equation*}
As $\Psi_2(t)-\Psi_1(t)$ always equals $\rho_{FG}$ times a positive factor, the sign of the eccentricity $\rho_{FG}$ is essential for the power comparison. We arrive at the following criterion:
\begin{enumerate}[(1)]
\item
If $\rho_{FG}$ has the same sign as $\theta_G - \theta_F$, the absolute value of $\Psi_1$ is larger than the absolute value of  $\Psi_2$,  
and we expect the \FF\ test to be more powerful. 
\item
If $\rho_{FG}$ and $\theta_G - \theta_F$ have opposite signs, the absolute value of $\Psi_1$ is \emph{smaller} than the absolute value of $\Psi_2$, and we expect the \FL\ test to be more powerful.
\item
If $\rho_{FG} = 0$, the two functions $\Psi_1$ and $\Psi_2$ are equal. We expect both tests to be equally powerful.
\end{enumerate}

Thus the sign of $\rho_{FG}$ relative to $\theta_G - \theta_F$ determines the power ranking of the tests. Consequently, there is no kernel $h$ for which one of the tests is strictly better than the other for all $F$ and $G$, since $\rho_{FG}$ is symmetric in $F$ and $G$ and the ranking of the tests can hence be reversed by switching $F$ and $G$. Moreover, both tests are equally powerful for all $F$ and $G$ if and only if $U_n$ is a strictly linear statistic. This is the content of the next lemma.

\begin{lemma} \label{lemma:linear}
Let $h: \R^p \times \R^p \to \R$ be a symmetric, measurable kernel function.
If $\rho_{FG} = 0$ for all $p$-dimensional distribution functions $F$ and $G$ for which $\theta_F$, $\theta_G$ and $\theta_{FG}$ exist, then 
 the U-statistic $U_n$ defined by (\ref{eq:u-statistic}) is a linear statistic in the sense of (\ref{eq:linear}) with 
$\xi(x) = h(x,x)$ for $x \in \R^p$ and hence $D^F_n = D^L_n$.
\end{lemma}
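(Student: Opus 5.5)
The plan is to use the hypothesis only for the simplest possible pair of distributions, namely point masses. That already forces $h$ to have the additive form $h(x,y) = \tfrac12\{h(x,x)+h(y,y)\}$. The linearity of $U_n$ and the identity $D^F_n = D^L_n$ then follow by direct computation.

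\emph{Step 1 (point masses).} Fix $x,y \in \R^p$. Let $F = \delta_x$ and $G = \delta_y$ be the Dirac distributions at $x$ and $y$. If $X,X' \sim F$ and $Y,Y' \sim G$ are independent, then $h(X,X') = h(x,x)$ almost surely, and similarly for the other two pairs. Hence
\[
\theta_F = h(x,x), \qquad \theta_G = h(y,y), \qquad \theta_{FG} = h(x,y).
\]
All three exist trivially, because the variables are degenerate and $h$ is real-valued. So the hypothesis $\rho_{FG}=0$ applies and gives
\[
h(x,y) = \tfrac12\big\{h(x,x) + h(y,y)\big\} = \tfrac12\big\{\xi(x)+\xi(y)\big\} \quad \text{for all } x,y \in \R^p,
\]
with $\xi(x) = h(x,x)$. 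This step is where the content of the lemma lies. The only point to check is that degenerate distributions are admissible in the statement, which they are, since it quantifies over all distribution functions for which the three parameters exist.

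\emph{Step 2 (linearity).} Substitute this form into (\ref{eq:u-statistic}). In $\sum_{1\le i<j\le n}\{\xi(X_i)+\xi(X_j)\}$, each index $i$ appears in exactly $n-1$ pairs. Therefore
\[
\sum_{1\le i<j\le n}\tfrac12\big\{\xi(X_i)+\xi(X_j)\big\} = \frac{n-1}{2}\sum_{i=1}^n \xi(X_i),
\]
and dividing by $\binom n2$ yields $U_n = n^{-1}\sum_{i=1}^n \xi(X_i)$, which is (\ref{eq:linear}). The same argument applied to any block gives $U_{k:l} = \bar\xi_{k:l}$, the average of $\xi(X_k),\ldots,\xi(X_l)$.

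\emph{Step 3 ($D^F_n = D^L_n$).} Write $\bar\xi_{1:n} = \tfrac kn \bar\xi_{1:k} + \tfrac{n-k}{n}\bar\xi_{(k+1):n}$. Then
\[
k\big(\bar\xi_{1:k}-\bar\xi_{1:n}\big) = \frac{k(n-k)}{n}\big(\bar\xi_{1:k}-\bar\xi_{(k+1):n}\big),
\]
which is exactly the identity of (\ref{eq:cusum1}) and (\ref{eq:cusum2}) already noted in the introduction. By Step 2 this is $D^F_n(k) = D^L_n(k)$ for every $k$ at which both blocks contain at least two observations. At the boundary indices one adopts whatever convention is used for $U_{k:l}$ with fewer than two observations; I would remark that the identity holds for all $k$ once $U_{k:k}$ is interpreted as $\xi(X_k)$.
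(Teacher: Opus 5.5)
Your proposal is correct and follows the paper's proof. The paper likewise applies $\rho_{FG}=0$ to point masses $F=\delta_x$, $G=\delta_y$ to get $h(x,y)=\tfrac12\{h(x,x)+h(y,y)\}$, then sums over pairs to obtain $U_n=n^{-1}\sum_i h(x_i,x_i)$. Your Step 3, which checks $D^F_n=D^L_n$ explicitly and notes the boundary-index convention, just spells out the ``hence'' that the paper leaves to the CUSUM identity from the introduction.
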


%
%
%
%
%
%
%
%
%
%

\subsection{Consistency}
\label{sec:consistency}

Based on Theorem \ref{th:fixed} we show that the change-point tests based on $D_n^F$ and $D_n^L$ are consistent. Under the null hypothesis, we have that
\[
	T_1(h) = \frac{1}{\sqrt{n}} \max_{1\le k \le n} \vert D^F_n(k) \vert = O_P(1), \qquad 
	T_2(h) = \frac{1}{\sqrt{n}} \max_{1\le k \le n} \vert D^L_n(k) \vert = O_P(1)
\] 
by Theorem \ref{th:nh}. 
If, under a fixed alternative, $\Psi_1(t)$ and $\Psi_2(t)$ are not 0 for a least one $t \in[0,1]$, the corresponding test statistics will converge to infinity, so that the tests are consistent. This is summarized in the following corollary.
\begin{corollary} \label{cor:consistency}
If the assumptions of Theorem \ref{th:fixed} and one of the two conditions
\begin{enumerate}[(1)]
\item
$\theta_F \neq \theta_G$,
\item
$\rho_{FG} \neq 0$, 
\end{enumerate}
hold, then
\[
	T_1(h) = \frac{1}{\sqrt{n}} \max_{1\le k \le n} \vert D^F_n(k) \vert \to \infty, \qquad
	T_2(h) = \frac{1}{\sqrt{n}} \max_{1\le k \le n} \vert D^L_n(k) \vert\to \infty
\]
in probability as $n \to \infty$.
\end{corollary}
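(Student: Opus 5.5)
Write $T_1(h) = \sqrt{n}\,\max_k |n^{-1}D_n^F(k)| \ge \sqrt{n}\,|n^{-1}D_n^F([nt_0])|$ for any fixed $t_0$, and similarly for $T_2(h)$. The plan therefore reduces the statement to two facts. First, there is a point $t_0\in[0,1]$ with $\Psi_1(t_0)\neq 0$, and a (possibly different) point with $\Psi_2\neq 0$. Second, by Theorem \ref{th:fixed}, $n^{-1}D_n^F([nt_0]) \to \Psi_1(t_0)$ in probability. Combining these, $T_1(h) \ge \sqrt{n}\,(|\Psi_1(t_0)| - o_P(1)) \to \infty$ in probability; explicitly, $P(T_1(h)\le M)\le P(|n^{-1}D_n^F([nt_0])-\Psi_1(t_0)|\ge |\Psi_1(t_0)|/2)\to 0$ for every $M$ once $\sqrt n|\Psi_1(t_0)|/2>M$. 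The same argument applies to $T_2(h)$ with $\Psi_2$.

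The only real work is showing that $\Psi_1$ and $\Psi_2$ do not vanish identically under either (1) or (2). Evaluating at the change point $t=\tau^\star$ (second branch, $t\ge\tau^\star$) gives
\[
\Psi_1(\tau^\star)=\tau^\star(1-\tau^\star)\bigl(\theta_F-\theta_G-2\tau^\star\rho_{FG}\bigr),\qquad \Psi_2(\tau^\star)=\tau^\star(1-\tau^\star)(\theta_F-\theta_G).
\]
These could vanish by cancellation, so I would not rely on a single point but instead use the piecewise structure.

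On $[0,\tau^\star)$ both functions are explicit: $\Psi_1(t)=t(1-\tau^\star)\bigl[(\theta_F-\theta_G)-2\tau^\star\rho_{FG}\bigr]$ is linear in $t$, and $\Psi_2(t)=t(1-\tau^\star)\bigl[(\theta_F-\theta_G)-2\frac{\tau^\star-t}{1-t}\rho_{FG}\bigr]$. On $[\tau^\star,1]$, $\Psi_1(t)$ contains the term $2\frac{t-\tau^\star}{t}\tau^\star\rho_{FG}$, which is not affine in $t$ when $\rho_{FG}\neq0$.

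Now suppose $\Psi_1\equiv0$. Vanishing on $[0,\tau^\star)$ forces $\theta_F-\theta_G=2\tau^\star\rho_{FG}$. Substituting this into the second branch gives $(1-t)2(\tau^\star)^2\rho_{FG}+2\frac{t-\tau^\star}{t}\tau^\star\rho_{FG}-2t\tau^\star(1-\tau^\star)\rho_{FG}$. Its $1/t$ part, $-2(\tau^\star)^2\rho_{FG}/t$, cannot be cancelled by the polynomial terms on an interval, so $\rho_{FG}=0$, and then $\theta_F=\theta_G$. This contradicts both (1) and (2).

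Similarly, suppose $\Psi_2\equiv0$. On $[0,\tau^\star)$ the bracket must vanish for all $t$, and since $(\tau^\star-t)/(1-t)$ is nonconstant, this requires $\rho_{FG}=0$ and $\theta_F=\theta_G$. Alternatively, one can combine the point $\Psi_2(\tau^\star)$ with this observation.

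This case check is the main obstacle, though it is elementary. The convergence step is immediate from the uniform convergence in Theorem \ref{th:fixed}, which in fact yields pointwise convergence at $[nt_0]/n$.
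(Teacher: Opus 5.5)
Your proposal is correct and follows essentially the paper's route. You reduce the claim via Theorem \ref{th:fixed} to showing $\Psi_1,\Psi_2\not\equiv 0$, and you rule out identical vanishing through the non-affine $1/t$ structure when $\rho_{FG}\neq 0$ (the paper phrases this as a non-constant derivative on $(\tau^\star,1]$) and through the linear part when $\rho_{FG}=0$. The only differences are that you treat $\Psi_2$ via its first branch and that you spell out the probability bound turning a nonzero limit value into $T_i(h)\to\infty$, a step the paper leaves implicit.
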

Corollary \ref{cor:consistency} is an affirmative result of a common property of the tests: Both are constructed for detecting changes in the parameter $\theta$, and, if this parameter differs before and after the change-point, both will detect that change asymptotically with power 1. However, Corollary \ref{cor:consistency} contains another important information: if a change in the series occurs, but the change is such that the parameter $\theta$ remains constant, both tests will detect this change asymptotically with power 1 \emph{if and only if} the eccentricity $\rho_{FG}$ is non-zero. Some implications of this result are illustrated in Section \ref{sec:var}.

%
%
%
%
%
%
%
%
%
%

\subsection{Estimation of change location}
\label{sec:location}

If the hypothesis is rejected, it is natural to estimate the change-point by
\be \label{eq:location_estimator}
\hat{\tau}_{1,n} =\operatorname*{argmax}_{t\in[0,1]}\left|D_n^F([nt])\right|
\quad \mbox{ or } \quad
\hat{\tau}_{2,n} =\operatorname*{argmax}_{t\in[0,1]}\left|[D_n^L([nt])\right|.
\ee
If $\Psi_1$ respectively $\Psi_2$ have their maximal absolute value at $\tau^\star$ and this maximal point is unique, 
consistency of these estimators follows by the argmax theorem \citep[e.g.][Corollary 3.2.3]{vandervaart:wellner:1996}.
The following proposition states technical conditions for $D_n^F$ and $D_n^L$.

\begin{proposition}\label{prop:max} 
\mbox{ \\}
\begin{enumerate}[(1)] 
\item If $\theta_F>\theta_G$ and
\begin{equation*}
\frac{\theta_G-\theta_F}{2}<\rho_{FG}<\frac{\theta_F-\theta_G}{2(\tau^\star-1+1/\tau^\star)},
\end{equation*}
then $|\Psi_1(t)|<|\Psi_1(\tau^\star)|$ for all  $t\in[0,1]$ with $t\neq \tau^\star$.
The same is true if $\theta_F<\theta_G$ and
\begin{equation*}
\frac{\theta_F-\theta_G}{2(\tau^\star-1+1/\tau^\star)}<\rho_{FG}<\frac{\theta_G-\theta_F}{2}.
\end{equation*}
\item If $\theta_F\neq\theta_G$ and  $|\rho_{FG}|\leq \frac{1}{2}\min\{\frac{\tau^\star}{1-\tau^\star},\frac{1-\tau^\star}{\tau^\star}\}|\theta_F-\theta_G|$, then for all  $t\in[0,1]$ with $t\neq \tau^\star$
\begin{equation*}
|\Psi_2(t)|<|\Psi_2(\tau^\star)|.
\end{equation*}
\end{enumerate}
\end{proposition}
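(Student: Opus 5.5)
Throughout write $\tau=\tau^\star$, $d=\theta_F-\theta_G$ and $\rho=\rho_{FG}$. The plan is a purely deterministic, elementary analysis of the explicit functions from Theorem \ref{th:fixed}.

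\emph{Reduction to $d>0$.} Replacing $h$ by $-h$ changes the signs of $d$, $\rho$, $\Psi_1$ and $\Psi_2$ simultaneously. This leaves $|\Psi_i|$ unchanged and maps the second set of conditions in (1) onto the first. So I will assume $d>0$ throughout.

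\emph{Part (1).} First I evaluate $\Psi_1(\tau)=\tau(1-\tau)(d-2\tau\rho)$. Since $\tau-1+1/\tau\ge\tau$, the upper bound on $\rho$ gives $\rho<d/(2\tau)$, hence $\Psi_1(\tau)>0$.

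On $[0,\tau)$ the function $\Psi_1(t)=(t/\tau)\Psi_1(\tau)$ is linear through the origin, so $|\Psi_1(t)|<\Psi_1(\tau)$ there.

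On $[\tau,1]$ I use the following facts:
\begin{itemize}
\item $\Psi_1(1)=0$;
\item $\Psi_1''(t)=-4\tau^2\rho/t^3$;
\item the right derivative at $\tau$ is $\Psi_1'(\tau+)=-\tau d+2\rho(1-\tau+\tau^2)$, which is negative exactly when $\rho<d/(2(\tau-1+1/\tau))$. This explains the upper bound in the hypothesis.
\end{itemize}
If $\rho\ge0$, $\Psi_1$ is concave on $[\tau,1]$ with negative initial slope, so it is strictly decreasing from $\Psi_1(\tau)$ to $0$. This gives $0\le\Psi_1(t)<\Psi_1(\tau)$ for $t\in(\tau,1]$.

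If $\rho<0$, $\Psi_1$ is convex on $[\tau,1]$, so it lies strictly below the chord from $(\tau,\Psi_1(\tau))$ to $(1,0)$. That gives the upper bound $\Psi_1(t)<\Psi_1(\tau)$.

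It remains to show $\Psi_1(t)>-\Psi_1(\tau)$ when $\rho<0$. The expression $\Psi_1(t)+\Psi_1(\tau)$ is affine in $\rho$ for fixed $t$, and it is positive at $\rho=0$. So it suffices to verify that it is nonnegative at $\rho=-d/2$, and then use strictness of $\rho>-d/2$. At $\rho=-d/2$ it becomes $d\tau$ times an explicit rational function of $t$ and $\tau$. Its nonnegativity on $[\tau,1]$ I expect to follow after multiplying by $t$ and checking an elementary polynomial inequality, e.g.\ via its values at $t=\tau$ and $t=1$ and convexity.

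\emph{Part (2).} Here $\Psi_2(\tau)=\tau(1-\tau)d>0$, since the $\rho$-terms vanish at $t=\tau$.

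For $t<\tau$ write
\[
\Psi_2(t)=t(1-\tau)\Big[d-2\rho\,\tfrac{\tau-t}{1-t}\Big].
\]
Because $0\le(\tau-t)/(1-t)\le\tau$, the hypothesis $|\rho|\le\frac{1-\tau}{2\tau}d$ gives $2|\rho|(\tau-t)/(1-t)\le(1-\tau)d$. Hence the bracket lies in $[\tau d,(2-\tau)d]$, so $\Psi_2(t)>0$. The remaining check is that $t\bigl(d-2\rho(\tau-t)/(1-t)\bigr)<\tau d$ for $t<\tau$. This is an elementary inequality in $t$, and I will settle it by clearing the denominator $1-t$.

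For $t>\tau$ write
\[
\Psi_2(t)=(1-t)\tau d+2\tau\rho(t-\tau)\Big(\tfrac1t-1\Big)=(1-t)\tau\Big[d+2\rho\,\tfrac{t-\tau}{t}\Big].
\]
The factor $(t-\tau)/t$ lies in $[0,1-\tau]$. With $|\rho|\le\frac{\tau}{2(1-\tau)}d$ the bracket therefore lies in $[(1-\tau)d,(1+\tau)d]$, so $\Psi_2(t)\ge0$. The upper comparison $(1-t)\bigl(d+2\rho(t-\tau)/t\bigr)<(1-\tau)d$ again reduces to a one-variable polynomial inequality after multiplying by $t$.

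\emph{Main obstacle.} I expect the hardest step to be the lower bound $\Psi_1(t)>-\Psi_1(\tau)$ for negative $\rho$ in part (1), i.e.\ the extremal check at $\rho=-d/2$. The two upper comparisons in part (2) are routine by comparison.
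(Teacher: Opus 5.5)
Your proposal is correct. For part (2), and for $t<\tau^\star$ in part (1), it is essentially the paper's argument written in factored form. The paper also bounds the $\rho$-terms via $\frac{t}{1-t}<\frac{\tau^\star}{1-\tau^\star}$ (for $t<\tau^\star$) and $\frac{1-t}{t}<\frac{1-\tau^\star}{\tau^\star}$ (for $t>\tau^\star$). Your two deferred checks reduce to $(\tau^\star-t)\bigl(d+2\rho\frac{t}{1-t}\bigr)>0$ and $(t-\tau^\star)\bigl(d-2\rho\frac{1-t}{t}\bigr)>0$, and both follow directly from the hypothesis.

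The genuine difference is part (1) on $[\tau^\star,1]$. The paper stays purely algebraic. It factors $\Psi_1(\tau^\star)-\Psi_1(t)=(t-\tau^\star)\tau^\star\bigl[d-2\rho\bigl(\frac1t+\tau^\star-1\bigr)\bigr]$ and then obtains $\Psi_1(t)\ge0$ by substituting $\rho=-d/2$. You instead use the shape of $\Psi_1$: $\Psi_1(1)=0$, the sign of $\Psi_1''$, and the sign of $\Psi_1'(\tau^\star+)$. This has two advantages.

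First, it explains the hypothesis: the upper bound on $\rho$ is exactly the condition $\Psi_1'(\tau^\star+)<0$.

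Second, it handles the lower bound more carefully than the paper does. Substituting $\rho=-d/2$ gives a lower bound only where the $\rho$-coefficient $2\tau^\star\bigl(\frac{t-\tau^\star}{t}-t(1-\tau^\star)\bigr)$ is positive. That coefficient is negative near $t=\tau^\star$. For example, $\tau^\star=\frac12$, $d=1$, $\rho=0.3$, $t=0.6$ gives $\Psi_1(t)=0.16$, which is below the paper's displayed bound $\approx 0.27$. Your split avoids this: for $\rho\ge0$, concavity gives $\Psi_1\ge0$; for $\rho<0$, you use convexity plus affine interpolation in $\rho$.

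The step you flag as the main obstacle is in fact one line. At $\rho=-d/2$ one has $\Psi_1(t)=(\tau^\star)^2d\bigl(\frac1t-t\bigr)\ge0$ on $(0,1]$ and $\Psi_1(\tau^\star)=\tau^\star\bigl(1-(\tau^\star)^2\bigr)d>0$. If you do multiply by $t$, the resulting quadratic $t+\tau^\star-(\tau^\star)^2t-\tau^\star t^2$ is concave, not convex. Concavity is what makes checking the endpoints sufficient.
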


Similar to the previous two results, the implications of Proposition \ref{prop:max} are threefold:
First, in a wide range of realistic application cases, both tests have the desired property and behave alike (consistency of the argmax location estimator). Second, there is no general superiority of one test over the other: neither set of conditions in parts (1) and (2) of Proposition \ref{prop:max} implies the other. And lastly, the decisive criterion is again the eccentricity $\rho_{FG}$ (it must not be too large in absolute value), and one can hence easily construct alternatives where the argmax estimator is consistent and alternatives where it is not, see Section \ref{sec:var} below and Example \ref{ex:gmd:location} in the supplementary material.

%
%
%
%
%
%
%
%
%
%

\subsection{Asymptotic distribution under fixed alternatives}
\label{sec:asym.dist}

From Theorem \ref{th:fixed} and Proposition \ref{prop:max} we know that, under fixed alternatives, the $(1/\sqrt{n})$-scaled test statistics
$n^{-1/2} T_1(h)$ and $n^{-1/2} T_2(h)$ converge in probability to different values $\Psi_1(\tau^\star)$ and $\Psi_2(\tau^\star)$, respectively.
To complete the asymptotic analysis of the \FF\ and the \FL\ approach, we study the asymptotic distribution of the two test statistics within the fixed-alternative set-up. We demonstrate that they differ not only in mean, but we also obtain asymptotic variances that are generally different. This becomes apparent for univariate and independent data sequences and we restrict our derivations to this scenario in the following.

%
Using the notation introduced at the beginning of Section \ref{sec:fixed} define further
\[
	h_F(x) =\int h(x,y)\, dF(y), \qquad h_G(x) = \int h(x,y)\,  dG(y).
\] 
%
%
%

\begin{theorem}\label{th:limitfixed} 
Let the triangular array $(X_{in})_{1\leq i\leq n, n\in\N}$ be row-wise independent, and let 
the assumptions of Theorem \ref{th:fixed} and Proposition \ref{prop:max} be satisfied. Let further the kernel $h:\R^2\rightarrow \R$ be such that $\theta_F\neq \theta_G$ and the integrals
\[
	\iint h^2(x,y)dF(x)dF(Y), \quad
	\iint h^2(x,y)dF(x)dG(Y), \quad
	\iint h^2(x,y)dG(x)dG(Y)
\]
are all finite.
Then
\[
	\big(T_1(h)-\sqrt{n}|\Psi_1(\tau^\star)|\big) \rightarrow Z_1, \quad 
	\big(T_2(h)-\sqrt{n}|\Psi_2(\tau^\star)|\big) \rightarrow Z_2
\]
in distribution as $n \to \infty$, where $Z_1$, $Z_2$ are centered normal random variables with variances
\begin{align*}
\Var[Z_1]&=4\tau^\star\Var[h_F(X)-\tau^\star h_G(X)]+4\tau^{\star2}(1-\tau^\star)\Var[h_G(Y)],\\
\Var[Z_2]&=4\tau^\star(1-\tau^\star)^2\Var[h_F(X)]+4\tau^{\star2}(1-\tau^\star)\Var[h_G(Y)]
\end{align*}
for a random  variable $X$ with distribution function $F$ and a random variable $Y$ with distribution function $G$.
\end{theorem}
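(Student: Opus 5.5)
Our plan is to reduce the maximum to a neighborhood of $\tau^\star$ and then carry out a first-order (Hoeffding-type) expansion of $D_n^F(k)$ and $D_n^L(k)$ at $k=k^\star=[n\tau^\star]$.

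\emph{Localization.} By Proposition \ref{prop:max}, $|\Psi_j|$ has a unique maximum at $\tau^\star$. Inspecting the formulas in Theorem \ref{th:fixed}, $\Psi_j$ has a nonzero one-sided slope on each side of $\tau^\star$; this uses $\theta_F\neq\theta_G$ together with the eccentricity bounds. Hence $|\Psi_j(t)|\le|\Psi_j(\tau^\star)|-c|t-\tau^\star|$ near $\tau^\star$, and $|\Psi_j|$ is bounded away from $|\Psi_j(\tau^\star)|$ elsewhere.

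Write $n^{-1}D_n^{\cdot}([nt])=\Psi_j(t)+n^{-1/2}R_n(t)$. The first step is to show that the fluctuation process $R_n$ is tight in $D[0,1]$, indeed asymptotically equicontinuous. With independent rows, the linear part is a weighted partial-sum process with square-integrable summands, so Donsker's theorem applies. The degenerate part is $O_P(n^{-1/2})$ uniformly, via Assumption \ref{ass:fixed}(1) with the sharper $O(n^2)$ bound available for independent data.

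Given tightness, $n^{-1/2}\max_k|D_n(k)| = \sqrt n|\Psi_j(\tau^\star)| + \sup_t\{\pm R_n(t)-\sqrt n(|\Psi_j(\tau^\star)|-|\Psi_j(t)|)\}$. The drift penalty forces the supremum to be attained at $t$ with $|t-\tau^\star|=O_P(n^{-1/2})$, and equicontinuity then gives $T_j(h)-\sqrt n|\Psi_j(\tau^\star)|=\operatorname{sgn}(\Psi_j(\tau^\star))R_n(\tau^\star)+o_P(1)$. One must also check that $\sqrt n$ times the deterministic error $E D_n(k)/n-\Psi_j(k/n)$ is $o(1)$ uniformly near $k^\star$, since it is $O(1/n)$. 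The sign is irrelevant for the limiting variance.

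\emph{Expansion at $k^\star$.} With $m=n-k^\star$, decompose
\[
U_{1:n}=\tbinom n2^{-1}\Big[\sum_{i<j\le k^\star}h(X_i,X_j)+\sum_{k^\star<i<j}h(Y_i,Y_j)+\sum_{i\le k^\star<j}h(X_i,Y_j)\Big].
\]
Project each term linearly. The two-sample part contributes $m\sum_i(h_G(X_i)-\theta_{FG})+k^\star\sum_j(h_F(Y_j)-\theta_{FG})$, and the one-sample parts give $2(k^\star-1)\sum(h_F(X_i)-\theta_F)$ and the analogous $G$-term. The remainders are degenerate U-statistics, hence $O_P(1)$ after normalization, and negligible.

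For the \FF\ statistic, $D_n^F(k^\star)=k^\star(U_{1:k^\star}-U_{1:n})$, so the linear term in $X_i$ is
\[
\frac{k^\star}{\sqrt n}\Big[\frac2{k^\star}-\frac{2k^\star}{n^2}\Big]\sum (h_F(X_i)-\theta_F)-\frac{k^\star}{\sqrt n}\frac{2m}{n^2}\sum(h_G(X_i)-\theta_{FG}),
\]
which is $\approx 2n^{-1/2}\sum[(1-\tau^{\star2})h_F(X_i)-\tau^\star(1-\tau^\star)h_G(X_i)]$ up to centering. The linear term in $Y_j$ is $-2\tau^\star n^{-1/2}\sum[(1-\tau^\star)h_G(Y_j)+\tau^\star h_F(Y_j)]$. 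The statement, however, has $\Var[h_F(X)-\tau^\star h_G(X)]$ and only $h_G(Y)$. So I will carefully redo this bookkeeping, in particular whether $\Var[h_F(Y)]$-type terms cancel, perhaps via the convention that the integrals are taken against the true $F,G$. I expect this bookkeeping to be the main computational obstacle, and I will reconcile the coefficients with the formula stated in Theorem \ref{th:limitfixed}.

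Once the coefficients are fixed, the $X$-block and the $Y$-block are independent sums of i.i.d. terms. The Lindeberg CLT, using the finite second moments assumed in the statement, then yields normal limits whose variances add. Multiplying by the block lengths $\tau^\star$ and $1-\tau^\star$ gives the stated forms.

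For \FL, $D_n^L(k^\star)=\frac{k^\star m}{n}(U_{1:k^\star}-U_{(k^\star+1):n})$ involves only within-block U-statistics. Its linear part is exactly $2(1-\tau^\star)n^{-1/2}\sum_{i\le k^\star}h_1^F(X_i)-2\tau^\star n^{-1/2}\sum_{j>k^\star}h_1^G(Y_j)$, which gives $\Var[Z_2]=4\tau^\star(1-\tau^\star)^2\Var h_F(X)+4\tau^{\star2}(1-\tau^\star)\Var h_G(Y)$ immediately.

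The main obstacle is the localization/argmax step, specifically establishing the required equicontinuity of $R_n$ uniformly on shrinking neighborhoods. After that, the \FF\ coefficient bookkeeping is the remaining difficulty.
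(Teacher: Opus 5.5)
\emph{Overall structure.} Your plan follows the paper's proof. The paper proves a functional limit for $n^{-1/2}(D_n^F([nt])-n\Psi_1(t))$ (Proposition \ref{prop:procfixed}). It then bounds $T_1(h)-\sqrt{n}|\Psi_1(\tau^\star)|$ below by the value at $[n\tau^\star]$ and above by $\sup_{|t-\tau^\star|\le\epsilon}$ of the centered process. Your \emph{first-vs-last} computation already gives $\Var[Z_2]$ as stated.

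\emph{The step that fails.} It is the one you deferred: you cannot ``reconcile'' your \emph{first-vs-full} coefficients with the displayed $\Var[Z_1]$. Your coefficients are correct, and the displayed formula is not. Your linear part at $k^\star$ is exactly $G_1(\tau^\star)$ from Proposition \ref{prop:procfixed}, whose variance is
\[
4\tau^\star(1-\tau^\star)^2\Var\big[(1+\tau^\star)h_F(X)-\tau^\star h_G(X)\big]+4\tau^{\star 2}(1-\tau^\star)\Var\big[\tau^\star h_F(Y)+(1-\tau^\star)h_G(Y)\big].
\]
The $h_F(Y)$ contribution does not cancel. This expression coincides with the stated one only when $h_F-h_G$ is constant, e.g.\ for linear kernels or $F=G$.

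\emph{Counterexample.} Take $h(x,y)=xy$, $F=N(2,1)$, $G=N(1,1)$ and $\tau^\star=1/2$.
\begin{enumerate}[(1)]
\item Then $\theta_F=4$, $\theta_G=1$ and $\rho_{FG}=-1/2$, so both parts of Proposition \ref{prop:max} apply and all moment conditions hold.
\item Linearize $U_{1:k}=(S_k^2-\sum_{i\le k}X_{in}^2)/(k(k-1))$ with $S_k=\sum_{i\le k}X_{in}$. The linear part of $n^{-1/2}D_n^F(k^\star)$ is $n^{-1/2}\{\tfrac{5}{2}\sum_{i\le k^\star}(X_i-2)-\tfrac{3}{2}\sum_{j>k^\star}(Y_j-1)\}$.
\item This has variance $17/4$, whereas the displayed $\Var[Z_1]$ equals $5$.
\end{enumerate}
The paper's proof only asserts that ``a short calculation gives the variance as stated.'' What your plan can actually establish is the $Z_2$ claim as stated and the $Z_1$ claim with the corrected variance above.

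\emph{A smaller point on localization.} You do not need a nonzero one-sided slope of $\Psi_j$ at $\tau^\star$, and it can fail. Proposition \ref{prop:max}(2) allows, for $\theta_F>\theta_G$ and $\tau^\star\ge 1/2$, the boundary value $\rho_{FG}=-\frac{1-\tau^\star}{2\tau^\star}(\theta_F-\theta_G)$. There $\Psi_2'(\tau^\star-)=(1-\tau^\star)(\theta_F-\theta_G)+2\tau^\star\rho_{FG}=0$, so the drift is only quadratic on the left. The paper's sandwich needs no rate. Since $|\Psi_j(\hat\tau_{j,n})|\le|\Psi_j(\tau^\star)|$ and $\hat\tau_{j,n}\to\tau^\star$, the upper bound reduces to $\sup_{|t-\tau^\star|\le\epsilon}n^{-1/2}(D_n([nt])-n\Psi_j(t))$. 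Continuity of the Gaussian limit then finishes the argument.
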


If we have a strictly linear U-statistic, i.e., $h(x,y)=\frac{1}{2}(h(x,x)+h(y,y))$, both test statistics are equal, and should have the same limit variance. 
This is indeed the case: In this situation, $h_F(x)=\frac{1}{2}h(x,x)-E[\frac{1}{2}h(X,X)]$ and $h_G(x)=\frac{1}{2}h(x,x)-E[\frac{1}{2}h(Y,Y)]$, so $\Var[h_F(X)-\tau^\star h_G(X)]=\Var[(1-\tau^\star)h_F(X)]=(1-\tau^\star)^2\Var[h_F(X)]$ and consequently $\Var[Z_1]=\Var[Z_2]$. 
This argument does not apply if the U-statistic is not linear, and the asymptotic variances are generally different.

%
%
%
%
%
%
%
%
%
%

\section{Examples}
\label{sec:examples}

We examine four well-known U-statistics in detail: Gini's mean difference, the sample variance, the sample covariance, and Kendall's tau. 
The former two can be used to detect changes in the scale of univariate series. So in Sections \ref{sec:gmd} and \ref{sec:var} we assume the data series to be univariate with distributions $F$ and $G$ before and after the potential change-point. In case $F$ and $G$ possess Lebesgue densities $f$ and $g$, respectively, we have the following representation of the eccentricity $\rho_{FG}$, which will be employed in Proposition \ref{prop:gmd}.
\be \label{eq:rhoFG:univariate}
  \rho_{FG} = - \frac{1}{2} \int\int h(x,y) \{f(x)-g(x)\}\{f(y)-g(y)\} dx dy.
\ee

%
%
%
%
%
%
%
%
%
%

\subsection{Gini's mean difference}
\label{sec:gmd}

We consider the kernel $h(x,y) = |x - y|$, leading to Gini's mean difference
\[
 U_{1:n}=\frac{2}{n(n-1)}\sum_{1\leq i < j\leq n}|X_i-X_j|
\] 
with
	$\theta_F = E|X-X'|$, 
	$\theta_G = E|Y-Y'|$,  and $\theta_{FG} = E|X-Y|$, 
for $X, X', Y, Y'$ being independent with $X, X' \sim F$, $Y, Y' \sim G$. For the special case of normal distributions, specifically 
$F = N(0,1)$ and $G = N(0,\sigma^2)$ for some $\sigma > 0$, a quick calculation yields
\[
	\theta_F = \frac{2}{\sqrt{\pi}},\quad
	\theta_G = \frac{2}{\sqrt{\pi}}\sigma, \quad 
	\theta_{FG} = \frac{2}{\sqrt{\pi}}\sqrt{\frac{1}{2}(1+\sigma^2)}
\]
and hence $\rho_{FG} = \theta_{FG} - (\theta_F+\theta_G)/2 \ge 0$. We find the eccentricity is always positive for $\sigma \neq 1$. This holds indeed generally.

\begin{proposition}\label{prop:gmd}
Let the univariate distributions $F$ and $G$ with Lebesgue densities $f$ and $g$, respectively, have finite first moments.
Then, for the kernel $h(x,y) = |x - y|$, we have $\rho_{FG} \ge 0$. 
\end{proposition}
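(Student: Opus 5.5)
We plan to use the representation (\ref{eq:rhoFG:univariate}) of the eccentricity,
\[
  \rho_{FG} = - \frac{1}{2} \iint |x-y|\, \mu(dx)\, \mu(dy), \qquad \mu(dx) = \{f(x)-g(x)\}\,dx,
\]
where $\mu$ is a finite signed measure with total mass $\mu(\R)=0$ and finite first absolute moment, since $F$ and $G$ have finite first moments. The claim $\rho_{FG}\ge 0$ is then equivalent to the statement that $-|x-y|$ is a conditionally positive semidefinite kernel, i.e.\ $\iint |x-y|\,\mu(dx)\mu(dy) \le 0$ for every signed measure of total mass zero. We would prove this directly by expressing the integral through distribution functions, which avoids Fourier arguments.

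The key identity is $|x-y| = \int_{-\infty}^{\infty} \big( \mathds{1}\{x\le s<y\} + \mathds{1}\{y \le s < x\} \big)\, ds$. Substituting this and using Fubini (justified by the finite first moments, since $\iint |x-y|\,|\mu|(dx)|\mu|(dy) \le 2\,|\mu|(\R)\int |x|\,|\mu|(dx) < \infty$), we obtain
\[
  \iint |x-y|\,\mu(dx)\mu(dy) = 2\int_{-\infty}^\infty M(s)\,\{\mu(\R) - M(s)\}\, ds ,
\]
where $M(s) = \mu((-\infty,s]) = F(s)-G(s)$. Since $\mu(\R)=0$, the right-hand side equals $-2\int (F(s)-G(s))^2\,ds \le 0$. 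This yields the closed form $\rho_{FG} = \int_{-\infty}^{\infty} (F(s)-G(s))^2\, ds \ge 0$, which is the energy-distance (Cramér) representation.

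The main obstacle is the Fubini step. The double integral of $|x-y|$ against the product of the positive parts of $\mu$ taken separately is finite, but after expanding we need $\int M(s)\,\mu((s,\infty))\,ds$ to be absolutely integrable. This holds because $|M(s)| \le F(s)+G(s)$ and $|\mu((s,\infty))| \le (1-F(s))+(1-G(s))$, and integrability of $\min\big(F(s)+G(s),\,2-F(s)-G(s)\big)$ over $\R$ follows from the finite first moments. It is safest to carry out the computation for the four nonnegative pieces $f\otimes f$, $g\otimes g$, $f\otimes g$ and $g\otimes f$ separately. Each of them gives, for instance, $E|X-Y| = \int \{F(s)(1-G(s)) + G(s)(1-F(s))\}\,ds$, and each is finite. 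Combining them gives $\rho_{FG} = \int \{F(1-G)+G(1-F) - F(1-F) - G(1-G)\}\,ds = \int (F-G)^2\,ds$. Note that the existence of densities is not actually needed for this argument.
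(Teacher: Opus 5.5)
Your proof is correct, but it takes a different route from the paper's.

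\textbf{The paper's route.} The paper argues through positive definiteness. The Fej\'er-type density $\phi_a(x)=\frac{1}{\pi}\frac{1-\cos(ax)}{ax^2}$ has the triangle function $(1-|t|/a)\mathbf{1}_{[-a,a]}(t)$ as its characteristic function, so by Bochner's theorem the triangle function is positive definite. Applied to the truncated signed density $k=(f-g)\mathbf{1}_{\{|x|\le a/2\}}$, this gives $\iint |x-y|\,k(x)k(y)\,dx\,dy\le a\bigl(\int k\bigr)^2$. The right-hand side tends to $0$ as $a\to\infty$ by Markov's inequality, and the truncation is removed by dominated convergence.

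\textbf{Your route.} You compute $\rho_{FG}$ exactly. You apply $E|X-Y|=\int\{F(1-G)+G(1-F)\}\,ds$ to the pairs $(F,F)$, $(G,G)$, $(F,G)$ and obtain the closed form $\rho_{FG}=\int (F(s)-G(s))^2\,ds$.

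\textbf{What each approach buys.} Your argument involves no Fourier analysis, truncation or limit passage, and it drops the density assumption entirely. It also proves more than the statement: by right-continuity of $F-G$, $\rho_{FG}>0$ whenever $F\neq G$. Combined with Corollary~\ref{cor:consistency}, and under the technical conditions of Theorem~\ref{th:fixed}, both Gini-mean-difference tests are therefore consistent against every change in distribution with finite first moments. This goes beyond location changes such as Example~\ref{ex:gmd:change-in-mean}, which the paper only says "can be extended to a suitable class". The paper's positive-definiteness viewpoint is the one that carries over more naturally to other translation kernels $\psi(x-y)$ with $-\psi$ conditionally positive definite.

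\textbf{A simplification.} Your concern about the Fubini step is unnecessary in the form you finally use. Each of $\theta_F$, $\theta_G$, $\theta_{FG}$ is, by Tonelli, a finite integral of a nonnegative function, finite because $E|X-Y|\le E|X|+E|Y|$. You can therefore subtract them directly, with no signed-measure Fubini argument.
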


Consequently, the power ranking of both change-point tests is solely determined by the order of $\theta_F$ and $\theta_G$. 
Following our criterion from Section \ref{sec:fixed}, the \FF\ test is more powerful than the \FL\ test if $\theta_G - \theta_F > 0$, i.e., if the scale of the series changes from a smaller value to a larger one. The \FF\ is less powerful than the \FL\ if the scale decreases after the change. This is regardless of where the change occurs and of any other aspect of $F$ and $G$ such as, e.g., a simultaneous change in location.

Also, we find in the normal examples that $\theta_{FG}$ is always between $\theta_F$ and $\theta_G$. This is also generally true for median-centered sequences. 
\begin{proposition} \label{prop:gmd:location}
Let $h(x,y) = |x-y|$ and $X \sim F$ with $E|X|< \infty$ and $F(0) = 1/2$, i.e, $X$ is median-centered. Furthermore, for $c>0$, we denote by $G = G_c$ the distribution function of $cX$. Then $\theta_{FG}\in[\min\{\theta_F,\theta_G\},\max\{\theta_F,\theta_G\}]$
\end{proposition}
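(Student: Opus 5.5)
The plan is to reduce everything to monotonicity of a single convex function of the scale parameter. Let $X'$ be an independent copy of $X$ and define
\[
	\varphi(c) = E|X - cX'|, \qquad c \ge 0 .
\]
Then $\theta_F = \varphi(1)$, $\theta_G = E|cX - cX'| = c\,\varphi(1)$ and $\theta_{FG} = E|X - cX'| = \varphi(c)$. All of these are finite because $E|X|<\infty$.

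The first step is a scaling identity. Since $X$ and $X'$ are i.i.d., we may exchange them, so $\varphi(c) = E|cX - X'| = c\,E|X - X'/c| = c\,\varphi(1/c)$ for $c>0$. For $c>1$ the claim $\theta_F \le \theta_{FG} \le \theta_G$ therefore reads
\[
	\varphi(1) \le \varphi(c) \quad\text{and}\quad \varphi(1/c) \le \varphi(1).
\]
For $c<1$ the claim is $\theta_G \le \theta_{FG} \le \theta_F$, which reads $\varphi(1/c) \ge \varphi(1)$ and $\varphi(c) \le \varphi(1)$. The case $c=1$ is trivial. So in every case it suffices to show that $\varphi$ is nondecreasing on $[0,\infty)$.

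The second step is to obtain monotonicity from convexity. For fixed $x,y$ the map $c\mapsto |x-cy|$ is convex, and hence $\varphi$ is convex on $[0,\infty)$. A convex function on $[0,\infty)$ is nondecreasing as soon as its right derivative at $0$ is nonnegative. The right derivative of $c\mapsto|x-cy|$ at $c=0$ equals $-y\,\mathrm{sign}(x)$ if $x\neq0$ and $|y|$ if $x=0$. The difference quotients are bounded by $|y|$, so dominated convergence (using $E|X'|<\infty$) and independence give
\[
	\varphi'(0+) = -E[X']\,E[\mathrm{sign}(X)] + P(X=0)\,E|X'|.
\]

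The third step, which I expect to be the only real point requiring care, is the sign of $\varphi'(0+)$. This is exactly where median-centering enters. From $F(0)=1/2$ we get $P(X>0)=1/2$ and $P(X<0)=1/2-P(X=0)$, hence $E[\mathrm{sign}(X)] = P(X=0)$. Therefore
\[
	\varphi'(0+) = P(X=0)\,\bigl(E|X| - E[X]\bigr) \ge 0
\]
(and $\varphi'(0+)=0$ if $F$ has no atom at $0$). Consequently $\varphi$ is nondecreasing on $[0,\infty)$. By the first step, this yields $\theta_{FG}\in[\min\{\theta_F,\theta_G\},\max\{\theta_F,\theta_G\}]$.
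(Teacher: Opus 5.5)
Your proof is correct, and it takes a somewhat different route from the paper's.

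\emph{The paper's route.} It works with the one-variable average $h_1(y)=\int|x-y|\,dF(x)$. This function is convex with its minimum at the median $0$, so it is nonincreasing on $(-\infty,0]$ and nondecreasing on $[0,\infty)$. That gives the pointwise inequality $h_1(y)\le h_1(cy)$ for $c>1$, and integrating against $F$ yields $\theta_F\le\theta_{FG}$. The upper bound $\theta_{FG}\le\theta_G$ follows by rerunning the argument with $G$ (also median-centered) in place of $F$. The case $c<1$ is handled by swapping the roles of $F$ and $G$.

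\emph{Your route.} You place the convexity in the scale parameter of the fully averaged function $\varphi(c)=E|X-cX'|$. You get $\varphi'(0+)\ge 0$ from dominated convergence and the computation $E[\mathrm{sign}(X)]=P(X=0)$. The scaling identity $\varphi(c)=c\,\varphi(1/c)$ then reduces all four inequalities (both bounds, for both $c>1$ and $c<1$) to the single statement that $\varphi$ is nondecreasing.

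\emph{Comparison.} The paper's argument avoids any limit interchange and rests only on the standard fact that a median minimizes $E|X-y|$. Yours is more uniform: one monotonicity statement handles everything, with no separate pass for $G$ and no role swap. It also shows explicitly where $F(0)=1/2$ enters, including the effect of an atom at $0$. In fact your $\varphi'(0+)$ is the $F$-average over $y$ of the right derivatives of $c\mapsto h_1(cy)$ at $c=0$. Their nonnegativity is exactly the paper's statement that $0$ minimizes $h_1$, so the two proofs share the same core, packaged differently.
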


\paragraph*{Visualization.} Figure \ref{fig:gini} illustrates Theorem \ref{th:fixed} for Gini's mean difference. The set-up is similar to the simulation example in the Introduction. We consider a series of independent observations stemming from $F = N(0,1)$ before the change at $\tau^\star = 1/3$ and $G = N(0,4)$ afterwards. 
\begin{figure}[t]%
\centering
\includegraphics[width=\columnwidth]{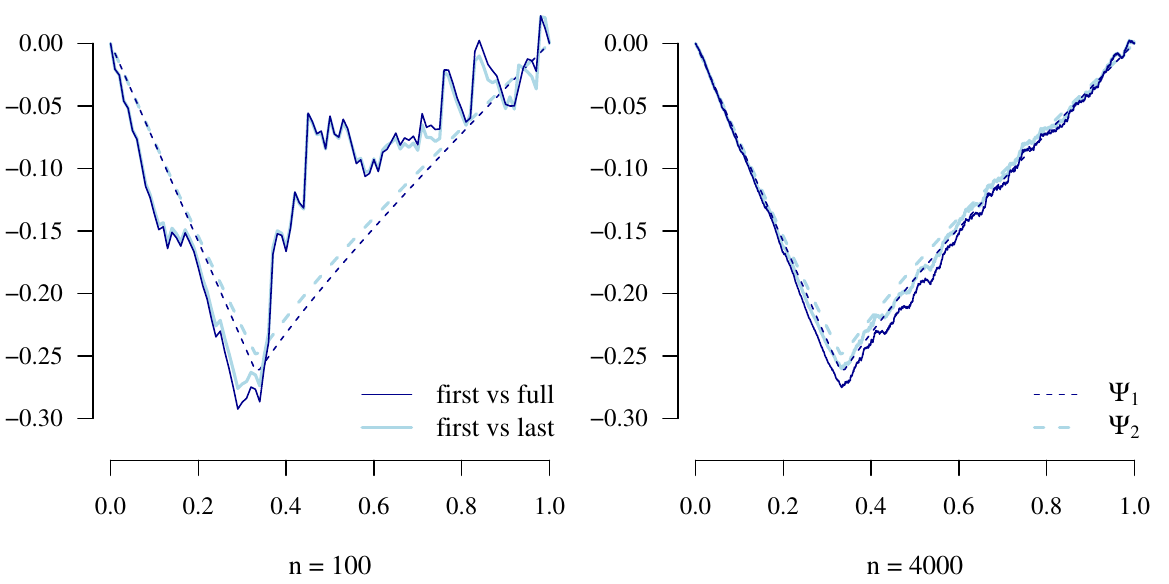}
\caption{Visualization of Theorem \ref{th:fixed} for Gini's mean difference: A change occurs at $\tau^\star = 1/3$ from 
$F = N(0,1)$ to $G = N(0,4)$. 
The limit functions $\Psi_1$ (\FF\ test; dark blue) and $\Psi_2$ (\FL\ test; light blue) as given in Theorem \ref{th:fixed} along with corresponding trajectories of
$\left(\frac{1}{n}D_{n}^F(k)\right)_{1\le k \le n}$ and 
$\left(\frac{1}{n}D_{n}^L(k)\right)_{1\le k \le n}$ for a sequence of $n$ independent observations with $n = 100$ (left panel) and
$n = 4000$ (right panel).
}
\label{fig:gini}%
\end{figure}
The dashed lines in both panels depict the limit functions $\Psi_1$ (\FF\ test; dark blue) and $\Psi_2$ (\FL\ test; light blue). The jagged lines show trajectories of the corresponding processes
$\left(\frac{1}{n}D_{n}^F(k)\right)_{1\le k \le n}$ and 
$\left(\frac{1}{n}D_{n}^L(k)\right)_{1\le k \le n}$ based on a sequence of $n$ independent observations with $n = 100$ (left panel) and
$n = 4000$ (right panel).
Figure \ref{fig:gini} visualizes the converges of the processes 
to their respective deterministic limit as stated by Theorem \ref{th:fixed}. 
The difference between the trajectories can be seen to be essentially captured by the difference between the respective limit processes.
%
%
%
%
%
%
\paragraph*{Simulation results.} We consider independent Gaussian processes only. The sample sizes are $n = 63, 250, 1000, 4000$ with potential change-point locations at $\tau^\star =  1/4, 1/2$ and $3/4$. The height of the change is sample-size dependent.
The three scenarios are as follows:
\begin{enumerate}[(1)]
\item Null hypothesis: No change occurs. The data are standard normal iid sequences.
\item Alternative 1: Before the change-point, the observations have standard deviation 1 and after the change $\sigma_n = 1 + 3/\sqrt{n}$, e.g., $\sigma_{63} \approx 1.378$, $\sigma_{4000} \approx 1.047$. 
\item Alternative 2: Before the change-point, the observations have standard deviation $\sigma_n$ and afterwards 1.
\end{enumerate}

\begin{table}[t]
	\caption{\small Gini's mean difference. Empirical rejection frequencies (\%) at the 5\% significance level of the \FF\ (FvsF) and \FL\ (FvsL) change-point test. Studentized with long-run variance estimation (bandwidth $b_n = n^{1/3}$). 2000 runs. 
		\label{tab:gmd2}
	}
	\renewcommand{\arraystretch}{1.1}
	\begin{center}
		\begin{tabular}{rrr|rrrr}  
		             &                  &                  & \multicolumn{4}{c}{Sample size $n$} \\
        Scenario &  change location &   test statistic &               63 &              250 &             1000 &             4000 \\ 
				\hline 
 Null hypothesis &                  &             FvsF &              2.9 &              3.6 &              4.0 &              3.8 \\ 
                 &                  &             FvsL &              2.1 &              3.3 &              4.0 &              3.9 \\[1.0ex] 
				\hline 
   Alternative 1 &             0.25 &             FvsF &             15.8 &             28.9 &             36.8 &             43.6 \\ 
                 &             0.25 &             FvsL &              9.2 &             25.1 &             35.2 &             42.6 \\[1.0ex] 
                 &              0.5 &             FvsF &             39.5 &             55.8 &             63.2 &             68.8 \\ 
                 &              0.5 &             FvsL &             25.8 &             51.7 &             61.5 &             68.0 \\[1.0ex] 
                 &             0.75 &             FvsF &             26.0 &             36.5 &             41.8 &             43.7 \\ 
                 &             0.75 &             FvsL &             12.7 &             29.0 &             39.2 &             42.1 \\[1.0ex] 
				\hline 
   Alternative 2 &             0.25 &             FvsF &              5.9 &             25.8 &             38.1 &             43.7 \\ 
                 &             0.25 &             FvsL &              9.5 &             28.6 &             39.6 &             44.5 \\[1.0ex] 
                 &              0.5 &             FvsF &             16.8 &             46.9 &             60.8 &             69.8 \\ 
                 &              0.5 &             FvsL &             28.7 &             51.5 &             63.1 &             70.3 \\[1.0ex] 
                 &             0.75 &             FvsF &              4.0 &             19.6 &             33.5 &             41.8 \\ 
                 &             0.75 &             FvsL &             10.8 &             26.2 &             36.2 &             43.7 \\[1.0ex] 
	\hline 
		\end{tabular}
	\end{center}
\end{table}
\color{black}
Table \ref{tab:gmd2} shows empirical rejection frequencies of the \FF\ test and the \FL\ test for Gini's mean difference at the asymptotic 5\% level. Both test statistics are divided by the square root of a long-run variance estimate using the Bartlett kernel and a sample-size-dependent bandwidth $b_n = n^{1/3}$, see Appendix \ref{app:data}. The thus obtained studentized test statistics are compared to the 95\% quantile of the Kolmogorov distribution, see Remark \ref{rem:null} (\ref{2}).

We observe the following:
As expected, a change in the middle of the data ($\tau^\star = 1/2$) is easier to detect than an early or late change. 
Although the height of the change was adjusted for the sample size, we see an increase in power with growing sample size $n$.
Concerning the comparison of the two approaches, we find our expectations based on the theoretical results confirmed:
The \FF\ test has a higher power when the variance is increasing (Alternative 1), while the \FL\ version has higher power for a decreasing variance (Alternative 2). The power ranking is independent of the location of $\tau^\star$. The difference in power is most pronounced for the smallest sample size $n=63$. For $n=4000$, the power difference is very small, which is not surprising, as the two tests are asymptotically equivalent (Theorem \ref{th:local}). 

An aspect not pursued in this paper is why one would use Gini's mean difference as scale estimator in the first place. There are good reasons to do so: The estimator Gini's mean difference is -- for practical purposes -- as efficient as the standard deviation at the normal model \citep{nair:1936}, which we find reflected in the powers of the corresponding tests when comparing Table \ref{tab:gmd2} to Table \ref{tab:var}. Furthermore, Gini's mean difference is more efficient than the standard deviation at heavier-tailed distributions \citep{gerstenberger:vogel:2015}, which translates to better powers of the respective change-point tests \citep{gerstenberger:vogel:wendler:2020}.

\subsection{The sample variance}
\label{sec:var}

The sample variance is obtained by the kernel $h(x,y) = (x - y)^2/2$, leading to the U-statistic 
\[
 U_{1:n} =\frac{1}{n(n-1)}\sum_{1\leq i<j\leq n}(X_i-X_j)^2 = \frac{1}{n-1}\sum_{i = 1}^n(X_i-\bar{X}_{1:n})^2
\] 
Also, in this case, the eccentricity is strictly non-negative. Specifically, we have the following result.
\begin{proposition}\label{prop:var}
Let $F$ and $G$ have finite second moment. Then for $h(x,y) = (x - y)^2/2$ we have
\begin{equation*}
	\rho_{FG} = \frac{1}{2}\big(EX-EY\big)^2.
\end{equation*}
\end{proposition}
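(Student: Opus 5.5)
The plan is to compute the three parameters $\theta_F$, $\theta_G$ and $\theta_{FG}$ explicitly for the kernel $h(x,y)=(x-y)^2/2$ and then insert them into the definition (\ref{eq:offset}) of the eccentricity. The finite second moments of $F$ and $G$ guarantee that all of the expectations below exist. Write $\mu_F = EX$, $\mu_G = EY$, $\sigma_F^2=\Var(X)$ and $\sigma_G^2=\Var(Y)$, where $X,X'\sim F$ and $Y,Y'\sim G$ are all independent.

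First, expanding the square and using independence gives
\[
\theta_F = \tfrac12 E(X-X')^2 = \tfrac12\big(2E X^2 - 2(EX)^2\big) = \sigma_F^2 ,
\]
and in the same way $\theta_G = \sigma_G^2$. For the mixed term, $X-Y'$ has mean $\mu_F-\mu_G$ and variance $\sigma_F^2+\sigma_G^2$, again by independence. Hence
\[
\theta_{FG} = \tfrac12 E(X-Y')^2 = \tfrac12\big(\sigma_F^2+\sigma_G^2 + (\mu_F-\mu_G)^2\big).
\]

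Subtracting $(\theta_F+\theta_G)/2 = (\sigma_F^2+\sigma_G^2)/2$ then gives $\rho_{FG} = \tfrac12(\mu_F-\mu_G)^2 = \tfrac12(EX-EY)^2$, which is the claim.

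No step here is a real obstacle. The only point needing care is that the decomposition $E Z^2 = \Var Z + (EZ)^2$ applied to $Z=X-Y'$ requires independence of $X$ and $Y'$, so that the cross-covariance vanishes. This independence holds by the definition of $\theta_{FG}$ in (\ref{eq:thetas}). Note that, unlike (\ref{eq:rhoFG:univariate}), this argument does not need densities.
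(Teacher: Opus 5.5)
Your proof is correct and follows the same route as the paper. Both compute $\theta_F=\Var(X)$ and $\theta_G=\Var(Y)$, and both obtain $\theta_{FG}=\tfrac12\{\Var(X)+\Var(Y)+(EX-EY)^2\}$ by separating $X-Y'$ into its centered part and its mean difference; the paper expands $(X-\mu)-(Y-\nu)+(\mu-\nu)$, and your version avoids the stray square on $[\Var(X)+\Var(Y)]$ in the paper's intermediate display.
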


Despite being straightforward to show, Proposition \ref{prop:var} is a very remarkable result and has -- in light of the derivation in Section \ref{sec:asym} -- a number of interesting implications. 

The change-in-variance problem has received some considerable attention. It has been studied in varying degrees of generality, e.g., by
\citet{inclan:tiao:1994}, \citet{gombay:horvath:huskova:1996}, \citet{wied:arnold:bissantz:ziggel:2012} and \citet{gerstenberger:vogel:wendler:2020}. A robustified version based on trimming has been proposed by \citet{lee:park:2001}, and the multivariate extension is due to \citet{aue:hoermann:horvath:reimherr:2009}. 
None of these authors consider the sample variance explicitly in the U-statistic context, and none of the authors discuss or draw any connection between the two different construction principles, presumably due to the sample variance being an \emph{essentially linear} statistic. It would be \emph{exactly linear} if the sample mean was replaced by the population mean. 
Indeed, Proposition \ref{prop:var} suggests that, for any alternative, as long as the mean does not change from $F$ to $G$, both tests are equally powerful. 

Much more interesting is what happens if there is a change in mean at the same time as the change in variance.
It is a general consensus that it is advisable to apply changes-for-variance tests (or generally scale) to centered sequences, i.e., applying a de-trending step beforehand. This having said, it may seem intuitive that the \FL\ test is less effected by a change in mean.
%
However, the behavior of both tests under a change in mean is qualitatively very similar. Since, under a change in mean, the eccentricity $\rho_{FG}$ is always positive, we have the same picture as for Gini's mean difference. Either test can be more powerful than the other, and the ranking is solely determined by the order of $\theta_F$ and $\theta_G$. If $\theta_F < \theta_G$, i.e., the variance increases from $F$ to $G$, the \FF\ test is more powerful. If $\theta_F > \theta_G$, the \FL\ test is more powerful. This holds regardless of the direction of the change in mean or where the change occurs.

By Corollary \ref{cor:consistency}, we have that, as long as there is a change in variance, both tests are consistent. 
Corollary \ref{cor:consistency} tells further that, if there is \emph{no} change in variance, a change in mean renders both tests consistent as well. The same holds true for Gini's mean difference, see Example \ref{ex:gmd:change-in-mean}.

Furthermore,  by Proposition \ref{prop:max}, one can easily construct fixed alternatives where both argmax change location estimators are consistent and fixed alternatives where both are \emph{not} consistent, simply by making the difference of the means sufficiently small or large compared to the difference of the variances. 
Incidentally, this is also similarly true for Gini's mean difference, see Example \ref{ex:gmd:location}.

Last but not least, one can use Proposition \ref{prop:var} to construct interesting sequences of \emph{local} alternatives. For instance, consider the sequence of local alternatives
\[
	F = N(0,1)
	\quad \mbox{ and } \quad
	G_n = N(n^{-1/4} \mu, \, 1 + n^{-1/2} \Delta)
\]
for any real numbers $\mu, \Delta \neq 0$.
This sequence satisfies Assumption \ref{ass:local1} (1), i.e., $\theta_F$ and $\theta_G$ approach each other at the $1/\sqrt{n}$ rate, but not Assumption \ref{ass:local1} (2). The processes 
$( n^{-1/2} D^F_n([n t]))_{0 \le t \le 1}$
and
$( n^{-1/2} D^L_n([n t]))_{0 \le t \le 1}$
converge to different limit processes in this case, which can be seen from the proof of Theorem \ref{th:local}, and hence both tests have different asymptotic power.

\begin{table}[t]
	\caption{\small Sample Variance. 
	 Empirical rejection frequencies (\%) at the 5\% significance level of the \FF\ (FvsF) and \FL\ (FvsL) change-point test. Studentized with long-run variance estimation (bandwidth $b_n = n^{1/3}$). 2000 runs. 
		\label{tab:var}
	}
	\renewcommand{\arraystretch}{1.1}
	\begin{center}
		\begin{tabular}{rrr|rrrr}  
		             &                  &                  & \multicolumn{4}{c}{Sample size $n$} \\
        Scenario &  change location $\tau^\star$ &   test statistic &               63 &              250 &             1000 &             4000 \\ 
				\hline 
 Null hypothesis  &                  &             FvsF &              2.9 &              3.2 &              4.3 &              4.4 \\ 
                  &                  &             FvsL &              3.3 &              3.2 &              4.3 &              4.4 \\[1.0ex] 								
				\hline 
    Alternative 1 &             0.25 &             FvsF &              6.0 &             18.6 &             33.8 &             42.0 \\ 
                  &             0.25 &             FvsL &              6.4 &             18.3 &             33.9 &             41.8 \\[1.0ex] 
                  &              0.5 &             FvsF &             24.5 &             50.6 &             63.9 &             68.7 \\ 
                  &              0.5 &             FvsL &             26.0 &             50.5 &             63.6 &             68.6 \\[1.0ex] 
                  &             0.75 &             FvsF &             17.6 &             35.4 &             43.2 &             45.6 \\ 
                  &             0.75 &             FvsL &             22.1 &             35.9 &             43.4 &             45.6 \\[1.0ex]									
				\hline 
    Alternative 2 &             0.25 &             FvsF &             22.5 &             34.4 &             42.4 &             45.8 \\ 
                  &             0.25 &             FvsL &             20.4 &             34.1 &             42.4 &             45.8 \\[1.0ex] 
                  &              0.5 &             FvsF &             28.7 &             51.3 &             63.5 &             68.2 \\ 
                  &              0.5 &             FvsL &             26.7 &             51.2 &             63.4 &             68.2 \\[1.0ex] 
                  &             0.75 &             FvsF &              8.6 &             20.1 &             33.1 &             40.5 \\ 
                  &             0.75 &             FvsL &              7.6 &             20.0 &             33.1 &             40.6 \\[1.0ex]
	\hline 
		\end{tabular}
	\end{center}
\end{table}
\color{black}

\paragraph*{Simulation results.} We consider the exact same set-up as in Section \ref{sec:gmd}. Again, we apply studentization with a long-run variance estimate using the Bartlett kernel and the bandwidth $b_n = n^{1/3}$. 
%
The results are given in Table~\ref{tab:var}.
For $n = 250$ and larger, the power of both tests is essentially the same. For $n = 63$, there are some noticeable differences. Except for $\tau^\star = 0.75$ under Alternative 1, the difference stays within a 2\%-points margin. It should be noted that the differences for $n = 63$ are less pronounced if a smaller bandwidth in the long-run variances estimation is used.

%
%
%
%
%
%
%
%
%
%

\subsection{Sample covariance}
The sample covariance 
\[
 U_{1:n} =\frac{1}{n(n-1)}\sum_{1\leq i<j\leq n}(X_i-X_j)(Y_i-Y_j) = \frac{1}{n-1}\sum_{i = 1}^n (X_i-\bar{X}_{1:n}) (Y_i-\bar{Y}_{1:n})
\] 
of a sequence of bivariate random variables $(X_1,Y_1), \ldots, (X_n,Y_n)$ is a U-statistic with kernel
$h((x_1,y_1),(x_2,y_2)) = (x_2-x_1)(y_2-y_1)/2$. Letting $F$ and $G$ be the bivariate marginal distributions before and after the change-point, respectively, we have similarly to Proposition \ref{prop:var}
\[
	\rho_{FG} =  \frac{1}{2} (E_F X-E_G X)(E_F Y-E_G Y)
\]
if $F$ and $G$ have finite second moments. The calculation can be found in Appendix \ref{app:examples} along with the proof of Proposition \ref{prop:var}.
Hence much of what has been said about the sample variance is analogously true for the sample covariance. We turn our attention to another measure of association in the following.

%
%
%
%
%
%
%
%
%
%

\subsection{Kendall's tau}
\label{sec:tau}

Now $F$ and $G$ are bivariate distribution functions. 
\be \label{eq:kendall}
	h((x_1,y_1),(x_2,y_2)) = \Ind{(x_2-x_1)(y_2-y_1) > 0} - \Ind{(x_2-x_1)(y_2-y_1) < 0}. 
\ee
%
Contrary to Gini's mean difference and the sample variance, for this kernel $h$, the eccentricity $\rho_{FG}$ may be positive, zero, or negative. The ranking of the tests is not solely determined by the sign of $\theta_G - \theta_F$. The next proposition states that under certain symmetry conditions on $F$ and $G$, the eccentricity is zero. 
\begin{proposition} \label{prop:kendall:symmetry}
Let $(X,Y) \sim F$ 
and $G$ be such that $(X,-Y) \sim G$, then, for the kernel $h$ given by (\ref{eq:kendall}), we have $\rho_{FG} = 0$. 
\end{proposition}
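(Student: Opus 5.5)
The argument uses the reflection map $\sigma(x,y) = (x,-y)$ and its interaction with the Kendall kernel (\ref{eq:kendall}). For any two points we have
\[
h(\sigma(u), \sigma(v)) = -h(u,v),
\]
because the sign of $(x_2-x_1)(-y_2+y_1)$ is the negative of the sign of $(x_2-x_1)(y_2-y_1)$. Ties are mapped to ties, so the kernel value $0$ is preserved. The kernel is also bounded, so every expectation below exists.

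The plan is to express $\theta_G$ and $\theta_{FG}$ in terms of $F$ alone. Let $Z=(X,Y)$ and $Z'=(X',Y')$ be independent with distribution $F$. By assumption $\sigma(Z)$ and $\sigma(Z')$ are independent with distribution $G$. This gives
\[
\theta_G = E\,h(\sigma(Z),\sigma(Z')) = -E\,h(Z,Z') = -\theta_F .
\]
Consequently $\rho_{FG} = \theta_{FG} - (\theta_F+\theta_G)/2 = \theta_{FG}$, and it remains to show $\theta_{FG}=0$.

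For the mixed term, $\theta_{FG} = E\,h(Z,\sigma(Z'))$. Since $\sigma$ is an involution, $(Z,\sigma(Z'))$ and $(\sigma(W),W')$ have the same joint law, where $W=\sigma(Z)\sim G$ and $W'=Z'\sim F$ are independent. Applying the reflection identity once more yields
\[
\theta_{FG} = E\,h(\sigma(W), W') = -E\,h(W,\sigma(W')).
\]
Here $W\sim G$ and $\sigma(W')\sim G$, which is not yet of the form $\theta_{FG}$, so this path does not close directly.

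Instead we use the first coordinates. Under both $F$ and $G$ the first coordinate has the same law, since reflection does not change $X$, and the same holds for the second coordinate up to sign. We compute $\theta_{FG}$ directly. Writing $(X,Y)\sim F$ and $(X',Y')\sim F$ independent,
\[
\theta_{FG} = E\,\mathrm{sgn}\big((X'-X)(-Y'-Y)\big).
\]
The product is antisymmetric under exchanging the two independent copies, since $(X'-X)\mapsto-(X'-X)$ while $-Y'-Y$ is unchanged. Because $(X,Y)$ and $(X',Y')$ are i.i.d., the expectation equals its own negative and is therefore $0$.

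This exchange-symmetry step is the key point, and it is where care is needed: one must check that $\mathrm{sgn}$ of the product is exactly odd under the swap. Ties contribute $0$ and cause no problem. Combining the two parts gives $\rho_{FG}=\theta_{FG}=0$.
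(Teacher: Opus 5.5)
Your proof is correct and is essentially the paper's argument: both get $\theta_G=-\theta_F$ from the reflection, and your swap of the i.i.d.\ copies is the pointwise identity $h(Z',\sigma(Z)) = h(\sigma(Z),Z') = -h(Z,\sigma(Z'))$, which the paper derives from reflection antisymmetry plus symmetry of $h$ to conclude $\theta_{FG}=-\theta_{FG}$. Delete the abandoned detour, because its claim that $(Z,\sigma(Z'))$ and $(\sigma(W),W')=(Z,Z')$ have the same joint law is false; had you instead applied the symmetry of $h$ to $-E\,h(\sigma(Z),Z')$ there, the argument would have closed exactly as in the paper.
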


\begin{figure}[t]%
\centering
\includegraphics[width=0.7\columnwidth]{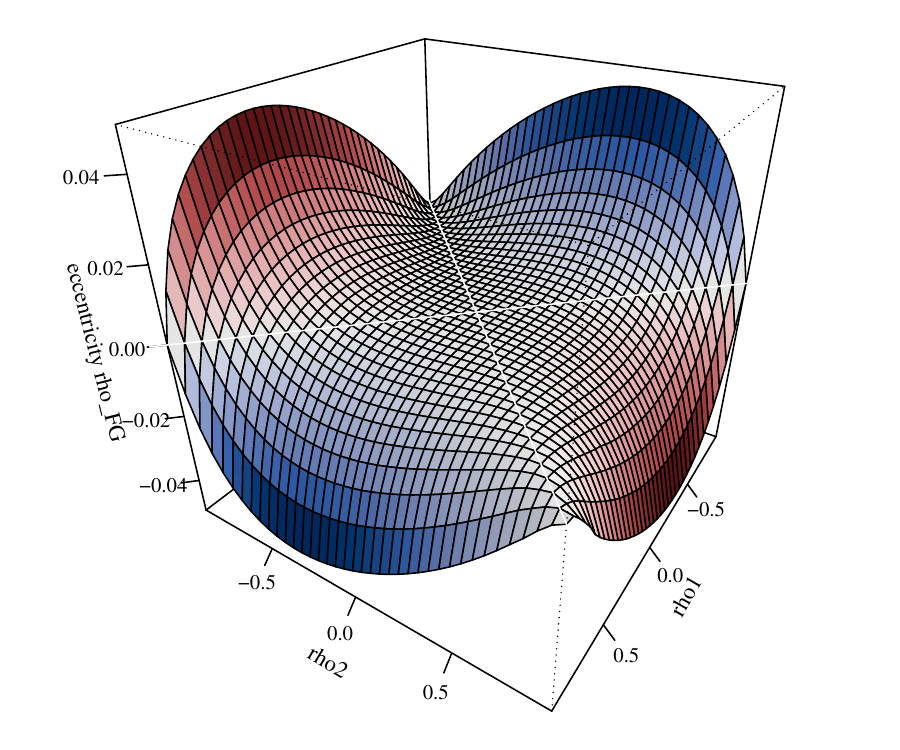}%
\caption{The plot of Example \ref{ex:kendall:normal}: It shows the eccentricity $\rho_{FG}$ on the z-axis as a function of $\rho_1$ and $\rho_2$, i.e., the correlation of the bivariate centered Gaussian time series before and after the change, respectively. In the blue region, the \FF\ test is more efficient.
}%
\label{fig:kendall}%
\end{figure}
\begin{example}[Bivariate normal distribution] \label{ex:kendall:normal}
\rm 
Consider the bivariate distribution functions $F$ and $G$ with
\[	F = N_2\left(\begin{pmatrix} 0 \\ 0\end{pmatrix},\ \begin{pmatrix} 1 & \rho_1 \\ \rho_1 & 1 \end{pmatrix} \right) 
	\quad \mbox{and} \quad
		G = N_2\left(\begin{pmatrix} 0 \\ 0\end{pmatrix},\ \begin{pmatrix} 1 & \rho_2 \\ \rho_2 & 1 \end{pmatrix} \right) 
\]
for correlation parameters $-1 < \rho_1, \rho_2 < 1$. The corresponding values of Kendall's tau are
$\theta_F = (2/\pi) \arcsin(\rho_1)$ and $\theta_G = (2/\pi) \arcsin(\rho_2)$. 
Figure \ref{fig:kendall} depicts the eccentricity $\rho_{FG}$ as function of $\rho_1$ and $\rho_2$ and was created by means of numerical integration. 
The surface is shaded in blue where the \FF\ test is better, i.e., where $\rho_{FG}$ has the same sign as $\theta_G - \theta_F$. It is shaded in red where \FL\ is better. On the diagonals, i.e., for $\rho_1 = \rho_2$ (no change) and $\rho_1 = - \rho_2$ (the situation of Proposition \ref{prop:kendall:symmetry}), the eccentricity is zero. 
\end{example}

\begin{table}[t]
	\caption{\small Kendall's $\tau$. 
	Empirical rejection frequencies (\%) at the 5\% significance level of the \FF\ (FvsF) and \FL\ (FvsL) change-point test.
	2000 runs. 
	  With long-run variance estimation (bandwidth $b_n = n^{1/3}$).
		\label{tab:tau}
	}
	\renewcommand{\arraystretch}{1.1}
	\begin{center}
		\begin{tabular}{lc@{\qquad }c@{\quad }|@{\quad }rrrr}  
		             &  change                 	&    test       & \multicolumn{4}{c}{sample size $n$} \\
        scenario &  location $\tau^\star$  	&    statistic  &           63 &              250 &             1000 &             4000 \\ 
				\hline 
              NH &                  				&          FvsF &          5.3 &                4.0 &            4.4 &              4.8 \\ 
$\rho_1 = \rho_2 = -3/\sqrt{n}$&    				&          FvsL &          3.8 &                4.0 &            4.0 &              4.8 \\[2.0ex] 
				\hline 
   Alternative 1 &              0.5 				&          FvsF &         55.2 &             68.0 &             71.9 &             73.4 \\ 
$\rho_1 = -3/\sqrt{n}$, $\rho_2 = 3/\sqrt{n}$ &     &  FvsL &         53.8 &             67.6 &             71.9 &             73.7 \\[2.0ex] 
   Alternative 2 &              0.5 				&          FvsF &         62.1 &             68.2 &             72.2 &             72.9 \\ 
$\rho_1 = 0$, $\rho_2 = 6/\sqrt{n}$ &     	&          FvsL &         70.1 &             69.2 &             71.5 &             73.2 \\[2.0ex] 
   Alternative 3 &              0.5 				&          FvsF &         76.6 &             70.8 &             71.0 &             71.7 \\ 
$\rho_1 = -6/\sqrt{n}$, $\rho_2 = 0$&     	&          FvsL &         69.7 &             69.4 &             71.4 &             71.6 \\[2.0ex] 
	\hline 
		\end{tabular}
	\end{center}
\end{table}

\paragraph*{Simulation results}
Table \ref{tab:tau} contains rejection frequencies of the studentized \FF\ and the \FL\ change-point tests for Kendall's tau at the asymptotic 5\% significance level based on 2000 runs for each case. The long-run variance estimator uses a Bartlett kernel and bandwidth $b_n = n^{1/3}$, and is explicitly stated in Appendix \ref{app:data}. The data sequence are independent bivariate normal variates as in Example \ref{ex:kendall:normal}. Four scenarios are considered:
\begin{enumerate}[(1)]
\item Null hypothesis: $\rho_1 = \rho_2 = -3/\sqrt{n}$,
\item Alternative 1: $\rho_1 = -3/\sqrt{n}$, $\rho_2 = 3/\sqrt{n}$,
\item Alternative 2: $\rho_1 = 0$, $\rho_2 = 6/\sqrt{n}$,
\item Alternative 3: $\rho_1 = -6/\sqrt{n}$, $\rho_2 = 0$.
\een
The change always occurs in the middle of the series at $\tau^\star = 1/2$. 
With the sample-size dependent change height, we mimic the local-alternative setting, and obtain non-trivial powers for all sample sizes. We find:
Both tests keep the nominal 5\% level for sample sizes of at least $n = 250$. For large $n$ and changes of small magnitude, both tests have similar power. For $n = 4000$, all rejection frequencies are within a 0.5\%-point margin of each other. This is in line with Theorem \ref{th:local}.
For small $n$ and large changes, we observe substantial differences in power, which are in line with Theorem \ref{th:fixed} and its consequences:
Under all three alternatives, we have $\theta_G - \theta_F > 0$. By Example \ref{ex:kendall:normal} we know that, under Alternative 2, the eccentricity $\rho_{FG}$ is negative, and hence by the criterion from Section \ref{sec:fixed} the \FL\ test is more powerful. Indeed, for $n = 63$, we observe an 8\%-points higher power of the \FL\ test. Under Alternative 3, the picture is reversed: The eccentricity is positive, has the same sign as $\theta_G - \theta_F$. Accordingly, we find the \FF\ test to be more powerful by the same magnitude. Under Alternative 1, both tests have similar power with a difference of about 1.5\%-points for $n=63$ and less for higher sample sizes. This illustrates Proposition \ref{prop:kendall:symmetry}.

\section{Data example}
\label{sec:data}

\begin{figure}[t]%
\centering
\includegraphics[width=0.9\columnwidth]{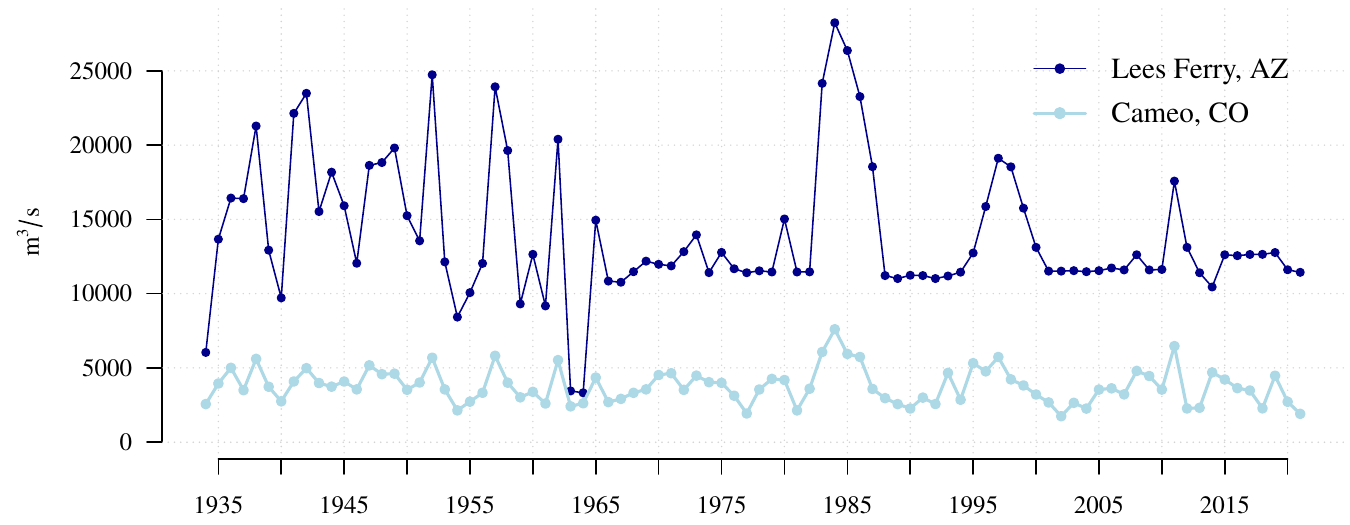}%
\caption{Annual mean discharge ($m^3/s$) of the Colorado river at Lees Ferry, AZ (darkblue) and Cameo, CO (lightblue) from 1934 to 2021.}%
\label{fig:1}%
\end{figure}
Figure \ref{fig:1} shows the annual mean discharge ($m^3/s$) of the Colorado river at Lees Ferry, AZ, and Cameo, CO, in the 88 years from 1934 to 2021. 
The data are taken from the webpages of the U.S.\ Geological Service.\footnote{https://waterdata.usgs.gov/nwis} 
\begin{figure}[t]%
\centering
\includegraphics[width=0.9\columnwidth]{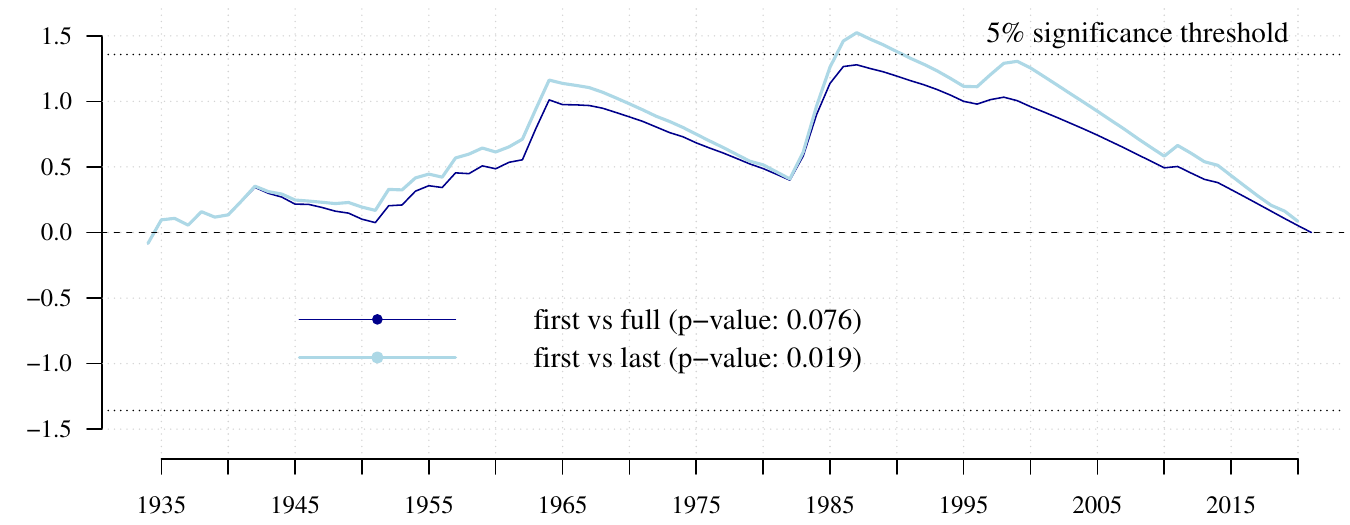}%
\caption{Gini's mean difference: studentized change-point processes \FF\ and \FL\ applied to the Lees Ferry series.}%
\label{fig:2}%
\end{figure}
\begin{figure}[t]%
\centering
\includegraphics[width=0.9\columnwidth]{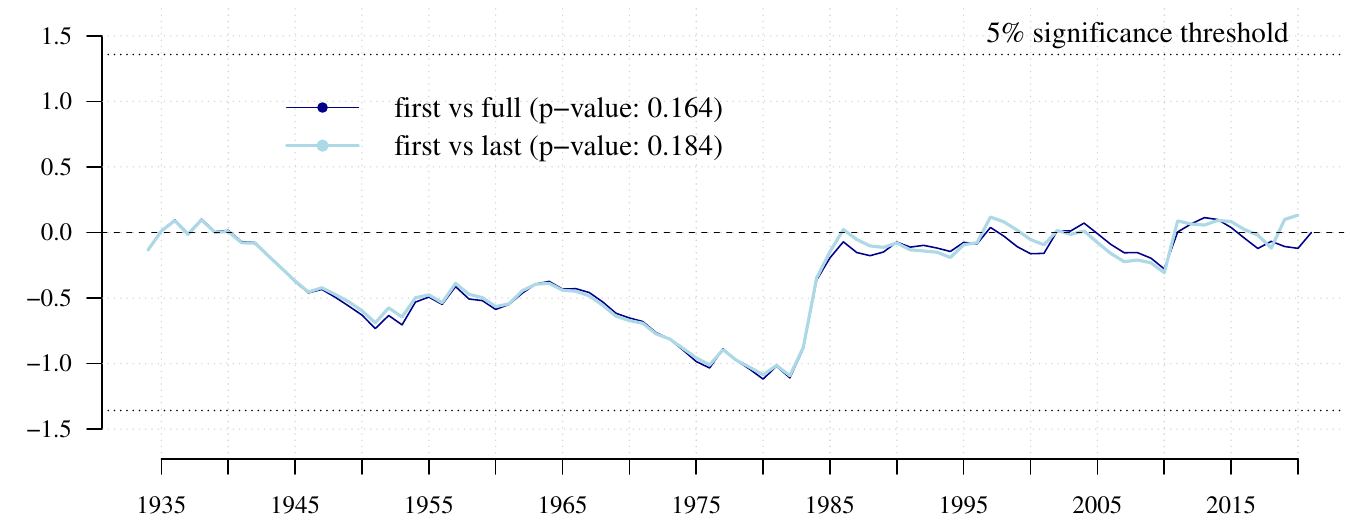}%
\caption{Gini's mean difference: studentized change-point processes \FF\ and \FL\ applied to the Cameo series.}%
\label{fig:3}%
\end{figure}
Lees Ferry, AZ, is downstream from Glen Canyon Dam, which was built from 1956 to 1964 and creates Lake Powell. The resulting structural change in the annual mean discharge series is clearly visible, affecting particularly the variability of the series. 
Incidentally, the southwestern North American megadrought in recent years \citep[e.g.][]{overpeck:2020, williams:2022} does not become apparent at the series: The discharge downstream from Lake Powell remained largely constant (while its water level has been dramatically decreasing).

Figure \ref{fig:2} shows the studentized change-point processes \FF\ and \FL, cf.~(\ref{eq:process.FvsF}) and (\ref{eq:process.FvsL}), for Gini's mean difference applied to the Lees Ferry sequence. All tests in this section use a long-run variance estimate with Bartlett kernel and 
bandwidth $b = 2$, see Appendix \ref{app:data} in the supplementary material for details.
From the results of Section \ref{sec:gmd} we know that under alternatives where the scale \emph{decreases}, as it is the case here, the \FL\ change-point test is more powerful than the \FF\ test. 
Indeed, the \FL\ change-point process deviates further from zero, clearly exceeding the asymptotic 5\% significance threshold of $\pm 1.3581$, yielding a p-value of 0.019. The \FF\ change-point process narrowly stays within the 5\% threshold, and the test returns a \emph{mildly significant} p-value of 0.076.

Cameo, CO, is located upstream from Lake Powell and is not affected by any major hydrological constructions. A pronounced change in scale is neither visible nor plausible for this series. Both Gini's-mean-difference change-point processes, \FF\ and \FL, are very close to each other, returning p-values of 0.164 and 0.184, respectively, cf.~Figure~\ref{fig:3}.

Another expectable consequence of the construction of the dam is a lower correlation between discharges upstream and downstream from the reservoir. Figure~\ref{fig:4} shows both change-point processes for Kendall's tau rank correlation coefficient between the two discharge series.
In this case, the \FF\ process deviates further than the \FL. The former yields a p-value of 0.006, the latter of 0.02. This fits in with the observations of Section \ref{sec:tau}, where we noted that -- at least for the normal model -- the \FF\ Kendall's-tau test is more powerful than \FL\ under alternatives where the \emph{absolute value} of the correlation coefficient decreases. Here, Kendall's tau essentially decreases from 0.77 in the period 1934--1962 to 0.43 in the period 1965--2021. 
So here we find a similar picture as in Figure \ref{fig:2} with roles of the two tests interchanged. 


\begin{figure}%
\centering
\includegraphics[width=0.9\columnwidth]{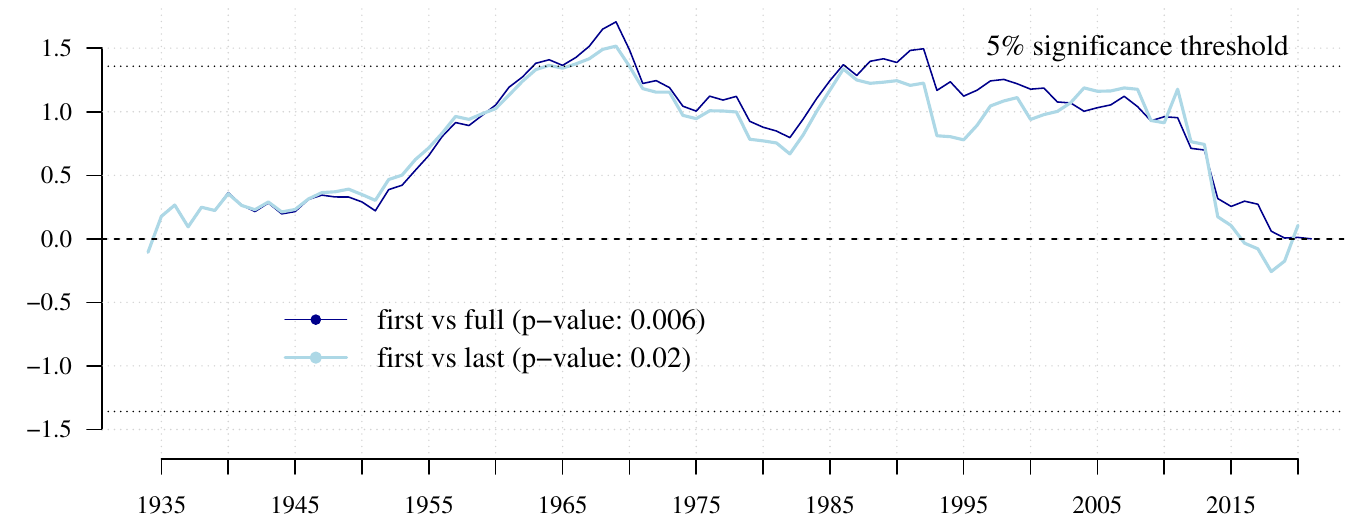}%
\caption{Kendall's $\tau$ between Lees Ferry and Cameo: studentized change-point processes \FF\ and \FL.}%
\label{fig:4}%
\end{figure}

\section{Conclusion}
\label{sec:conclusion}

We studied two construction principles of generalized CUSUM test statistics: For each potential change-point $k$, the \FF\ approach compares the estimate $\hat\theta_{1:k}$ 
evaluated at the first $k$ observations to the estimate $\hat\theta_{1:n}$ evaluated at the whole sample, whereas the \FL\ approach compares $\hat\theta_{1:k}$ to the estimate $\hat\theta_{(k+1):n}$ evaluated at the remaining $n-k$ observations. 

Within the huge body of literature devoted the change-point problem, authors usually consider one the of two approaches without any mention of the respective other or a motivation of their choice. An exception is \citet{sen:1983}, who remarks that they are asymptotically equivalent.
When discussing the question with colleagues, we encountered two opinion camps: Either it is assumed that both approaches are generally very similar (as they are identical for the original CUSUM test with the sample mean) or it is expected that the \FL\ is generally more powerful.

In this paper, we give a thorough comparison of the two approaches for the class of U-statistics. It turns out that the answer is not as unanimous as repeatedly believed:
First, the two test approaches have common properties: Under the null hypothesis and local alternatives, where the change-point processes $(D^F_n([n t]))_{0\le t \le 1}$ and $(D^L_n([n t]))_{0\le t \le 1}$ are scaled with $1/\sqrt{n}$, they have the same limit. This can be summarized by saying, they are asymptotically equivalent. 

However, when scaling with $1/n$ in the fixed-alternatives setting, different limits appear, which is reflected by power differences in small to moderate sample sizes.
We identify a simple criterion to determine which test is more powerful in which situation: 
the sign of the eccentricity $\rho_{FG}$ relative to the sign of $\theta_G - \theta_F$.
The eccentricity $\rho_{FG}$ of the kernel $h$ at distributions $F$ and $G$ describes the offset of the mixed parameter $\theta_{FG}$ from the center of the interval between $\theta_F$ and $\theta_G$. 
 Consequently, since the eccentricity is symmetric in $F$ and $G$, we have that, for any alternative where \FF\ is more powerful, an equally relevant alternative can be given where \FL\ is more powerful by the same order of magnitude -- by simply exchanging $F$ and $G$. 

The ensuing question for future research is if similar criteria can be given for other classes of statistics, e.g., quantile-based estimators (e.g.~median, interquartile range), rank-based statistics (e.g.~Spearman's rho), or the large class of differentiable functionals of statistics.


%
%

\section*{Acknowledgments}
We thank Svenja Fischer for pointing us to the data source. All figures and simulations were done in R 4.1.3 \citep{R} using packages 
mvtnorm \citep{mvtnorm}, pcaPP \citep{pcaPP}, and cubature \citep{cubature}.

\appendix

\section*{Appendices}

There are in total four appendices: Appendix \ref{app:pned} presents the concept of near epoch dependence in probability ($P$-NED), where we essentially follow \citet{dehling:vogel:wendler:wied:2017}. This class of weakly dependent processes extends the $L_2$-NED class to essentially arbitrarily heavy-tailed marginal distributions. Within this framework we state regularity assumptions on the data process and on the U-statistics that guarantee the high-level conditions we impose in the theorems throughout the paper.
Appendix \ref{app:asym} contains the proofs of the results of Section \ref{sec:asym}. Appendix \ref{app:examples} contains proofs of Section \ref{sec:examples}.
Appendix \ref{app:data} explicitly states the long-run variance estimators used in the simulations and the data examples.


\section{A general, moment-free framework for dependent series}
\label{app:pned}

\begin{definition}[Absolute regularity] \rm 
Let $(\Omega,\F,P)$ be a probability space. 
\label{def:absolute.regularity}
\begin{enumerate}[(1)]
\item
For two sub-$\sigma$-fields $\calA$, $\calB$ of $\F$, we define the absolute regularity coefficient
\[     
	 	\beta(\calA,\calB) = E \left[ \esssup\left\{ | P(A|\calB) - P(A)| \, : \, A \in \calA  \right\} \right].
\]
The absolute regularity coefficient is a measure of dependence between the $\sigma$-fields $\calA$ and $\calB$, it lies between $0$ and $1$, and equals 0 if $\calA$ and $\calB$ are independent.
\item
Let $(Z_n)_{n \in \Z}$ be an $r$-variate, strongly stationary stochastic processes on $(\Omega,\F,P)$. For $k \le n$, let $\F_k^n = \sigma(Z_k,\ldots,Z_n)$, where also $k = -\infty$ and $n = \infty$ are permitted. The process $(Z_n)_{n \in \Z}$ is called \emph{absolutely regular} if the absolute regularity coefficients
\[
		\beta_k = \beta(\F^0_{\!\!-\infty},\F_k^{\infty}), \qquad k \ge 1, 
\]
converge to zero as $k \to \infty$. 
\end{enumerate}
\end{definition}
Absolute regularity is also referred to as $\beta$-mixing and goes back to \citet{volkonskii:rozanov:1959}. It is a mixing condition stronger than $\alpha$-mixing and weaker than $\phi$-mixing. However, we do not consider absolutely regular processes themselves, but approximating functionals of such absolutely regular processes. Specifically, we use the concept  of near epoch dependence in probability ($P$-NED). It was introduced by \citet{dehling:vogel:wendler:wied:2017} and generalizes the $L_2$-NED concept. It is as general as $L_2$-NED concerning the serial dependence, but imposes no moment conditions. This is useful for bounded U-statistics like Kendall's tau. Their asymptotics are valid without any moment assumptions, and $P$-NED provides a framework for weakly dependent and very heavy-tailed sequences.

\begin{definition}\rm  \label{def:pned}
Let $(Z_{in})_{i \in \Z, n \in \N}$ be an array of $r$-variate stochastic processes ($n$ denoting the row index).
The array $(X_{in})_{i \in \Z, n \in \N}$ of $p$-variate stochastic processes is called \emph{near epoch dependent in probability} or short \emph{$P$-near epoch dependent} ($P$-NED) on the array $(Z_{in})_{n \in\Z}$ if there is a sequence of approximating constants $(a_k)_{k \in \N}$ with $a_{k} \to 0$ as $k \to \infty$, functions $f_{n,i,k}:\R^{r \times (2k+1)} \to \R^p$, $k \in \N$, and a non-increasing function $\Phi: (0,\infty) \to (0,\infty)$ such that 
\be \label{eq:l_0}
	P \left( \left|  X_{in}-f_{n,i,k}(Z_{i-k,n},\ldots,Z_{i+k,n})  \right|_1 > \varepsilon\right) \ \le \ a_k \Phi(\varepsilon) \\
\ee
for all $k, n \in \N$, $i \in \Z$ and $\varepsilon > 0$. Here $|x|_1$ denotes the $L_1$-norm of the vector $x \in \R^p$.
\end{definition}

For the most part of Section \ref{sec:asym} of the paper we consider triangular arrays of processes. Hence we define $P$-NED for arrays, noting that this includes $P$-NED sequences as special case by simply taking all rows identical.

For the various functional limit theorems for the processes $D^F_n$ and $D^L_n$, cf.~(\ref{eq:process.FvsF}) and (\ref{eq:process.FvsL}), respectively, in the main paper, we require the following technical Assumptions \ref{ass:weak_dependence}, \ref{ass:uniform_moments}, and \ref{ass:variation_condition} on the data array $(X_{in})_{i \in \Z, n \in \N}$ and the kernel $h$.

\begin{assumption}[Weak dependence]
\label{ass:weak_dependence} 
The array $(X_{in})_{i \in \Z, n \in \N}$ is $P$-NED on an absolutely regular array $(Z_{in})_{i \in \Z, n \in \N}$, and there is a $\delta > 0$ such that 
\[
	a_k\Phi(k^{-6}) = O (k^{-6(2+\delta)/\delta}), \qquad
	\sum_{k=1}^\infty k\beta_k^{\delta/(2+\delta)} < \infty,
\]
where $\beta_k$, $k \in \N$ are the absolute regularity coefficients (Definition \ref{def:absolute.regularity}) and $a_k$, $k \in \N$, and $\Phi$ are as in Definition \ref{def:pned}.
\end{assumption}
Furthermore, a uniform moment condition on $h(X_i,X_j)$ is required. We do not impose any moment conditions on the data sequence itself. 
\begin{assumption}[Uniform moments]
\label{ass:uniform_moments} There is a constant $M > 0$ such that for all $k,n\in\N$ and $i,j\in\{1,\ldots,n\}$
\[
		E\left|h(X_{in},X_{jn})\right|^{2+\delta}\leq M,
\]\[
		E\left|h(f_{n,i,k}(Z_{i-k,n},\ldots,Z_{i+k,n}),f_{n,j,k}(Z_{j-k,n},\ldots,Z_{j+k,n}))\right|^{2+\delta}\leq M.
\]
\end{assumption}
Assumptions \ref{ass:weak_dependence} and \ref{ass:uniform_moments} are linked via $\delta$. Weaker moment conditions have to be paid for by a faster decay of the short-range dependence coefficients and vice versa. The last assumption is also known as the variation condition and was introduced by \citet{denker:keller:1986}. It is a form of Lipschitz continuity of the kernel $h$ with respect to $F$.


\begin{assumption}[Variation condition]
\label{ass:variation_condition}
Let $(\tilde{X}_{in})_{i \in \Z, n \in \N}$ be an independent copy of the array $(X_{in})_{i \in \Z, n \in \N}$. 
There are constants $L,\epsilon_0>0$ such that for all $\epsilon\in(0,\epsilon_0)$, all $n\in\N$, $i,j\in\{1,\ldots,n\}$
\[ 
 E\left(\sup_{|x-X_{in}|\leq \epsilon,|y-\tilde{X}_{jn}|\leq \epsilon}\left|h\left(x,y\right)-h\left(X_{in},\tilde{X}_{jn}\right)\right|\right)^2\leq L\epsilon.
\]  
\end{assumption}


\section{Proofs of Section \ref{sec:asym} and further results}
\label{app:asym}

Proposition \ref{prop:local:degenerate} below will be employed in both Subsections \ref{app:null} \emph{Proofs of Section \ref{sec:null} (Null hypothesis)} and
\ref{app:local} \emph{Proofs of Section \ref{sec:local} (Local alternatives) and further results}. It is formulated in terms of the array 
$(X_{in})_{i \in \Z, n \in \N}$ as constructed at the beginning of Section \ref{sec:local} (Local alternatives) in the main paper. When applying it to the null-hypothesis case, it is understood that the array consists of identical rows $(X_i)_{i\in\Z}$. 
As in the main paper, we henceforth usually omit the commas separating several indices as long as there is no ambiguity.
Let $(\tilde{X}_{in})_{i \in \Z, n\in\N}$ be an independent copy of $(X_{in})_{i \in \Z, n\in\N}$ and define
\begin{align} \label{theta_ijn.repeat}
		\theta_{ijn} & =E\left[h(X_{in},\tilde{X}_{jn})\right],\nonumber \\
		h_{1ijn}(x)  & =E\left[h(x,X_{jn})\right]-\theta_{ijn}, \\
		h_{2ijn}(x,y)& =h(x,y)-E\left[h(X_{in},y)\right]-E\left[h(x,X_{jn})\right]+\theta_{ijn}. \nonumber
\end{align}
This is the same as (\ref{theta_ijn}) in the main paper. Define further 
\begin{align*}
U_{n,1}^{(2)}(l)& =\frac{2}{l(l-1)}\sum_{1\leq i<j\leq l}h_{2ijn}(X_{in},X_{jn})-\frac{2}{n(n-1)}\sum_{1\leq i<j\leq n}h_{2ijn}(X_{in},X_{jn}),\\
U_{n,2}^{(2)}(l)& =\frac{2}{l(l-1)}\sum_{1\leq i<j\leq l}h_{2ijn}(X_{in},X_{jn})-\frac{2}{(n-l)(n-l-1)}\sum_{l+1\leq i<j\leq n}h_{2ijn}(X_{in},X_{jn}).
\end{align*}
for $1 \le l \le n$. These terms can be understood as the degenerate parts of the U-statistic processes $D_n^F$ and $D_n^L$, respectively.
The following proposition presents moment conditions under which these degenerate parts vanish asymptotically. It is used in the proofs in Theorems \ref{th:nh}, \ref{th:local}, and \ref{th:fixed}.

\begin{proposition}\label{prop:local:degenerate} \mbox{\\} 
\begin{enumerate}[(1)]
\item \label{num:1} If
\begin{equation}\label{condition1}
			E\bigg[\bigg(\max_{l\leq n}\Big|\sum_{1\leq i<j\leq l}h_{2ijn}(X_{in},X_{jn})\Big|\bigg)^2\bigg]=o( n^{3})
\end{equation}
then
\begin{equation}
			\frac{[\lambda n]}{\sqrt{n}}U_{n,1}^{(2)}([\lambda n]) \to  0
\end{equation}
in probability uniformly in $0 \le \lambda \le 1$. 
\item \label{num:2}
If additionally
\begin{equation}\label{condition2}
	E\bigg[\bigg(\max_{l\leq n}\Big|\sum_{l\leq i<j\leq n}h_{2ijn}(X_{in},X_{jn})\Big|\bigg)^2\bigg]=o( n^{3}),
\end{equation}
then
\begin{equation}
\frac{[\lambda n](n-[\lambda n])}{n^{3/2}}U_{n,2}^{(2)}([\lambda n]) \to 0
\end{equation}
in probability uniformly in $0 \le \lambda \le 1$.
\end{enumerate}
\end{proposition}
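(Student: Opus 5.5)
Write $S_n(l)=\sum_{1\le i<j\le l}h_{2ijn}(X_{in},X_{jn})$ and $R_n(l)=\sum_{l+1\le i<j\le n}h_{2ijn}(X_{in},X_{jn})$. Set $M_n=\max_{l\le n}|S_n(l)|$ and $M_n'=\max_{l\le n}|R_n(l)|$. Conditions (\ref{condition1}) and (\ref{condition2}) give $E M_n^2=o(n^3)$ and $E (M_n')^2=o(n^3)$. By Chebyshev's inequality this means $M_n=o_P(n^{3/2})$ and $M_n'=o_P(n^{3/2})$. The plan is to bound the two weighted processes deterministically by $M_n$ and $M_n'$ times bounded factors, uniformly in $\lambda$, and then invoke these stochastic orders. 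The only delicacy is that the normalizing factors $2/(l(l-1))$ blow up for small $l$, so the weights must be checked to compensate.

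For part (\ref{num:1}), take $l=[\lambda n]\ge 2$; for $l\le 1$ the sums are empty and the term is trivially handled. Then
\[
\frac{l}{\sqrt n}\,\big|U^{(2)}_{n,1}(l)\big|\le \frac{2}{(l-1)\sqrt n}\,|S_n(l)|+\frac{2l}{n(n-1)\sqrt n}\,|S_n(n)|.
\]
The second term is at most $4M_n/n^{3/2}=o_P(1)$ uniformly. The first term is a problem, since $2/(l-1)$ is not $O(1/n)$ for small $l$. I would therefore split at a threshold $l\le \epsilon n$ versus $l>\epsilon n$.

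For $l>\epsilon n$ the first term is at most $C\epsilon^{-1}M_n/n^{3/2}=o_P(1)$ for fixed $\epsilon$.

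For small $l$ a cruder but sharper-in-$l$ bound is needed, and this is the main obstacle: the maximal inequality only controls $\max_l|S_n(l)|$, not $|S_n(l)|/l$. I would handle it with a dyadic blocking argument. On blocks $2^{m}<l\le 2^{m+1}$, apply the hypothesis with $n$ replaced by $2^{m+1}$. The hypothesis is stated along the array, but the first sums only involve $X_{1n},\dots,X_{ln}$, whose law does not depend on $n$ when $l\le[n\tau^\star]$. This gives $E\max_{l\le 2^{m+1}}|S_n(l)|^2\le c_{m}2^{3(m+1)}$ with $c_m\to0$ and $\sup_m c_m<\infty$. Hence
\[
E\Big(\max_{2^m<l\le 2^{m+1}}\frac{|S_n(l)|}{(l-1)\sqrt n}\Big)^2\le C c_m\,\frac{2^{m}}{n}.
\]
Summing over $2^{m}\le \epsilon n$ gives the bound $C\epsilon\sup_m c_m$, which tends to $0$ as $\epsilon\to0$. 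Combining this with the large-$l$ bound and letting $n\to\infty$ and then $\epsilon\downarrow0$ yields uniform convergence in probability.

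Part (\ref{num:2}) works the same way. The weight is
\[
\frac{l(n-l)}{n^{3/2}}\cdot\frac{2}{l(l-1)}\le \frac{2}{(l-1)\sqrt n}
\]
on the first sum, which is treated exactly as in part (\ref{num:1}). On the second sum the weight is at most $2/((n-l-1)\sqrt n)$, which is the mirrored situation. Here I use (\ref{condition2}) for the range $n-l\ge\epsilon n$, and for small $n-l$ the same dyadic argument applied to the backward sums $R_n$, again using the hypothesis at smaller horizons.
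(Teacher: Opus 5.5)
Your proof is essentially the paper's argument. You reduce to the weight $2/((l-1)\sqrt n)$ (and its mirror image for the backward sums), split the range of $l$ into dyadic blocks, apply the second-moment hypothesis at the dyadic horizons, and finish with Chebyshev's inequality; the paper simply sums over all blocks up to $\lceil\log_2 n\rceil$ and uses $n^{-1}\sum_k o(2^k)\to0$ instead of splitting at $\epsilon n$. You also justify using the hypothesis at smaller horizons, which the paper does silently (to be precise, take a row $N$ with $[N\tau^\star]\ge 2^{m+1}$, not $N=2^{m+1}$); but for the backward sums under local alternatives, where the post-change law $G_n$ varies with $n$, this transfer is not automatic, and both your argument and the paper's tacitly need the condition to hold uniformly over sub-horizons.
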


\begin{proof}
For part (\ref{num:1}):
\begin{multline*}
E\bigg[\Big(\max_{\lambda\in[0,1]}\big|\frac{[\lambda n]}{\sqrt{n}}U_{n,1}^{(2)}([\lambda n])\big|\Big)^2\bigg]\\
\leq 2E\bigg[\bigg(\max_{l\leq n}\frac{2}{\sqrt{n}(l-1)}\Big|\sum_{1\leq i<j\leq l}h_{2ijn}(X_{in},X_{jn})\Big|\bigg)^2\bigg]\\
+2E\bigg[\bigg(\frac{2}{\sqrt{n}(n-1)}\sum_{1\leq i<j\leq n}h_{2ijn}(X_{in},X_{jn})\bigg)^2\bigg]
\end{multline*}
For the second summand, we have that
\begin{multline*}
2E\bigg[\bigg(\frac{2}{\sqrt{n}(n-1)}\sum_{1\leq i<j\leq n}h_{2ijn}(X_{in},X_{jn})\bigg)^2\bigg]\\
\leq \frac{32}{n^{3}}E\bigg[\bigg(\max_{l\leq n}\Big|\sum_{1\leq i<j\leq l}h_{2ijn}(X_{in},X_{jn})\Big|\bigg)^2\bigg]\rightarrow 0.
\end{multline*}
For the first summand, we obtain
\begin{multline*}
2E\bigg[\bigg(\max_{l\leq n}\frac{2}{\sqrt{n}(l-1)}\Big|\sum_{1\leq i<j\leq l}h_{2ijn}(X_{in},X_{jn})\Big|\bigg)^2\bigg]\\
\leq 2 E\bigg[\max_{1\leq k \leq \lceil \log_2(n) \rceil}\bigg(\max_{l\leq 2^k}\frac{2}{\sqrt{n}(2^{k-1}-1)}\Big|\sum_{1\leq i<j\leq l}h_{2ijn}(X_{in},X_{jn})\Big|\bigg)^2\bigg]\\
\leq \frac{128}{n}\sum_{k=1}^{\lceil \log_2(n) \rceil} E\bigg[\bigg(\max_{l\leq 2^k}\frac{1}{2^{k}}\Big|\sum_{1\leq i<j\leq l}h_{2ijn}(X_{in},X_{jn})\Big|\bigg)^2\bigg]\\
=\frac{1}{n}\sum_{k=1}^{\lceil \log_2(n) \rceil} o(2^k)=\frac{1}{n}o\bigg(\sum_{k=1}^{\lceil \log_2(n) \rceil} 2^k\bigg)=\frac{1}{n}o(2^{\log_2(n)+2})\rightarrow 0.
\end{multline*}
Hence part (\ref{num:1}) of the proposition follows by Chebyshev's inequality. For part (\ref{num:2}), note that $\frac{[\lambda n](n-[\lambda n])}{n^{3/2}}\leq \frac{[\lambda n]}{\sqrt{n}}$ and $\frac{[\lambda n](n-[\lambda n])}{n^{3/2}}\leq \frac{n-[\lambda n]}{\sqrt{n}}$. Using this, the statement can be proved in the same way.
\end{proof}

%
%
%
%
%
%

\subsection{Proofs of Section \ref{sec:null} (Null hypothesis)}
\label{app:null}

In Section \ref{sec:null}, we assume the data to be a stationary process, i.e., all rows of the array $(X_{in})$ are identical and $(X_{i1})_{i\in \Z}$ is a strongly stationary process. We write short $X_i$ for $X_{i1}$. Then the above notation simplifies to
\begin{align} \label{eq:hoeffding.repeat}
		\theta_{ijn} & = \theta =  E\left[h(X_0,\tilde{X}_0)\right],\nonumber \\
		h_{1ijn}(x)  & = h_1(x) =  E\left[h(x,X_0)\right] - \theta, \\
		h_{2ijn}(x,y)& = h_2(x,y) = h(x,y) - E\left[h(X_0,y)\right]-E\left[h(x,X_0)\right] + \theta. \nonumber
\end{align}
This is the same as (\ref{eq:hoeffding}) in the main paper. The processes $U_{n,1}^{(2)}$ and $U_{n,2}^{(2)}$ defined above can then be written shorter as
\begin{align*}
U_{n,1}^{(2)}(l)& =\frac{2}{l(l-1)}\sum_{1\leq i<j\leq l}h_{2}(X_{i},X_{j})-\frac{2}{n(n-1)}\sum_{1\leq i<j\leq n}h_{2}(X_{i},X_{j}),\\
U_{n,2}^{(2)}(l)& =\frac{2}{l(l-1)}\sum_{1\leq i<j\leq l}h_{2}(X_{i},X_{j})-\frac{2}{(n-l)(n-l-1)}\sum_{l+1\leq i<j\leq n}h_{2}(X_{i},X_{j})
\end{align*}
for $1 \le l \le n$.

We are now ready to prove Theorem \ref{th:nh}.

\begin{proof}[Proof of Theorem \ref{th:nh}]
Part (\ref{theo:hypo1}): Using the Hoeffding decomposition (\ref{eq:hoeffding.repeat}), we obtain
\[
\frac{1}{\sqrt{n}} D_n^F([nt])=\frac{1}{\sqrt{n}}\Big(\sum_{i=1}^{[nt]}h_1(X_i)-\frac{[nt]}{n}\sum_{i=1}^nh_1(X_i)\Big)+\frac{[nt]}{\sqrt{n}}U_{n,1}^{(2)}([nt]).
\]
The weak convergence of the first summand follows from Assumption \ref{ass:nh} (\ref{assu:weakcon}) by means of the continuous mapping theorem. The second summand converges uniformly to 0, see Proposition \ref{prop:local:degenerate}.

To show part (\ref{theo:hypo2}), we use the representation
\begin{equation*}
\frac{1}{\sqrt{n}}D_n^L([nt])=\frac{1}{\sqrt{n}}\Big(\sum_{i=1}^{[nt]}h_1(X_i)-\frac{[nt]}{n}\sum_{i=1}^nh_1(X_i)\Big)+\frac{[nt](n-[nt])}{n^{3/2}}U_{n,2}^{(2)}([nt]),
\end{equation*}
so we can bound the difference by 
\[
		\frac{1}{\sqrt{n}}\left|D_n^F([nt])-D_n^L([nt])\right| 
		\leq |\frac{[nt]}{\sqrt{n}}U_{n,1}^{(2)}([nt])|+|\frac{[nt](n-[nt])}{n^{3/2}}U_{n,2}^{(2)}([nt])|.
\]
Another application of Proposition \ref{prop:local:degenerate} completes the proof.
\end{proof}

%
%
%
%

Lemma \ref{lem:dependent_data} gives conditions for Theorem \ref{th:nh} to hold within the $P$-NED framework of Section \ref{app:pned}.

\begin{lemma} \label{lem:dependent_data}
Let $(X_i)_{i \in \Z}$ be a strongly stationary process which is $P$-NED on a process $(Z_i)_{i\in\Z}$ (which is cast into Definition \ref{def:pned} by considering arrays with identical rows). If Assumptions \ref{ass:weak_dependence}, \ref{ass:uniform_moments}, and \ref{ass:variation_condition} are fulfilled, then Assumption \ref{ass:nh} is satisfied.
\end{lemma}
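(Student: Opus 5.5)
To verify Assumption \ref{ass:nh}, we treat its two parts separately: the functional CLT for the linear part, and the maximal moment bound for the degenerate part. In both we rely on the literature on $P$-NED functionals of absolutely regular processes, in particular \citet{dehling:vogel:wendler:wied:2017}, whose assumptions coincide with Assumptions \ref{ass:weak_dependence}--\ref{ass:variation_condition} specialized to arrays with identical rows.

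\emph{Part (1).} We first show that $(h_1(X_i))_{i\in\Z}$ is itself a near epoch dependent sequence on $(Z_i)$, now in $L_2$ rather than merely in probability, with approximation constants decaying fast enough. The approximation is $h_1(f_k(Z_{i-k},\ldots,Z_{i+k}))$. To bound $E|h_1(X_i)-h_1(f_k(\cdot))|^2$, we split according to whether $|X_i-f_k(\cdot)|\le\epsilon$, with $\epsilon=k^{-6}$:
\begin{itemize}
\item On the event $|X_i-f_k(\cdot)|\le\epsilon$, the variation condition (Assumption \ref{ass:variation_condition}) together with Jensen's inequality, applied after integrating out $\tilde X$, gives a bound $L\epsilon$.
\item On the complementary event, Hölder's inequality with the $(2+\delta)$-moments of Assumption \ref{ass:uniform_moments} gives a bound $M^{2/(2+\delta)}(a_k\Phi(\epsilon))^{\delta/(2+\delta)}$.
\end{itemize}
Under Assumption \ref{ass:weak_dependence} this yields $L_2$-NED constants of order $k^{-6}$. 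The classical invariance principle for $L_2$-NED functionals of absolutely regular processes then applies, since the $2+\delta$ moments are finite and $\sum k\beta_k^{\delta/(2+\delta)}<\infty$ (e.g.\ the Borovkova--Burton--Dehling type result). It gives weak convergence of the partial sum process to $\sigma_h W$, with the long-run variance series (\ref{eq:sigma2}) absolutely convergent.

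\emph{Part (2).} We bound the second moments of the degenerate sums. The plan is to first establish a covariance inequality for $P$-NED processes: for $i_1<i_2<i_3<i_4$ with maximal gap $m$,
\[
|E[h_2(X_{i_1},X_{i_2})h_2(X_{i_3},X_{i_4})]|\le C\bigl(\beta_{[m/3]}^{\delta/(2+\delta)}+(a_{[m/3]}\Phi(m^{-6}))^{\delta/(2+\delta)}+m^{-6}\bigr)^{1/2}\text{-type bound}.
\]
This comes from approximating each $X_i$ by $f_{m/3}(\cdot)$, again splitting via the variation condition and Hölder's inequality, and then using Yoshihara's coupling inequality for absolutely regular sequences on the approximations. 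Degeneracy of $h_2$ makes the term vanish after coupling whenever the isolated index is separated from the others. Summing over quadruples in the usual way, with the counting argument where the largest gap dominates and the summability conditions of Assumption \ref{ass:weak_dependence}, yields
\[
E\Bigl(\sum_{1\le i<j\le l}h_2(X_i,X_j)\Bigr)^2=O(l^{2})\quad\text{or at least}\quad O(l^{2+\gamma}),\ \gamma<1.
\]
To pass to the maximum over $l\le n$, we use a dyadic chaining argument (Móricz-type maximal inequality). This loses at most a factor $\log^2 n$, giving $O(n^2\log^2 n)=o(n^3)$. The backward sums $\sum_{l\le i<j\le n}$ are handled identically by stationarity and time reversal of the index set.

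We expect the main obstacle to be the fourth-order covariance bound for degenerate kernels under $P$-NED, since no moments of $X_i$ are available and all approximation errors have to be routed through the variation condition. We will try to import this directly from \citet{dehling:vogel:wendler:wied:2017}, who proved essentially this estimate en route to their invariance principle for U-processes.
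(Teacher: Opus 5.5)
You are taking the same route as the paper, whose proof consists only of citations. Part (1) is quoted from Theorem 2.5 of \citet{dehling:vogel:wendler:wied:2017}, which combines an $L_2$-NED property of $h_1(X_i)$ (cf.\ their Lemma A.2) with the invariance principle of \citet{wooldridge:white:1988}. Part (2) is taken from the proof of their Lemma D.6(i). Your reconstruction of part (1) is what lies behind that citation. Using a Borovkova--Burton--Dehling type invariance principle instead of Wooldridge--White makes no difference. (Strictly, with $\sigma_h^2$ as in \eqref{eq:sigma2}, the limit of $n^{-1/2}\sum_{i\le [nt]}h_1(X_i)$ is $\tfrac12\sigma_h W$. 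This factor-$2$ slip is in Assumption~\ref{ass:nh}(1) itself and is shared by the paper.)

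In your sketch of part (2), however, the displayed covariance inequality is false as stated. Suppose the maximal gap is the middle one, $i_3-i_2$. Coupling across it makes $E[h_2(X_{i_1},X_{i_2})h_2(X_{i_3},X_{i_4})]$ close to $E h_2(X_{i_1},X_{i_2})\,E h_2(X_{i_3},X_{i_4})$, which need not vanish. Degeneracy only gives $E h_2(X,y)=0$ for $X$ independent of $y$, and $X_{i_1},X_{i_2}$ are dependent. For example, take a unit-variance Gaussian AR(1) with lag-one correlation $\rho$ and $h(x,y)=xy$. With $i_2-i_1=i_4-i_3=1$ and $i_3-i_2\to\infty$, the expectation tends to $\rho^2$.

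The correct bound uses the gap that isolates an outer index. For four distinct indices, $X_{i_1}$ and $X_{i_4}$ each enter only one kernel. So the term is controlled by $\beta^{\delta/(2+\delta)}_{[g/3]}$ plus approximation errors, with $g=\max(i_2-i_1,\,i_4-i_3)$. Summing over the unconstrained middle gap still gives $O(n^2)$ under $\sum_k k\beta_k^{\delta/(2+\delta)}<\infty$, and three-index terms are handled the same way.

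Móricz's inequality also needs bounds on the increments $\sum_{a<j\le b}\sum_{i<j}h_2(X_i,X_j)$, of order $(b-a)n$. A bound on $E\bigl(\sum_{1\le i<j\le l}h_2(X_i,X_j)\bigr)^2$ alone is not enough. The same counting provides these increment bounds, as in the proof of Lemma~\ref{lem:local2}.

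With these repairs the argument is complete. Alternatively, import Lemma D.6(i) of \citet{dehling:vogel:wendler:wied:2017} directly, as you suggest and as the paper does.
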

\begin{proof}
Assumption \ref{ass:nh} (\ref{assu:weakcon}) follows by Theorem 2.5 of \citet{dehling:vogel:wendler:wied:2017}, which uses an invariance principle by \citet{wooldridge:white:1988}. 
Assumption \ref{ass:nh} (\ref{assu:h2}) follows along the lines of the proof of Lemma D.6 (i) in \citet[][supplementary material]{dehling:vogel:wendler:wied:2017}. 
%
\end{proof}

%
%
%
%
%
%

\subsection{Proofs of Section \ref{sec:local} (Local alternatives) and further results}
\label{app:local}

%
%

\begin{proof}[Proof of Theorem \ref{th:local}]
The first part of the theorem follows from the proof of Theorem B.3 and Corollary B.4 of \citet{dehling:vogel:wendler:wied:2017}. 

For the second part of the theorem, we apply the Hoeffding decomposition to both statistics:
\begin{multline*}
	\frac{1}{\sqrt{n}} D_n^F([nt])
		= \frac{[nt]}{\sqrt{n}}\frac{2}{[nt]([nt]-1)}\sum_{1\leq i<j\leq [nt]}\theta_{ijn}-\frac{[nt]}{\sqrt{n}}\frac{2}{n(n-1)}\sum_{1\leq i<j\leq n}\theta_{ijn}\\
	+\frac{[nt]}{\sqrt{n}}\frac{2}{[nt]([nt]-1)}\sum_{1\leq i<j\leq [nt]}h_{1ijn}(X_{in})-\frac{[nt]}{\sqrt{n}}\frac{2}{n(n-1)}\sum_{1\leq i<j\leq n}h_{1ijn}(X_{in})\\
	+\frac{[nt]}{\sqrt{n}}\frac{2}{[nt]([nt]-1)}\sum_{1\leq i<j\leq [nt]}h_{1jin}(X_{jn})-\frac{[nt]}{\sqrt{n}}\frac{2}{n(n-1)}\sum_{1\leq i<j\leq n}h_{1jin}(X_{jn})\\
	+\frac{[nt]}{\sqrt{n}}\frac{2}{[nt]([nt]-1)}\sum_{1\leq i<j\leq [nt]}h_{2ijn}(X_{in},X_{jn})-\frac{[nt]}{\sqrt{n}}\frac{2}{n(n-1)}\sum_{1\leq i<j\leq n}h_{2ijn}(X_{in},X_{jn})
\end{multline*}
and 
\begin{multline*}
	\frac{1}{\sqrt{n}} D_n^L([nt]) 
	= \frac{[nt](n-[nt])}{n \sqrt{n}}\frac{2}{[nt]([nt]-1)}\sum_{1\leq i<j\leq [nt]}\theta_{ijn}\\
	-\frac{[nt](n-[nt])}{n \sqrt{n}}\frac{2}{(n-[nt])(n-[nt]-1)}\sum_{[nt]< i<j\leq n}\theta_{ijn}\\
	+\frac{[nt](n-[nt])}{n \sqrt{n}}\frac{2}{[nt]([nt]-1)}\sum_{1\leq i<j\leq [nt]}h_{1ijn}(X_{in})\\
	-\frac{[nt](n-[nt])}{n \sqrt{n}}\frac{2}{(n-[nt])(n-[nt]-1)}\sum_{[nt]< i<j\leq n}h_{1ijn}(X_{in})\\
	+\frac{[nt](n-[nt])}{n \sqrt{n}}\frac{2}{[nt]([nt]-1)}\sum_{1\leq i<j\leq [nt]}h_{1jin}(X_{jn})\\
	-\frac{[nt](n-[nt])}{n \sqrt{n}}\frac{2}{(n-[nt])(n-[nt]-1)}\sum_{[nt]< i<j\leq n}h_{1jin}(X_{jn})\\
	+\frac{[nt](n-[nt])}{n \sqrt{n}}\frac{2}{[nt]([nt]-1)}\sum_{1\leq i<j\leq [nt]}h_{2ijn}(X_{in},X_{jn})\\
	-\frac{[nt](n-[nt])}{n \sqrt{n}}\frac{2}{(n-[nt])(n-[nt]-1)}\sum_{[nt]< i<j\leq n}h_{2ijn}(X_{in},X_{jn})
\end{multline*}
For the first two summands of the decompositions, we can use the fact that $\theta_{ijn}=\theta_F$ for $i,j\leq [n\tau^\star]$, $\theta_{ijn}\approx\theta_F +\Delta/(2\sqrt{n})$ for $i\leq [n\tau^\star]<j$ 
and  $\theta_{ijn}\approx\theta_F +\Delta/\sqrt{n}$ for $i,j \le [n\tau^\star]$ or $[n\tau^\star] < i,j$. Taking the limit $n\rightarrow\infty$ gives the function $\phi$ in Theorem \ref{th:local}.

The last two summands are negligible by Assumption \ref{ass:local2} (1) and Proposition \ref{prop:local:degenerate}.

In the middle four summands, we can replace $h_{1ijn}(X_{in})$ by $h_{1}(X_{i})$ making use of Assumption \ref{ass:local2} (2).
For $D_n^F$, this gives
\begin{multline*}
\frac{[nt]}{\sqrt{n}}\frac{2}{[nt]([nt]-1)}\sum_{1\leq i<j\leq [nt]}h_{1ijn}(X_{in})-\frac{[nt]}{\sqrt{n}}\frac{2}{n(n-1)}\sum_{1\leq i<j\leq n}h_{1ijn}(X_{in})\\
+\frac{[nt]}{\sqrt{n}}\frac{2}{[nt]([nt]-1)}\sum_{1\leq i<j\leq [nt]}h_{1jin}(X_{jn})-\frac{[nt]}{\sqrt{n}}\frac{2}{n(n-1)}\sum_{1\leq i<j\leq n}h_{1jin}(X_{jn})\\
\approx\frac{[nt]}{\sqrt{n}}\frac{2}{[nt]([nt]-1)}\sum_{1\leq i<j\leq [nt]}h_{1}(X_{i})-\frac{[nt]}{\sqrt{n}}\frac{2}{n(n-1)}\sum_{1\leq i<j\leq n}h_{1}(X_{i})\\
+\frac{[nt]}{\sqrt{n}}\frac{2}{[nt]([nt]-1)}\sum_{1\leq i<j\leq [nt]}h_{1}(X_{j})-\frac{[nt]}{\sqrt{n}}\frac{2}{n(n-1)}\sum_{1\leq i<j\leq n}h_{1}(X_{j})\\
=\frac{2}{\sqrt{n}}\sum_{i=1}^{[nt]}h_1(X_i)-\frac{2[nt]}{n\sqrt{n}}\sum_{i=1}^{n}h_1(X_i)=\frac{2(n-[nt])}{n\sqrt{n}}\sum_{i=1}^{[nt]}h_{1}(X_{i})-\frac{2[nt]}{n\sqrt{n}}\sum_{i=[nt]+1}^{n}h_1(X_i).
\end{multline*}
A similar calculation for $D_n^F$ leads to
\begin{multline*}
\frac{[nt](n-[nt])}{n \sqrt{n}}\frac{2}{[nt]([nt]-1)}\sum_{1\leq i<j\leq [nt]}h_{1ijn}(X_{in})\\
-\frac{[nt](n-[nt])}{n \sqrt{n}}\frac{2}{(n-[nt])(n-[nt]-1)}\sum_{[nt]< i<j\leq n}h_{1ijn}(X_{in})\\
+\frac{[nt](n-[nt])}{n \sqrt{n}}\frac{2}{[nt]([nt]-1)}\sum_{1\leq i<j\leq [nt]}h_{1jin}(X_{jn})\\
-\frac{[nt](n-[nt])}{n \sqrt{n}}\frac{2}{(n-[nt])(n-[nt]-1)}\sum_{[nt]< i<j\leq n}h_{1jin}(X_{jn})\displaybreak[0]\\
\shoveleft\approx
\frac{[nt](n-[nt])}{n \sqrt{n}}\frac{2}{[nt]([nt]-1)}\sum_{1\leq i<j\leq [nt]}h_1(X_i)\\
-\frac{[nt](n-[nt])}{n \sqrt{n}}\frac{2}{(n-[nt])(n-[nt]-1)}\sum_{[nt]< i<j\leq n}h_1(X_i)\\
+\frac{[nt](n-[nt])}{n \sqrt{n}}\frac{2}{[nt]([nt]-1)}\sum_{1\leq i<j\leq [nt]}h_1(X_j)\\
-\frac{[nt](n-[nt])}{n \sqrt{n}}\frac{2}{(n-[nt])(n-[nt]-1)}\sum_{[nt]< i<j\leq n}h_1(X_j)\\
=\frac{2(n-[nt])}{n\sqrt{n}}\sum_{i=1}^{[nt]}h_{1}(X_{i})-\frac{2[nt]}{n\sqrt{n}}\sum_{i=[nt]+1}^{n}h_1(X_i)
\end{multline*}
So the middle four summands can be approximated by the same expression, hence 
\begin{equation}
	\sup_{t\in[0,1]}\left| 
	   \frac{1}{\sqrt{n}} D_n^F([nt]) - \frac{1}{\sqrt{n}} D_n^L([nt])
	\right|
	\xrightarrow{n\rightarrow\infty}0,
\end{equation}
which completes the proof of Theorem \ref{th:local}.
\end{proof}

%
%

Next we prove Proposition \ref{prop:densities}, showing that Assumption \ref{ass:local1} is fulfilled if $F$ and $G_n$ have densities $f$ and $g_n$, respectively, and the latter approaches the former at the $(1/\sqrt{n})$ rate in some suitable sense.


\begin{proof}[Proof of Proposition \ref{prop:densities}]
For part (1) of Assumption \ref{ass:local1}, note that 
\begin{align}
\sqrt{n}(\lambda_F-\lambda_{G_n})
&=\iint h(x,y) \sqrt{n}\big(p_{\lambda_0}(x) p_{\lambda_0}(y) - p_{\lambda_0+a/\sqrt{n}}(x) p_{\lambda_0+a/\sqrt{n}}(y) \big) \, dx\, dy  \label{eq:dom-conv-3} \nonumber \\
&=  \iint h(x,y) \sqrt{n} \, \big(p_{\lambda_0}(x) -p_{\lambda_0+a/\sqrt{n}}(x)   \big) p_{\lambda_0}(y) \, dx \, dy \\
&\quad +  \iint h(x,y) \sqrt{n} \big(p_{\lambda_0}(y) -p_{\lambda_0+a/\sqrt{n}}(y)   \big) p_{\lambda_0}(x) \, dx \, dy \nonumber \\
&\quad +  \iint h(x,y) \sqrt{n} \big(p_{\lambda_0}(x) -p_{\lambda_0+a/\sqrt{n}}(x)   \big) \big( p_{\lambda_0+a/\sqrt{n}}(y)- p_{\lambda_0}(y) \big)\, dx \, dy.  \nonumber
\end{align}
For the first two summands on the right, we consider the function 
\[
  F(\lambda) = \iint h(x,y) p_\lambda(x) p_{\lambda_0}(y) \, dx \, dy.
\]
By assumptions \eqref{eq:dom-conv-1} and \eqref{eq:dom-conv-2}, we can interchange derivative and integral and obtain 
\[
  F^\prime(\lambda) = \iint h(x,y) \, \frac{\partial}{\partial \lambda} p_\lambda(x) \, p_{\lambda_0}(y) \, dx\, dy.
\]
Thus, we find for the first summand on the right hand side of \eqref{eq:dom-conv-3}
\[
   \iint h(x,y) \sqrt{n}\, \big(p_{\lambda_0}(x) -p_{\lambda_0+a/\sqrt{n}}(x)   \big) p_{\lambda_0}(y) \, dx \, dy 
\]\[
   = \sqrt{n} \big(F(\lambda_0) -F(\lambda_0+a/\sqrt{n}) \big)  \longrightarrow - a\, F^\prime(\lambda_0).
	\]
The same holds for the second summand. Concerning the third summand on the right hand side of \eqref{eq:dom-conv-3}, note that the integrand 
$h(x,y) \sqrt{n}\,  (p_{\lambda_0}(x) -p_{\lambda_0+a/\sqrt{n}}(x)) ( p_{\lambda_0+a/\sqrt{n}}(y)- p_{\lambda_0}(y) )$ converges to zero, and that it is dominated by $|h(x,y)|\, v(x) \, v(y)$, which is integrable by \eqref{eq:dom-conv-2}. Thus, the integral converges to zero, and hence part (1) of Assumption~2.4 is satisfied.
Regarding part (2) of Assumption~2.4, note that 
\[
  \sqrt{n} \rho_{FG_n} = - \iint h(x,y) \sqrt{n} (p_{\lambda_0} (x)- p_{\lambda_0+a/\sqrt{n}}(x)) (p_{\lambda_0} (y)- p_{\lambda_0+a/\sqrt{n}}(y)) dx\, dy,
\]
which equals the third integral on the right hand side of \eqref{eq:dom-conv-3}, and thus converges to zero. 
This completes the proof of Proposition \ref{prop:densities}.
\end{proof}

Example \ref{ex:scale_family} below demonstrates that the Gaussian scale family, which we consider in the simulations of Section \ref{sec:examples}, fulfills Proposition \ref{prop:densities}.

\begin{example} \label{ex:scale_family}
We consider the Gaussian scale family of densities
\[
  p_\lambda(x) = \frac{1}{\sqrt{2\pi\, \lambda^2}} e^{-\frac{x^2}{2\lambda^2}}, 
\]
where, for once, the Gaussian scale parameter is denoted by $\lambda$ for notational consistency with Proposition \ref{prop:densities}. 
The partial derivative with respect to the parameter $\lambda$ is given by
\begin{align*}
  \frac{\partial}{\partial \lambda} p_\lambda(x) &= -\frac{1}{\lambda^2 \, \sqrt{2\pi}} e^{-\frac{x^2}{2\lambda^2}} + \frac{1}{ \lambda\, \sqrt{2\pi} }e^{-\frac{x^2}{2\lambda^2}} \, \frac{x^2}{\lambda^3} 
  =  \frac{1}{\lambda^2 \, \sqrt{2\pi}} \Big[ \frac{x^2}{\lambda^2}-1  \Big] e^{-\frac{x^2}{2\lambda^2}}.
\end{align*} 
For $\lambda \in [a,b] \subset (0,\infty) $  we thus obtain the upper bound 
\[
 \Big|  \frac{\partial}{\partial \lambda} p_\lambda(x)    \Big| \leq \frac{1}{a^2\, \sqrt{2\pi}} \Big[ \frac{x^2}{a^2}+1   \Big]
 e^{-\frac{x^2}{2b^2}} \leq C\Big\{ x^2 \frac{1}{\sqrt{2\pi\, b^2}} e^{-\frac{x^2}{ b^2}} +  \frac{1}{2\pi \,\sqrt{b^2}} e^{-\frac{x^2}{2b^2}} \Big\},
\]
for some constant $C$. The same upper bound, possibly with a larger constant $C$, holds for the densities $p_\lambda(x)$,
$a\leq \lambda \leq b$. Hence, we can take $v(x)= C\big[x^2 +1\big] \frac{1}{\sqrt{2\pi\, b^2}} e^{-\frac{x^2}{ b^2}}$, and thus the integrability condition \eqref{eq:dom-conv-2} becomes 
\[
  \iint |h(x,y)| (x^2+1)(y^2+1) \frac{1}{2\pi b^2} e^{-\frac{x^2+y^2}{2b^2}} dx\, dy <\infty. 
\]
Note that the integral on the left hand side equals 
\[
   E \big\{ |h(X,Y)|  (X^2+1)(Y^2+1) \big\},
\]
where $X,Y$ are two  independent $N(0,b^2)$-distributed random variables. Thus, the integrability condition  \eqref{eq:dom-conv-2}  translates into the requirement that there exists $b>\lambda_0$ such that 
\[
   E \big\{ |h(X,Y)|  (X^2+1)(Y^2+1) \big\} <\infty,
\]
where  $X,Y$ are two independent $N(0,b^2)$-distributed random variables. 
\end{example}

%
%
%
%
%
%

\subsection{Conditions for Assumptions \ref{ass:local2} to hold (Local alternatives)}
\label{app:assumptions_local}

In this section we give specific conditions on the array $(X_{in})_{i \in \Z, n \in \N}$ and the U-statistic kernel $h$ which ensure that
Assumption \ref{ass:local2} is satisfied. We do so for two situation: for independent observations and within the $P$-NED framework of Section \ref{app:pned} for dependent observations.

We first state a technical lemma which is used in both situations. Recall the construction of the ``local-alternative array'' $(X_{in})_{i \in \Z, n \in \N}$ from the stationary process $(X_i)_{i\in\Z}$ and the row-wise stationary array $(Y_{in})_{i \in \Z, n \in \N}$ as outlined at the beginning of Section \ref{sec:local}.

\begin{lemma}
\label{lem:local1} 
If for some $\epsilon>0$ and $C>0$ and all $k<m\leq n$
\begin{align*}
E\left[\Big(\sum_{i=k+1}^m\big( h_{1i1n}(X_{in})-h_{1}(X_i)\Big)^2\right]&\leq Cn^{-\epsilon}(m-k),\\
E\left[\Big(\sum_{i=k+1}^m\big( h_{1inn}(X_{in})-h_{1}(X_i)\big)\Big)^2\right]&\leq Cn^{-\epsilon}(m-k),\\
E\left[\Big(\sum_{i=k+1}^mi\big( h_{1i1n}(X_{in})-h_{1}(X_i)\big)\Big)^2\right]&\leq Cn^{2-\epsilon}(m-k),\\
E\left[\Big(\sum_{i=k+1}^mi\big( h_{1inn}(X_{in})-h_{1}(X_i)\big)\Big)^2\right]&\leq Cn^{2-\epsilon}(m-k),
\end{align*}
then Assumption \ref{ass:local2} (2) holds.
\end{lemma}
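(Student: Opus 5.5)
This is a maximal inequality obtained from increment moment bounds, so the plan is to apply the Móricz-type maximal inequality (Móricz, Serfling, Stout, 1982). That result says: if $E(\sum_{i=k+1}^m \xi_i)^2 \le g(k,m)$, where $g$ is superadditive ($g(k,l)+g(l,m)\le g(k,m)$), then
\[
E\Big(\max_{k\le n}\Big|\sum_{i=1}^k \xi_i\Big|\Big)^2 \le (\log_2 (2n))^2\, g(0,n).
\]
It holds for arbitrary dependent sequences and needs only the increment bounds. The Chaining/dyadic argument behind it is the same device already used in the proof of Proposition \ref{prop:local:degenerate}.

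First, the two unweighted terms. Set $\xi_i = h_{1i1n}(X_{in})-h_1(X_i)$ (respectively with $h_{1inn}$) and $g(k,m)=Cn^{-\epsilon}(m-k)$. This $g$ is additive, hence superadditive. The inequality gives
\[
\frac1n E\Big(\max_{k\le n}\Big|\sum_{i=1}^k \xi_i\Big|\Big)^2 \le \frac1n (\log_2 (2n))^2 C n^{-\epsilon} n = C n^{-\epsilon}(\log_2 (2n))^2 \to 0 .
\]

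Second, the weighted terms. Take $\xi_i = i\,(h_{1i1n}(X_{in})-h_1(X_i))$, and similarly for $h_{1inn}$, with $g(k,m)=Cn^{2-\epsilon}(m-k)$, again additive. The same inequality gives
\[
\frac1{n^3} E\Big(\max_{k\le n}\Big|\sum_{i=1}^k \xi_i\Big|\Big)^2 \le \frac{C n^{3-\epsilon}(\log_2 (2n))^2}{n^3} = C n^{-\epsilon}(\log_2 (2n))^2 \to 0 .
\]
All four terms of Assumption \ref{ass:local2} (2) therefore vanish.

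The only real point to check is that the maximal inequality applies with no independence or stationarity assumptions; it does, since it is purely combinatorial. If a self-contained argument is preferred, I would reproduce it directly. Write each partial sum $S_k$ as a sum of at most $\log_2(2n)$ dyadic block sums. By Cauchy--Schwarz, $\max_k S_k^2 \le \log_2(2n) \sum_{\text{levels}} \max_{\text{blocks}} (\text{block sum})^2$. At each level, bound the maximum by the sum over the blocks of that level. Additivity of $g$ then makes each level contribute at most $g(0,n)$, so there are $\log_2(2n)$ levels each giving at most $g(0,n)$. The logarithmic factor is harmless against the polynomial gain $n^{-\epsilon}$.
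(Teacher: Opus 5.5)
Your proposal is correct and follows the paper's proof. The paper likewise applies the M\'oricz maximal inequality (Theorem 3 of M\'oricz, 1976) to each of the four sums and gets bounds of order $n^{-\epsilon}\log^2 n \to 0$; since Assumption~\ref{ass:local2}~(2) is a statement about these expectations themselves, your direct conclusion suffices, and the paper's closing appeal to Chebyshev's inequality is superfluous.
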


\begin{proof} By Theorem 3 of \citet{Moricz1976}, there exists a constant $C_1$ such that 

\begin{align*}
E\left[\max_{1\leq k\leq n}\left|\frac{1}{\sqrt{n}}\sum_{i=1}^k\big( h_{1i1n}(X_{in})-h_{1}(X_i) \big)\right|^2\right]&\leq C_1n^{-\epsilon}\log^2(n)n\frac{1}{\sqrt{n}^2}\xrightarrow{n\rightarrow\infty}0,\\
E\left[\max_{1\leq k\leq n}\left|\frac{1}{\sqrt{n}}\sum_{i=1}^k\big( h_{1inn}(X_{in})-h_{1}(X_i) \big)\right|^2\right]&\leq C_1n^{-\epsilon}\log^2(n)n\frac{1}{\sqrt{n}^2}\xrightarrow{n\rightarrow\infty}0,\\
E\left[\max_{1\leq k\leq n}\left|\frac{1}{n^{3/2}}\sum_{i=1}^ki\big( h_{1i1n}(X_{in})-h_{1}(X_i) \big)\right|^2\right]&\leq C_1n^{2-\epsilon}\log^2(n)n\frac{1}{n^3}\xrightarrow{n\rightarrow\infty}0,\\
E\left[\max_{1\leq k\leq n}\left|\frac{1}{n^{3/2}}\sum_{i=1}^ki\big( h_{1inn}(X_{in})-h_{1}(X_i) \big)\right|^2\right]&\leq C_1n^{2-\epsilon}\log^2(n)n\frac{1}{n^3}\xrightarrow{n\rightarrow\infty}0.
\end{align*}
The Chebyshev inequality completes the proof of Lemma \ref{lem:local1}.
\end{proof}

\subsection*{Independent sequences}

Lemma \ref{lem:local2} below states conditions on $(X_{in})_{i \in \Z, n \in \N}$ and $h$ that ensure Assumption \ref{ass:local2} if the elements of $(X_{in})_{i \in \Z, n \in \N}$ are independent. 

\begin{lemma}
\label{lem:local2} 
Let the array $(X_i,X_{in})_{i \in \Z, n \in \N}$ as constructed at the beginning of Section \ref{sec:local} consist of independent random vectors. Further assume there are constants $\epsilon > 0$, $M >0$ such that for all $n \in \N$, $i,j \in \Z$
\be \label{eq:lem:local2:1}
	E\left[\big(h(X_1,X_i)-h(X_1,X_{in})\big)^2\right] \leq M n^{-\epsilon},
\ee 
\be \label{eq:lem:local2:2}
	E\left[\big(h(X_1,X_i)-h(X_{1n},X_{in})\big)^2\right] \leq M n^{-\epsilon}, 
\ee	
\be \label{eq:lem:local2:3}
  E\left[h^2(X_{in},X_{jn})\right]\leq M.
\ee
Then Assumption \ref{ass:local2} is satisfied.
\end{lemma}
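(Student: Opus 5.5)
The plan is to verify the two parts of Assumption \ref{ass:local2} separately. Independence gives orthogonality of the degenerate terms for part (1). For part (2), independence lets us reduce to the sufficient conditions of Lemma \ref{lem:local1}.

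\emph{Part (1).} For fixed $n$ put $S_l=\sum_{1\le i<j\le l}h_{2ijn}(X_{in},X_{jn})$.
\begin{itemize}
\item The terms are degenerate: since $X_{in}$ and $X_{jn}$ are independent and $\tilde X_{jn}\stackrel{d}{=}X_{jn}$, the definition (\ref{theta_ijn}) gives $E[h_{2ijn}(X_{in},X_{jn})\mid X_{in}]=0$, and likewise conditionally on $X_{jn}$.
\item Hence $E[h_{2ijn}h_{2kln}]=0$ unless $\{i,j\}=\{k,l\}$.
\item Moreover $E h_{2ijn}^2(X_{in},X_{jn})\le 16M$ by (\ref{eq:lem:local2:3}), Jensen's inequality and $(a+b+c+d)^2\le 4(a^2+b^2+c^2+d^2)$.
\item Therefore $E S_n^2\le 8Mn^2$.
\end{itemize}
To control the maximum, note that $S_l-S_{l-1}=\sum_{i<l}h_{2iln}(X_{in},X_{ln})$ has conditional mean zero given $\mathcal F_{l-1}=\sigma(X_{1n},\dots,X_{l-1,n})$. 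So $(S_l)_l$ is a martingale, and Doob's $L^2$ inequality yields $E\max_{l\le n}S_l^2\le 4ES_n^2=O(n^2)=o(n^3)$. The backward sums $\sum_{l\le i<j\le n}$ form a martingale in the reversed filtration $\sigma(X_{ln},\dots,X_{nn})$, and the same argument applies.

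\emph{Part (2).} We check the four hypotheses of Lemma \ref{lem:local1}. Write $\xi_i=h_{1i1n}(X_{in})-h_1(X_i)$. By construction $X_{1n}=X_1$, so $h_{1i1n}(x)=E h(x,X_1)-\theta_{i1n}$ and $h_1(x)=Eh(x,X_1)-\theta$.
\begin{itemize}
\item Each $\xi_i$ is centered, and the $\xi_i$ are independent across $i$, because the pairs $(X_i,X_{in})$ are independent.
\item With $\tilde X_1$ an independent copy, Jensen's inequality in the inner expectation gives $E\xi_i^2\le \Var\big(h(X_{in},\tilde X_1)-h(X_i,\tilde X_1)\big)\le Mn^{-\epsilon}$ by (\ref{eq:lem:local2:1}).
\item Consequently $E(\sum_{i=k+1}^m\xi_i)^2=\sum_{i=k+1}^m E\xi_i^2\le Mn^{-\epsilon}(m-k)$.
\item With the weights $i\le n$, $E(\sum_{i=k+1}^m i\,\xi_i)^2\le Mn^{2-\epsilon}(m-k)$.
\end{itemize}

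For $\eta_i=h_{1inn}(X_{in})-h_1(X_i)$ the inner variable is $X_{nn}$ instead of $X_1$. Here we couple $(\tilde X_n,\tilde X_{nn})$, an independent copy of the pair, and write $\eta_i$ as the conditional expectation of $h(X_{in},\tilde X_{nn})-h(X_i,\tilde X_n)$ minus its mean. Jensen's inequality then bounds $E\eta_i^2$ by $E[(h(X_i,\tilde X_n)-h(X_{in},\tilde X_{nn}))^2]$. By symmetry of $h$ and the distributional invariance of the construction, this is at most $Mn^{-\epsilon}$ via (\ref{eq:lem:local2:2}). The same independence computation then gives the remaining two bounds, and Lemma \ref{lem:local1} yields Assumption \ref{ass:local2} (2).

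The main obstacle is this last identification. One must check that the coupled pair appearing in $\eta_i$ has the same joint law as the pair in (\ref{eq:lem:local2:2}), with indices shifted from $(1,i)$ to $(n,i)$, including the case where both indices lie after the change point. I expect this to follow from stationarity of $(X_i,X_{in})$ in $i$ and independence, but the bookkeeping there must be done carefully.
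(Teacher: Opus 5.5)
Your Part (1) is correct but takes a different route from the paper. The paper bounds the second moments of the blocks $\sum_{n_1<j\le n_2}\sum_{i<j}h_{2ijn}$ by $C(n_2-n_1)n$ and applies M\'oricz's maximal inequality, which gives $O(n^2\log^2 n)$. You instead observe that the forward and backward degenerate sums are martingales and apply Doob's $L^2$ inequality, which gives the sharper bound $O(n^2)$. Your treatment of $\xi_i$ in Part (2) is the same as the paper's.

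The step for $\eta_i=h_{1inn}(X_{in})-h_1(X_i)$ is a genuine gap, and the resolution you anticipate does not work.

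\emph{Why stationarity fails.} The pair $(X_i,X_{in})$ is not stationary in $i$: it equals $(X_i,X_i)$ for $i\le[n\tau^\star]$ and $(X_i,Y_{in})$ afterwards. In particular $X_{1n}=X_1$ once $n\ge 1/\tau^\star$. So (\ref{eq:lem:local2:2}) with the literal first index $1$ coincides with (\ref{eq:lem:local2:1}). It gives no control of $x\mapsto E h(x,X_{nn})-E h(x,X_n)$ in $L^2(G_n)$, and that is exactly what $E\eta_i^2$ for $i>[n\tau^\star]$ requires.

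\emph{The literal hypotheses do not suffice.} Take $F$ uniform on $[0,1]$, let $Y_{in}=X_i$ or $Y_{in}=2$ with probability $1/2$ each, independently over $i$, and put $h(x,y)=2\cdot\mathbf{1}\{x=y=2\}$.
\begin{itemize}
\item Since $h(X_1,\cdot)=0$ a.s., the left-hand sides of (\ref{eq:lem:local2:1}) and (\ref{eq:lem:local2:2}) vanish for $n\ge 1/\tau^\star$; enlarge $M$ to cover the finitely many smaller $n$.
\item (\ref{eq:lem:local2:3}) holds with $M=2$.
\item Yet $h_1\equiv 0$, $\eta_i=0$ for $i\le[n\tau^\star]$, and $\eta_i=\mathbf{1}\{Y_{in}=2\}-1/2$ for $i>[n\tau^\star]$.
\item Hence $n^{-1}E(\sum_{i\le n}\eta_i)^2\to(1-\tau^\star)/4$, and Assumption \ref{ass:local2} (2) fails.
\end{itemize}

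\emph{How to close the gap.} The paper's phrase ``the second assumption follows in the same way'' tacitly uses (\ref{eq:lem:local2:1})--(\ref{eq:lem:local2:2}) with a general first index $j$ in place of $1$, as the quantifier ``for all $i,j$'' in the statement indicates. Under that reading the step is short and needs no stationarity. The variable $\eta_i$ depends on your auxiliary pair $(\tilde X_n,\tilde X_{nn})$ only through its marginals $F$ and $G_n$. So you may replace that pair by $(X_j,X_{jn})$ for some $j>[n\tau^\star]$ with $j\neq i$, which is independent of $(X_i,X_{in})$. Jensen's inequality then gives $E\eta_i^2\le E[(h(X_j,X_i)-h(X_{jn},X_{in}))^2]\le Mn^{-\epsilon}$, and your independence computation finishes the argument.
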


\begin{proof} 
We show first that (\ref{eq:lem:local2:1}) and (\ref{eq:lem:local2:2}) together imply the conditions of Lemma \ref{lem:local1} and hence Assumption \ref{ass:local2} (2) (part 1). We then show that (\ref{eq:lem:local2:3}) implies Assumption \ref{ass:local2} (1) (part 2). 

Part 1: Recall that $h_{1i1n}(x)  =  E\left[h(x,X_1)\right] - E\left[h(X_{in},X_1)\right]$ and that $h_{1}(x)  =  E\left[h(x,X_1)\right] - E\left[h(X_2,X_1)\right]$, so by construction, $E[h_{1}(X_i)]=E[h_{1i1n}(X_{in})]=0$. Using the Jensen inequality for conditional expectation and the law of iterated expectations, we get
\begin{multline*}
\Var\left[h_{1}(X_i)-h_{1i1n}(X_{in})\right] =\Var\Big[E[h(X_i,X_1)|X_i,X_{in}]-E[h(X_{in},X_1)|X_i,X_{in}]\Big]\\
\leq E\left[\big(E[h(X_i,X_1)|X_i,X_{in}]-E[h(X_{in},X_1)|X_i,X_{in}]\big)^2\right]\\
\leq E\left[E\Big[\big(h(X_i,X_1)-h(X_{in},X_1)\big)^2\big|X_i,X_{in}\Big]\right]  \\
=E\left[\big(h(X_i,X_1)-h(X_{in},X_1)\big)^2\right]\leq Mn^{-\epsilon}.
\end{multline*}
By the independence of the random variables, we can conclude that
\begin{multline*}
E\left[\Big(\sum_{i=k+1}^m\big( h_{1i1n}(X_{in})-h_{1}(X_i)\big)\Big)^2\right]=\Var\left[\sum_{i=k+1}^m\big( h_{1i1n}(X_{in})-h_{1}(X_i)\big)\right]\\
=\sum_{i=k+1}^m\Var\left[ h_{1i1n}(X_{in})-h_{1}(X_i)\right]\leq Mn^{-\epsilon}(m-k)
\end{multline*}
 the first assumption of Lemma \ref{lem:local1} holds and the second assumption follows in the same way.  For the third assumption, 
\begin{multline*}
E\left[\Big(\sum_{i=k+1}^mi\big( h_{1i1n}(X_{in})-h_{1}(X_i)\big)\Big)^2\right] = \Var\left[\sum_{i=k+1}^m i\big( h_{1i1n}(X_{in})-h_{1}(X_i)\big)\right]\\
=\sum_{i=k+1}^m i^2 \Var\left[ h_{1i1n}(X_{in})-h_{1}(X_i)\right]\leq n^2\sum_{i=k+1}^m  \Var\left[ h_{1i1n}(X_{in})-h_{1}(X_i)\right]\\
\leq Mn^{2-\epsilon}(m-k)
\end{multline*}
and we can apply the same arguments for the fourth assumption to complete part 1.

Part 2: Assumption (\ref{eq:lem:local2:3}) implies that for all $n\in\N$, $i,j\in\{1,\ldots,n\}$
\begin{equation*}
E\left[h^2_{2ijn}(X_{in},X_{jn})\right]\leq 9 M.
\end{equation*}
Furthermore, for all $i',j'\in\{1,\ldots,n\}$ with $(i,j) \neq (i',j') $ we have
\begin{equation*}
\Cov(h_{2ijn}(X_{in},X_{jn}),h_{2i'j'n}(X_{i'n},X_{j'n}))=0,
\end{equation*}
see \citet[][p.~30]{lee:1990}.
By counting the summands, we get that for all $1\leq n_1\leq n_2\leq n$
\begin{align*}
E\left[\Big(\sum_{\substack{1\leq i<j\\n_1<j\leq n_2}}h_{2ijn}(X_{in},X_{jn})\Big)^2\right]&\leq C(n_2-n_1)n, \\
E\left[\Big(\sum_{\substack{ i<j\leq n\\n_1<i\leq n_2}}h_{2ijn}(X_{in},X_{jn})\Big)^2\right]&\leq C(n_2-n_1)n.
\end{align*}
Setting $Y_{n_1 n}=\sum_{1\leq i< n_1}h_{2in_1n}(X_{in},X_{n_1 n})$ and $Z_{n_1 n}=\sum_{n_1< j\leq n}h_{2 n_1 j n}(X_{n_1 n},X_{jn})$ and applying Theorem 3 of \cite{Moricz1976}, we obtain
\begin{align*}
E\bigg[\bigg(\max_{l\leq n}\Big|\sum_{1\leq i<j\leq l}h_{2ijn}(X_{in},X_{jn})\Big|\bigg)^2\bigg]&=E\bigg[\bigg(\max_{l\leq n}\Big|\sum_{n_1=1}^lY_{n_1n}\Big|\bigg)^2\bigg]\\
&\leq Cn^2(\log_2(n)+1)^2=o(n^3),\\
E\bigg[\bigg(\max_{l\leq n}\Big|\sum_{l\leq i<j\leq n}h_{2ijn}(X_{in},X_{jn})\Big|\bigg)^2\bigg]&=E\bigg[\bigg(\max_{l\leq n}\Big|\sum_{n_1=l}^nZ_{n_1n}\Big|\bigg)^2\bigg]\\
&\leq Cn^2(\log_2(n)+1)^2=o(n^3).
\end{align*}
Thus Assumption \ref{ass:local2} (1) is satisfied, which completes the proof of Lemma \ref{lem:local2}.
\end{proof}

%

\subsection*{Dependent sequences}

Analogous to Lemma \ref{lem:local2}, we next state conditions on $(X_{in})_{i \in \Z, n \in \N}$ and $h$ that ensure Assumption \ref{ass:local2} if $(X_{in})_{i \in \Z, n \in \N}$ is near epoch dependent as described in Section \ref{app:pned}. 

\begin{lemma} 
\label{lem:local3} 
Let the array $(X_{in})_{i \in \Z, n \in \N}$ from Section \ref{sec:local} and the kernel $h$ satisfy Assumptions \ref{ass:weak_dependence}, \ref{ass:uniform_moments}, and \ref{ass:variation_condition}. 
Let $(\tilde{X}_i)_{i\in\Z}$ and $(\tilde{X}_{in})_{i\in\Z, n \in \N}$ be independent copies of $(X_i)_{i\in\Z}$ and $(X_{in})_{i\in\Z, n \in \N}$, respectively. 
Further assume there are constants $\epsilon > 0$, $M >0$ such that for all $n \in \N$
\begin{align}
\label{eq:lem:local3:1}
E\left[\big(h(X_1,X'_i)-h(X_1,X'_{in})\big)^{2+\delta}\right]&\leq M n^{-\epsilon},\\
\label{eq:lem:local3:2}
E\left[\big(h(X_1,X'_i)-h(X_{1n},X'_{in})\big)^{2+\delta}\right]&\leq M n^{-\epsilon}.
\end{align}
Then Assumption \ref{ass:local2} is met.
\end{lemma}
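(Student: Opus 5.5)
The plan is to verify both parts of Assumption \ref{ass:local2} separately. For part (2) I use Lemma \ref{lem:local1}. For part (1) I adapt the degenerate-kernel moment bounds that yield Lemma \ref{lem:dependent_data} in the $P$-NED framework, i.e.\ the argument along the lines of Lemma D.6 (i) of \citet{dehling:vogel:wendler:wied:2017}. Assumptions \ref{ass:weak_dependence}, \ref{ass:uniform_moments} and \ref{ass:variation_condition} are formulated for arrays and hold uniformly in $n$, $i$, $j$. Hence the bounds used there carry over row by row with constants independent of $n$. Concretely, one obtains $E[(\sum_{n_1<j\le n_2}\sum_{i<j} h_{2ijn}(X_{in},X_{jn}))^2]\le C(n_2-n_1)n$ up to logarithmic factors. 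Combined with the maximal inequality of \citet{Moricz1976} (Theorem 3), as in the proof of Lemma \ref{lem:local2}, this gives $O(n^2\log^2 n)=o(n^3)$ for both maxima.

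For part (2), I check the increment bounds of Lemma \ref{lem:local1}. Set $\xi_{in}=h_{1i1n}(X_{in})-h_1(X_i)$; these are centered. First, I bound the $(2+\delta)$-th moment of $\xi_{in}$. Since $h_{1i1n}(x)=E[h(x,X_{1n})]-\theta_{i1n}$ and $X_{1n}=X_1$ for $1\le[n\tau^\star]$, I can write $\xi_{in}$ as a conditional expectation over an independent copy of $h(X_{in},\tilde X_1)-h(X_i,\tilde X_1)$, minus its mean. Jensen's inequality and (\ref{eq:lem:local3:1}) then give $\|\xi_{in}\|_{2+\delta}^{2+\delta}\le C n^{-\epsilon}$. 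The $h_{1inn}$ terms are handled the same way using (\ref{eq:lem:local3:2}).

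The main obstacle is the covariance bound $|\Cov(\xi_{in},\xi_{jn})|$ for dependent data. The array $\xi_{in}$ is a function of a $P$-NED array, and the function $x\mapsto E h(x,\tilde X)$ inherits the variation condition. So, as in the proof of Theorem 2.5 / Lemma D.x of \citet{dehling:vogel:wendler:wied:2017}, $\xi_{in}$ is $L_2$-NED on the absolutely regular array with approximation constants $a_k\Phi(k^{-6})$-type. A covariance inequality for functionals of absolutely regular processes (Davydov/Yoshihara type with exponent $\delta/(2+\delta)$) then gives
\[
|\Cov(\xi_{in},\xi_{jn})|\le C\|\xi_{in}\|_{2+\delta}\|\xi_{jn}\|_{2+\delta}\,\beta_{\lfloor |i-j|/3\rfloor}^{\delta/(2+\delta)}+C\,\text{(NED remainder at lag }|i-j|/3).
\]
The factor $\|\xi\|_{2+\delta}^2$ contributes $n^{-2\epsilon/(2+\delta)}$. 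To keep a power of $n^{-\epsilon'}$ on the NED remainder, I would interpolate: bound it by the minimum of the NED term and $\|\xi\|^2_{2+\delta}$, taking a geometric mean. Summability $\sum k\beta_k^{\delta/(2+\delta)}<\infty$ and the decay of $a_k\Phi(k^{-6})$ then give $E(\sum_{i=k+1}^m\xi_{in})^2\le Cn^{-\epsilon'}(m-k)$. With weights $i\le n$, the same covariance bound gives $E(\sum_{i=k+1}^m i\,\xi_{in})^2\le Cn^{2-\epsilon'}(m-k)$. Lemma \ref{lem:local1}, applied with $\epsilon'$ in place of $\epsilon$, then yields Assumption \ref{ass:local2} (2).
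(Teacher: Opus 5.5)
Your proposal follows essentially the same route as the paper. For Assumption \ref{ass:local2} (2) you bound the $(2+\delta)$-th moments of $h_{1i1n}(X_{in})-h_1(X_i)$ and $h_{1inn}(X_{in})-h_1(X_i)$ via Jensen and (\ref{eq:lem:local3:1})--(\ref{eq:lem:local3:2}), use their $L_2$-NED property to get increment bounds of order $n^{-\epsilon'}(m-k)$ and $n^{2-\epsilon'}(m-k)$, and conclude with Lemma \ref{lem:local1}; part (1) is deferred to Lemma D.6 (i) of \citet{dehling:vogel:wendler:wied:2017}. The only difference is that you spell out the covariance-inequality-plus-interpolation step, accepting a smaller exponent $\epsilon'$, where the paper appeals to a careful reading of the proof of Lemma 2.23 of \citet{borovkova:burton:dehling:2001}, which rests on the same mechanism.
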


\begin{proof} 
We show first that Assumptions \ref{ass:weak_dependence} and \ref{ass:variation_condition} together with (\ref{eq:lem:local3:1}) and (\ref{eq:lem:local3:2}) imply the conditions of Lemma \ref{lem:local1} and hence Assumption \ref{ass:local2} (2) (part 1). 
We then show that Assumptions \ref{ass:weak_dependence}, \ref{ass:uniform_moments}, and \ref{ass:variation_condition} imply Assumption \ref{ass:local2} (1) (part 2). 

Part 1: Following the proof of Lemma \ref{lem:local2}, it can be shown that
\begin{align*}
E\left[\big( h_1(X_i)-h_{1i1n}(X_{in})\big)^2\right]& \leq M n^{-\epsilon},\\
E\left[\big( h_1(X_i)-h_{1inn}(X_{in})\big)^2\right]& \leq M n^{-\epsilon}.
\end{align*}
By Assumption \ref{ass:weak_dependence} and Lemma A.2 of \citet{dehling:vogel:wendler:wied:2017}, the arrays
\[
  (h_{1}(X_i)-h_{1inn}(X_i))_{i\in\Z, n \in \N}, \ \ 
  (h_{1}(X_i)-h_{1}(X_{in}))_{i\in\Z, n \in \N}, \ \ 
	(h_{1}(X_i)-h_{1inn}(X_{in}))_{i\in\Z, n \in \N}
\] 
are row-wise $L_2$-NED and thus $L_1$-NED with approximating constants $a_{k}=O(k^{-3\delta/(1+\delta)})$. From the proof of Lemma 2.23 of \citet{borovkova:burton:dehling:2001}, it follows that
\begin{multline*}
E\left[\Big(\sum_{i=k+1}^m\big( h_1(X_i)-h_{1inn}(X_i)\big)\Big)^{2+\delta}\right]\\
\leq C(m-k)\Big(E\big[\big|h_1(X_i)-h_{1inn}(X_i)\big|^{2+\delta}\big]^{\frac{2}{2+\delta}}+E\big[\big|h_1(X_i)-h_{1inn}(X_i)\big|^{2+\delta}\big]^{\frac{1}{1+\delta}}\Big)\\ \leq Cn^{-\epsilon}(m-k)
\end{multline*}
The other parts of the conditions of Lemma \ref{lem:local1} are shown likewise.

Part 2: 
This can be shown along the lines of the proof of Lemma D.6 (i) in \citet[][supplementary material]{dehling:vogel:wendler:wied:2017}. 
The proof of Lemma \ref{lem:local3} is complete.
\end{proof}

%


\subsection{Proofs of Section \ref{sec:fixed} (Fixed alternatives)}

\begin{proof}[Proof of Theorem \ref{th:fixed}] We prove part (1) concerning $D_{n}^F$. Part (2) for $D_{n}^L$ is proved analogously.
Using the Hoeffding decomposition, we obtain
\begin{multline} 
\label{eq:fixed1}
\frac{1}{n} D_{n}^F([nt]) =
 \frac{[nt]}{n}\frac{2}{[nt]([nt]-1)}\sum_{1\leq i<j\leq [nt]}\theta_{ijn} -     
						\frac{[nt]}{n}\frac{2}{n(n-1)}\sum_{1\leq i<j\leq n}\theta_{ijn}\\
+\frac{[nt]}{n}\frac{2}{[nt]([nt]-1)}\sum_{1\leq i<j\leq [nt]}h_{1ijn}(X_{in}) -
						\frac{[nt]}{n} \frac{2}{n(n-1)}\sum_{1\leq i<j\leq n}h_{1ijn}(X_{in})\\
+\frac{[nt]}{n}\frac{2}{[nt]([nt]-1)}\sum_{1\leq i<j\leq [nt]}h_{1jin}(X_{jn}) - 
						\frac{[nt]}{n}\frac{2}{n(n-1)}\sum_{1\leq i<j\leq n}h_{1jin}(X_{jn})\\
+\frac{[nt]}{n}\frac{2}{[nt]([nt]-1)}\sum_{1\leq i<j\leq [nt]}h_{2ijn}(X_{in},X_{jn}) - 
            \frac{[nt]}{n}\frac{2}{n(n-1)}\sum_{1\leq i<j\leq n}h_{2ijn}(X_{in},X_{jn})
\end{multline}
For the first row of (\ref{eq:fixed1}), note that
\begin{equation*}
\theta_{ijn}=\begin{cases}
\theta_F \ \ &\text{ if } i,j\leq [n\tau^\star], \\
\theta_{FG}=\frac{\theta_F+\theta_G}{2}+\rho_{FG} \ \ &\text{ if } i\leq [n\tau^\star]<j, \\
\theta_G \ \ &\text{ if } i,j> [n\tau^\star].
\end{cases}
\end{equation*}
So for $t\leq \tau^\star$ we have
\begin{multline*}
\frac{[nt]}{n}\frac{2}{[nt]([nt]-1)}\sum_{1\leq i<j\leq [nt]}\theta_{ijn}-\frac{[nt]}{n}\frac{2}{n(n-1)}\sum_{1\leq i<j\leq n}\theta_{ijn}\\
=\frac{[nt]}{n}\theta_F -\frac{[nt]}{n}\frac{2}{n(n-1)}\bigg(\frac{[n\tau^\star]([n\tau^\star]-1)}{2}\theta_F\\
+[n\tau^\star](n-[n\tau^\star])(\theta_F/2+\theta_G/2+\rho_{FG})+\frac{(n-[n\tau^\star])(n-[n\tau^\star]-1)}{2}\theta_G\bigg),
\end{multline*}
which, for $n \to \infty$, converges to
\begin{multline*}
	t\theta_F-\tau^{\star 2}\theta_F
	-t\tau^\star(1-\tau^\star)(\theta_F+\theta_G)-2t\tau^\star(1-\tau^\star)\rho_{FG}-t(1-\tau^\star)^2\theta_G\\
	=t(1-\tau^\star)(\theta_F-\theta_G)-2t\tau^\star(1-\tau^\star)\rho_{FG}=\Psi_1(t).
\end{multline*}
For $t>\tau^\star$, similar calculation also lead to the limit $\Psi_1(t)$. 
The last row of (\ref{eq:fixed1}) converges to 0 uniformly by our Assumption \ref{ass:fixed} (\ref{enum:fixed1}). 
For the two middle rows of (\ref{eq:fixed1}), note that $h_{1ijn}=h_{1i1n}$ for $j<[n\tau^\star]$ and $h_{1ijn}=h_{1inn}$ for $j\geq[n\tau^\star]$. For $[nt]\geq[n\tau^\star]$. 
This leads to
\begin{multline*}
\sum_{1\leq i<j\leq [nt]}h_{1ijn}(X_{in})=\sum_{i=1}^{[n\tau^\star]-1}\left(([n\tau^\star]-i)h_{1i1n}(X_{in})+([nt]-[n\tau^\star])h_{1inn}(X_{in})\right)\\
+\sum_{i=[n\tau^\star]}^{[nt]}\left(([nt]-i)h_{1inn}(X_{in})\right).
\end{multline*}
By the triangle inequality, we have
\begin{multline*}
	\left|\frac{[nt]}{n}\frac{2}{[nt]([nt]-1)}\sum_{1\leq i<j\leq [nt]}h_{1ijn}(X_{in})\right|\\
	\leq \frac{4}{n}\left|\sum_{i=1}^{[n\tau^\star]-1}h_{1i1n}(X_{in})\right|+\frac{4}{n[nt]}\left|\sum_{i=1}^{[n\tau^\star]-1}ih_{1i1n}(X_{in})\right|+\frac{4}{n}\left|\sum_{i=1}^{[n\tau^\star]-1}h_{1inn}(X_{in})\right|\\
	+\frac{4}{n}\left|\sum_{i=1}^{[nt]}h_{1inn}(X_{in})\right| + \frac{4}{n}\left|\sum_{i=1}^{[n\tau^\star]-1}h_{1inn}(X_{in})\right|\\
	+\frac{4}{n[nt]}\left|\sum_{i=1}^{[nt]}ih_{1inn}(X_{in})\right|+\frac{4}{n[nt]}\left|\sum_{i=1}^{[n\tau^\star]-1}ih_{1inn}(X_{in})\right|.
\end{multline*}
We conclude that this summand converges to 0 uniformly in probability, since its second moment converges to 0 
by Assumptions \ref{ass:fixed} (\ref{enum:fixed2}). Treating the other summands in the same way, the statement of the theorem follows.

The proof of part (2) of \ref{th:fixed} is similar and omitted here.
\end{proof}

\begin{proof}[Proof of Lemma \ref{lemma:linear}]
Since, $\rho_{FG} = 0$ particularly for all point mass distributions $F$ and $G$, we have
	$h(x,y) = \frac{1}{2}\left\{ h(x,x) + h(y,y) \right\}$
for all $x, y \in \R^p$, and hence for all $x_1, \ldots, x_n \in \R^p$, 
\[
	U_n = \frac{1}{\binom{n}{2}} \sum_{1\leq i<j\leq n} h(x_i,x_j) 
	= \frac{1}{\binom{n}{2}} \sum_{1\leq i<j\leq n} \frac{1}{2} \left\{ h(x_i,x_i) + h(x_j,x_j) \right\}
\]\[
	= \frac{1}{n(n-1)} \left( \sum_{i=1}^n (n-i) h(x_i,x_i) + \sum_{j=1}^n (j-1) h(x_j,x_j) \right)
	= \frac{1}{n} \sum_{i=1}^n h(x_i, x_i),
\]
which completes the proof.
\end{proof}

%
%
%
%
%
%

\subsection{Conditions for Assumptions \ref{ass:fixed} to hold (Fixed alternatives)}
\label{app:assumptions_fixed}

Analogous to Section \ref{app:assumptions_local} we give specific conditions on the array $(X_{in})_{i \in \Z, n \in \N}$ and the U-statistic kernel $h$ which ensure that Assumption \ref{ass:fixed} is satisfied. Again, we consider two scenarios: independent observations (Lemma \ref{lem:fixed:independent}) and dependent observations (Lemma \ref{lem:fixed:pned}).

Recall the construction of the ``fixed-alternative array'' $(X_{in})_{i \in \Z, n \in \N}$ from the stationary processes $(X_i)_{i\in\Z}$ 
and $(Y_i)_{i \in \Z}$ from Section \ref{sec:fixed}.

\subsection*{Independent sequences}

\begin{lemma} \label{lem:fixed:independent}
Assume that $(X_{in})_{n\in\N, 1\leq i\leq n}$ is row-wise independent and that there exists a constant $M<\infty$ such that for all $n\in\N$, $i,j\in\{1,\ldots,n\}$
\begin{equation*}
E\left[h^2(X_{in},X_{jn})\right]\leq M.
\end{equation*}
Then Assumption \ref{ass:fixed} is satisfied.
\end{lemma}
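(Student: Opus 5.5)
Under row-wise independence and the uniform bound $E[h^2(X_{in},X_{jn})]\le M$, I would verify the two parts of Assumption~\ref{ass:fixed} separately. The approach mirrors the proof of Lemma~\ref{lem:local2}: exploit orthogonality to get second-moment bounds for increments, then pass to maxima with Theorem~3 of \citet{Moricz1976} (or directly with Kolmogorov/Doob maximal inequalities).

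\emph{Degenerate part (1).} First, $E[h_{2ijn}^2(X_{in},X_{jn})]\le 9M$ follows from Jensen's inequality applied to the conditional expectations appearing in $h_{2ijn}$. By row-wise independence, the summands $h_{2ijn}(X_{in},X_{jn})$ for distinct pairs $i<j$ are uncorrelated. This holds because $E[h_{2ijn}(x,X_{jn})]=0$ and $E[h_{2ijn}(X_{in},y)]=0$, which is the argument of \citet[p.~30]{lee:1990}, now with non-identical marginals. Hence, exactly as in Lemma~\ref{lem:local2}, the blocks
\[
\sum_{1\le i<j,\, n_1<j\le n_2}h_{2ijn}(X_{in},X_{jn})
\]
have second moment at most $C(n_2-n_1)n$. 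Móricz's theorem then gives
\[
E\Big[\max_{l\le n}\Big|\sum_{1\le i<j\le l}h_{2ijn}(X_{in},X_{jn})\Big|^2\Big]\le Cn^2\log^2 n ,
\]
and the same bound holds for the reversed sums. Since $n^2\log^2 n=o(n^4)$, part (1) follows; in fact even $o(n^3)$ holds.

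\emph{Linear part (2).} For fixed $j$, the variables $h_{1ijn}(X_{in})$, $i=1,\dots,n$, are independent and centred, because $E h_{1ijn}(X_{in})=\theta_{ijn}-\theta_{ijn}=0$ using an independent copy. Their variance is bounded by $E[h^2(X_{in},\tilde X_{jn})]$. To handle the independent copy, I would note that $X_{in}$ and $\tilde X_{jn}$ have the same joint law as $X_{in}$ and $X_{j'n}$ for a suitable index $j'\ne i$ in the same regime of the change; this is always possible for $n$ large. Consequently $E h_{1ijn}^2(X_{in})\le M$.

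For the unweighted sums in (\ref{eq:fixed3})–(\ref{eq:fixed6}), Kolmogorov's (or Doob's) maximal inequality gives
\[
E\Big[\max_{l\le n}\Big|\sum_{i\le l}h_{1i1n}(X_{in})\Big|^2\Big]\le 4nM=o(n^2).
\]
The tail sums $\sum_{i=l+1}^n$ are differences of two partial sums, so they are bounded by twice the maximum and satisfy the same estimate.

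For the weighted averages $\frac1l\sum_{i\le l}i\,h_{1i1n}(X_{in})$, I would use the trivial bound $\frac1l\big|\sum_{i\le l}i\,\xi_i\big|$ together with Abel summation:
\[
\sum_{i\le l}i\,\xi_i=lS_l-\sum_{k<l}S_k ,
\]
where $S_k$ denotes the partial sums. This yields
\[
\max_l \frac1l\Big|\sum_{i\le l}i\,\xi_i\Big|\le 2\max_k|S_k| ,
\]
which reduces the weighted case to the previous bound, again $O(n)$. The weights $(n-i+1)/(n-l)$ are handled the same way by summing in reverse order.

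The main obstacle is a small bookkeeping point: checking that the independent-copy definitions of $\theta_{ijn}$ and $h_{1ijn}$ still give centred, uncorrelated summands with bounded variance across the change-point. That is where the moment bound via $E[h^2(X_{in},X_{jn})]\le M$ must be transferred to independent pairs. Everything else is a direct adaptation of Lemma~\ref{lem:local2}.
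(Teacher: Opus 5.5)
Your proof is correct. Its skeleton matches the paper's: second-moment bounds from independence and orthogonality, followed by a maximal inequality. The tools differ in ways worth noting.

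\emph{Part (1).} The paper simply cites Lemma D.6 (i) of \citet{dehling:vogel:wendler:wied:2017}. You instead give the self-contained orthogonality-plus-M\'oricz argument from Lemma~\ref{lem:local2}.

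\emph{Part (2).} The paper bounds increments, $E[(\sum_{i=a}^b h_{1ijn}(X_{in}))^2]\le M(b-a)$. For the weighted averages $S_a=\frac1a\sum_{i\le a} i\,h_{1i1n}(X_{in})$ it separately computes $E[(S_b-S_a)^2]\le 4M(b-a)$. It then invokes the Rademacher--Menshov-type Theorem 2.4.1 of \citet{stout:1974}, which gives $O(n\log^2 n)$. You use Doob's inequality, which exploits the martingale structure and gives $O(n)$. You also use the pathwise Abel-summation bound $\max_l \frac1l|\sum_{i\le l} i\xi_i|\le 2\max_k|S_k|$, which removes the need for any separate computation for the weighted sums.

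\emph{What each route buys.} Stout's inequality needs only increment bounds, so the paper's template carries over to the dependent case (Lemma~\ref{lem:fixed:pned}), where there is no martingale structure. Your Doob step would not transfer there. Your Abel reduction would transfer, since it is pathwise, and it would even make the weighted version of the Borovkova--Burton--Dehling bound used in that lemma unnecessary.

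\emph{The diagonal case.} You also handle the case $i=j$ in $E[h_{1ijn}^2(X_{in})]\le E[h^2(X_{in},\tilde X_{jn})]$ explicitly. The hypothesis controls $E[h^2(X_{in},X_{in})]$, not the independent-copy pair, and the paper passes over this as ``standard arguments''. Replacing $\tilde X_{jn}$ by an independent $X_{j'n}$ from the same regime is exactly the right fix.
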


\begin{proof} 
Assumption \ref{ass:fixed} (\ref{enum:fixed1}) is an immediate consequence of Lemma D.6 (i) in the supplementary material of \citet{dehling:vogel:wendler:wied:2017}.
It remains to show that the altogether eight terms, labelled (\ref{eq:fixed3})--(\ref{eq:fixed6}), of Assumption \ref{ass:fixed} (\ref{enum:fixed2}), are all $o(n^2)$ for $n \to \infty$.
By standard arguments for all $i,j,n$ we have 
$E\big[h^2_{1ijn}(X_{in})\big]\leq M$
and hence by independence
\begin{equation*}
E\left[\Big(\sum_{i=a}^b h_{1ijn}(X_{in})\Big)^2\right]\leq M(b-a).
\end{equation*}
By Theorem 2.4.1 of \citet{stout:1974}, we have that
\[
	E\bigg[\bigg(\max_{l\leq n}\Big|\sum_{i=1}^lh_{1i1n}(X_{in})\Big|\bigg)^2\bigg]=O(n\log^2(n))=o( n^{2}), 
\]
and thus the first term of (\ref{eq:fixed3}) is $o(n^2)$. 
For the second term of (\ref{eq:fixed3}), we define
\[
	S_a=\frac{1}{a}\sum_{i=1}^a ih_{1i1n}(X_{in}).
\]
If $a<b$, then
\begin{multline*}
E\left[\big(S_b-S_a\big)^2\right]\leq 2E\left[\big(\frac{1}{b}\sum_{i=a+1}^bih_{1ijn}(X_{in})\big)^2\right]+2E\left[\big((\frac{1}{a}-\frac{1}{b})\sum_{i=1}^aih_{1ijn}(X_{in})\big)^2\right]\\
\leq 2\frac{1}{b^2}\sum_{i=a+1}^b i^2M+2\Big(\frac{b-a}{ab}\Big)^2\sum_{i=1}^ai^2M\leq 2\frac{1}{b^2}(b-a)b^2M+2\Big(\frac{b-a}{ab}\Big)^2a^3M\leq 4M(b-a).
\end{multline*}
So we can apply Theorem 2.4.1 of \citet{stout:1974} again to obtain that the second term of (\ref{eq:fixed3}) is also $o(n^2)$. The terms in 
(\ref{eq:fixed4}), (\ref{eq:fixed5}), and (\ref{eq:fixed6}) can be treated in a similar way. 
\end{proof}

\subsection*{Dependent sequences}

\begin{lemma} \label{lem:fixed:pned}
If Assumptions \ref{ass:weak_dependence}, \ref{ass:uniform_moments}, \and \ref{ass:variation_condition} hold, 
then Assumption \ref{ass:fixed} is satisfied.
\end{lemma}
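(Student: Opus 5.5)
We have to verify the two parts of Assumption \ref{ass:fixed} for the fixed-alternative array under $P$-NED. The approach mirrors Lemma \ref{lem:fixed:independent}: first obtain increment moment bounds for the linear parts, then pass to maxima by a maximal inequality. The degenerate part is handled by the arguments already cited from \citet{dehling:vogel:wendler:wied:2017}.

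\emph{Degenerate part.} For Assumption \ref{ass:fixed} (\ref{enum:fixed1}) we invoke the argument of Lemma D.6 (i) in the supplementary material of \citet{dehling:vogel:wendler:wied:2017}. Under Assumptions \ref{ass:weak_dependence}, \ref{ass:uniform_moments}, and \ref{ass:variation_condition}, it yields for the row-wise $P$-NED array a bound of order $n^{3-\gamma}$ for some $\gamma>0$, and in any case $o(n^3)$. The only point to check is that this argument does not use stationarity of the rows. It uses only the uniform bounds in Assumptions \ref{ass:uniform_moments} and \ref{ass:variation_condition}, which hold uniformly in $i,j,n$. These bounds also hold for the pair kernels $h_{2ijn}$, which take only three distinct forms according to whether $i,j$ lie before or after $[n\tau^\star]$. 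Since $o(n^3)\subset o(n^4)$, part (1) follows, with room to spare.

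\emph{Linear part, increment bounds.} For Assumption \ref{ass:fixed} (\ref{enum:fixed2}), fix $j\in\{1,n\}$. The function $h_{1ijn}$ equals one of two fixed functions, $h_F-\theta$ or $h_G-\theta$ with the appropriate centering, depending on the position of $i$. By Lemma A.2 of \citet{dehling:vogel:wendler:wied:2017}, the variation condition together with the $P$-NED property makes $(h_{1ijn}(X_{in}))_i$ row-wise $L_2$-NED on the absolutely regular array. The approximating constants are polynomially decaying, and the $(2+\delta)$-moments are uniformly bounded by Assumption \ref{ass:uniform_moments} and Jensen's inequality. The moment inequality from the proof of Lemma 2.23 of \citet{borovkova:burton:dehling:2001}, applied as in Lemma \ref{lem:local3}, then gives
\[
E\Big[\Big(\sum_{i=a+1}^b h_{1ijn}(X_{in})\Big)^2\Big]\le C(b-a)
\]
uniformly in $a<b\le n$ and $n$. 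Since the $L_2$-NED structure is only broken at the single change point, the bound survives with a doubled constant after splitting the sum at $[n\tau^\star]$. For the weighted averages $S_a=a^{-1}\sum_{i\le a} i\,h_{1ijn}(X_{in})$, the computation in the proof of Lemma \ref{lem:fixed:independent} cannot use orthogonality. Instead, we write $S_b-S_a$ by Abel summation as a combination of partial sums. Using $i\le b$ and the above block bound, this should again give $E(S_b-S_a)^2\le C(b-a)$. The reversed sums in (\ref{eq:fixed5}) and (\ref{eq:fixed6}) are the same after the index reflection $i\mapsto n-i+1$.

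\emph{Maximal inequality.} Given a superadditive increment bound $C(b-a)$, Theorem 3 of \citet{Moricz1976} (or Theorem 2.4.1 of \citet{stout:1974}) yields $E[\max_{l\le n}|\cdot|^2]=O(n\log^2 n)=o(n^2)$ for all eight terms. The main obstacle is the Abel-summation bound for the weighted averages under dependence. The direct bound $|\sum_{i=a+1}^b i\,\xi_i|\le b\,|\sum_{a+1}^b\xi_i|+\sum_{k=a+1}^{b}|\sum_{i=k}^{b}\xi_i|$ is too crude, giving $(b-a)^3$. I would try first to control $E(\sum_{i=a+1}^b i\,\xi_i)^2\le Cb^2(b-a)$ directly from the covariance summability implied by $L_2$-NED, namely $\sum_k|\Cov(\xi_i,\xi_{i+k})|\le C$ uniformly. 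This bound is what the independent-case computation actually uses. With it, the same algebra as in Lemma \ref{lem:fixed:independent} gives $4C(b-a)$.
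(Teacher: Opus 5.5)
Your proposal is correct and follows essentially the same route as the paper: the degenerate part is taken from Lemma D.6 of \citet{dehling:vogel:wendler:wied:2017} (the paper cites part (ii)), and the linear terms come from row-wise $L_2$-NED via Lemma A.2, the increment bound from Lemma 2.23 of \citet{borovkova:burton:dehling:2001}, and the same maximal-inequality argument as in Lemma \ref{lem:fixed:independent}. Your covariance-summability bound $E(\sum_i c_i\xi_i)^2\le C\max_i c_i^2\,(b-a)$ for the weighted averages is exactly the ``stronger version'' the paper obtains from a careful reading of that lemma's proof, so the Abel-summation detour is not needed.
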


\begin{proof} Assumption \ref{ass:fixed} (\ref{enum:fixed1}) is an immediate consequence of Lemma D.6 (ii) in \citet[][supplementary material]{dehling:vogel:wendler:wied:2017}.  

By Assumption \ref{ass:weak_dependence} and Lemma A.2 of \citet{dehling:vogel:wendler:wied:2017}, for all $j$, the triangular array $(h_{1ijn}(X_{in}))_{i=1,\ldots,n}$ is rowwise $L_2$-NED with and thus $L_1$-NED with approximation constants $a_{k}=O(k^{-3\delta/(1+\delta)})$. 
Thus we can apply Lemma 2.23 of \citet{borovkova:burton:dehling:2001} to obtain
\begin{equation*}
		E\left[\Big(\sum_{i=a}^b h_{1ijn}(X_{in})\Big)^2\right]\leq C(b-a)
\end{equation*}
and (\ref{eq:fixed3}) and (\ref{eq:fixed4}) follow in the same way as in Lemma \ref{lem:fixed:independent} above. 

A careful reading of the proof of  Lemma 2.23 of \citet{borovkova:burton:dehling:2001} leads to the stronger version
\begin{equation*}
		E\left[\Big(\sum_{i=a}^b c_i h_{1ijn}(X_{in})\Big)^2\right]\leq C\max\{c_a^2,c_{a+1}^2,\ldots,c_b^2\}(b-a)
\end{equation*}
so that we can apply the same arguments as in Lemma \ref{lem:fixed:independent} above to show (\ref{eq:fixed5}) and (\ref{eq:fixed6}).
\end{proof}


\subsection{Proofs of Section \ref{sec:consistency} (Consistency)}

\begin{proof}[Proof of Corollary \ref{cor:consistency}]
In the case $\rho_{FG}=0$, it is obvious that $\Psi_1$ and $\Psi_2$ take their maximum at $\tau^\star$ with
\begin{equation*}
|\Psi_1(\tau^\star)|=|\Psi_2(\tau^\star)|=\tau^\star(1-\tau^\star)|\theta_F-\theta_G|
\end{equation*}
which is not 0 if $\theta_F\neq\theta_G$. In the case $\rho_{FG}\neq0$, note that for $t>\tau^\star$
\begin{equation*}
\Psi'_1(t)=-\tau^\star(\theta_F-\theta_G)-2\tau^\star(1-\tau^\star)\rho_{FG}+2\frac{1}{t^2}(\tau^\star)^2\rho_{FG}
\end{equation*}
which is not constant. Thus, $\Psi_1$ can not be equal to 0 on the whole interval $[0,1]$. Similarly, for $t>\tau^\star$
\begin{equation*}
\Psi'_2(t)=-\tau^\star(\theta_F-\theta_G)-2\tau^\star\rho_{FG}+2\frac{1}{t^2}(\tau^\star)^2\rho_{FG}
\end{equation*}
is not constant. 

\end{proof}


\subsection{Proofs of Section \ref{sec:location} (Estimation of change location)}

\begin{proof}[Proof of Proposition \ref{prop:max}] We first prove part (1). We only treat the case $\theta_F>\theta_G$. For $t<\tau^\star$, it follows that
\begin{align*}
\Psi_1(\tau^\star)-\Psi_1(t)=&\tau^\star(1-\tau^\star)(\theta_F-\theta_G)-2\tau^\star\tau^\star(1-\tau^\star)\rho_{FG}\\
&-t(1-\tau^\star)(\theta_F-\theta_G)+2t\tau^\star(1-\tau^\star)\rho_{FG}\displaybreak[0]\\
=&(\tau^\star-t)(1-\tau^\star)(\theta_F-\theta_G)-2(\tau^\star-t)\tau^\star(1-\tau^\star)\rho_{FG}\displaybreak[0]\\
>&(\tau^\star-t)(1-\tau^\star)(\theta_F-\theta_G)-(\tau^\star-t)(1-\tau^\star)(\theta_F-\theta_G)=0
\end{align*}
as $\rho_{FG}<\rho_{FG}<\frac{\theta_F-\theta_G}{2(\tau^\star-1+1/\tau^\star)}<(\theta_F-\theta_G)/(2\tau^\star)$. On the other hand
\begin{align*}
\Psi_1(\tau^\star)+\Psi_1(t)=&\tau^\star(1-\tau^\star)(\theta_F-\theta_G)-2\tau^\star\tau^\star(1-\tau^\star)\rho_{FG}\\
&+t(1-\tau^\star)(\theta_F-\theta_G)-2t\tau^\star(1-\tau^\star)\rho_{FG}\displaybreak[0]\\\
=&(\tau^\star+t)(1-\tau^\star)(\theta_F-\theta_G)-2(\tau^\star+t)\tau^\star(1-\tau^\star)\rho_{FG}\displaybreak[0]\\
>&(\tau^\star+t)(1-\tau^\star)(\theta_F-\theta_G)-(\tau^\star+t)(1-\tau^\star)(\theta_F-\theta_G)=0
\end{align*}
as $\rho_{FG}<(\theta_F-\theta_G)/(2\tau^\star)$. We conclude that $\Psi_1(t)<\Psi_1(\tau^\star)$ and $\Psi_1(t)> -\Psi_1(\tau^\star)$ and thus $|\Psi_1(t)|<|\Psi_1(\tau^\star)|$ for $t<\tau^\star$. Now we proceed with the other case $t>\tau^\star$:
\begin{align*}
\Psi_1(\tau^\star)-\Psi_1(t)=&(1-\tau^\star)\tau^\star(\theta_F-\theta_G)+2\frac{\tau^\star-\tau^\star}{\tau^\star}\tau^\star\rho_{FG}-2\tau^\star\tau^\star(1-\tau^\star)\rho_{FG}\\
&-(1-t)\tau^\star(\theta_F-\theta_G)-2\frac{t-\tau^\star}{t}\tau^\star\rho_{FG}+2t\tau^\star(1-\tau^\star)\rho_{FG}\displaybreak[0]\\
=&(t-\tau^\star)\tau^\star(\theta_F-\theta_G)-2\frac{t-\tau^\star}{t}\tau^\star\rho_{FG}+2(t-\tau^\star)\tau^\star(1-\tau^\star)\rho_{FG}\displaybreak[0]\\
=&(t-\tau^\star)\tau^\star(\theta_F-\theta_G)-2\Big(\frac{1}{t}+\tau^\star-1\Big)(t-\tau^\star)\tau^\star\rho_{FG}\\
>&(t-\tau^\star)\tau^\star(\theta_F-\theta_G)-(t-\tau^\star)\tau^\star(\theta_F-\theta_G)=0
\end{align*}
as $\rho_{FG}<\frac{\theta_F-\theta_G}{2(\tau^\star-1+1/\tau^\star)}$. Finally
\begin{align*}
\Psi_1(t)=&(1-t)\tau^\star(\theta_F-\theta_G)+2\frac{t-\tau^\star}{t}\tau^\star\rho_{FG}-2t\tau^\star(1-\tau^\star)\rho_{FG}\\
>&(1-t)\tau^\star(\theta_F-\theta_G)-\frac{t-\tau^\star}{t}\tau^\star(\theta_F-\theta_G)+t(1-\tau^\star)\tau^\star(\theta_F-\theta_G)\\
=&\tau^\star(\theta_F-\theta_G)\left(\frac{\tau^\star}{t}-\tau^\star t\right)\geq0
\end{align*}
as $\rho_{FG}>-(\theta_F-\theta_G)/2$, because $(t-\tau^\star)/t\leq t(1-\tau^\star)$. We concluce that $0\leq \Psi_1(t)<\Psi_1(\tau^\star)$ and thus $|\Psi_1(t)|<|\Psi_1(\tau^\star)|$ for $t>\tau^\star$. The case $\theta_F<\theta_G$ is similar and hence omitted.

Now we treat part (2) of the proposition. As before, we only treat the case $\theta_F>\theta_G$. By our assumption, $\rho_{FG}\geq-\frac{1}{2}\frac{1-\tau^\star}{\tau^\star}(\theta_F-\theta_G)$. For $t\in[0,\tau^\star)$, it follows that
\begin{align*}
\Psi_2(\tau^\star)-\Psi_2(t)=&\tau^\star(1-\tau^\star)(\theta_F-\theta_G)-2\frac{\tau^\star(\tau^\star-\tau^\star)}{1-\tau^\star}(1-\tau^\star)\rho_{FG}\\
&-t(1-\tau^\star)(\theta_F-\theta_G)+2\frac{t(\tau^\star-t)}{1-t}(1-\tau^\star)\rho_{FG}\displaybreak[0]\\
=&(\tau^\star-t)(1-\tau^\star)(\theta_F-\theta_G)+2\frac{\tau^\star-t}{1-t}t(1-\tau^\star)\rho_{FG}\displaybreak[0]\\
\geq&(\tau^\star-t)(1-\tau^\star)(\theta_F-\theta_G)-(\tau^\star-t)(1-\tau^\star)\frac{t}{1-t}\frac{1-\tau^\star}{\tau^\star}(\theta_F-\theta_G)\\
>&(\tau^\star-t)(1-\tau^\star)(\theta_F-\theta_G)-(\tau^\star-t)(1-\tau^\star)(\theta_F-\theta_G)=0.
\end{align*}
So $\Psi_2(t)<\Psi_2(\tau^\star)$. On the other hand, $\rho_{FG}<\frac{1}{2}(\theta_F-\theta_G)$ and consequently
\begin{align*}
\Psi_2(\tau^\star)+\Psi_2(t)=&\tau^\star(1-\tau^\star)(\theta_F-\theta_G)-2\frac{\tau^\star(\tau^\star-\tau^\star)}{1-\tau^\star}(1-\tau^\star)\rho_{FG}\\
&+t(1-\tau^\star)(\theta_F-\theta_G)-2\frac{t(\tau^\star-t)}{1-t}(1-\tau^\star)\rho_{FG}\displaybreak[0]\\
=&(\tau^\star+t)(1-\tau^\star)(\theta_F-\theta_G)-2\frac{\tau^\star-t}{1-t}t(1-\tau^\star)\rho_{FG}\displaybreak[0]\\
\geq&(\tau^\star+t)(1-\tau^\star)(\theta_F-\theta_G)-\frac{\tau^\star-t}{1-t}t(1-\tau^\star)(\theta_F-\theta_G)\\
>&(\tau^\star+t)(1-\tau^\star)(\theta_F-\theta_G)-(\tau^\star+t)(1-\tau^\star)(\theta_F-\theta_G)=0.
\end{align*}
So we have that $\Psi_2(t)>-\Psi_2(\tau^\star)$ and thus $|\Psi_2(t)|<|\Psi_2(\tau^\star)|$ for $t<\tau^\star$. For $t>\tau^\star$, note that by our assumption $\rho_{FG}\leq\frac{1}{2}\frac{\tau^\star}{1-\tau^\star}(\theta_F-\theta_G)$
\begin{align*}
\Psi_2(\tau^\star)-\Psi_2(t)=&(1-\tau^\star)\tau^\star(\theta_F-\theta_G)+2\frac{(\tau^\star-\tau^\star)}{\tau^\star}\tau^\star\rho_{FG}-2(\tau^\star-\tau^\star)\tau^\star\rho_{FG}\\
&-(1-t)\tau^\star(\theta_F-\theta_G)-2\frac{(t-\tau^\star)}{t}\tau^\star\rho_{FG}+2(t-\tau^\star)\tau^\star\rho_{FG}\displaybreak[0]\\
=&(t-\tau^\star)\tau^\star(\theta_F-\theta_G)-2\frac{1-t}{t}(t-\tau^\star)\tau^\star\rho_{FG}\displaybreak[0]\\
\geq&(t-\tau^\star)\tau^\star(\theta_F-\theta_G)-(t-\tau^\star)\tau^\star\frac{1-t}{t}\frac{\tau^\star}{1-\tau^\star}(\theta_F-\theta_G)\\
>&(t-\tau^\star)\tau^\star(\theta_F-\theta_G)-(t-\tau^\star)\tau^\star(\theta_F-\theta_G)>0.
\end{align*}
Finally, $\Psi_2(t)>-\Psi_2(\tau^\star)$ for $t<\tau^\star$ can be shown in the same way, which completes the proof.
\end{proof}


\subsection{Proofs of Section \ref{sec:asym.dist} (Asymptotic distributions under fixed alternatives)}


Theorem \ref{th:fixed} shows the convergence of the processes $\left( \frac{1}{n} D_n^F([nt])\right)_{0\le t\le 1}$ and $\left( \frac{1}{n} D_n^L([nt])\right)_{0\le t\le 1}$ to the deterministic limit processes $\Psi_1$ and $\Psi_2$, respectively. In this section we study the asymptotic distribution of the processes. In Proposition \ref{prop:procfixed} below we show that
\[
  \sqrt{n}  \big( \frac{1}{n} D_n^F([nt]) - \Psi_1(t) \big)_{0\leq t\leq 1} 
	\ \mbox{ and } \ 
	\sqrt{n}  \big( \frac{1}{n} D_n^L([nt]) - \Psi_2(t) \big)_{0\leq t\leq 1} 
\]
both converge in distribution to centered Gaussian limit processes, which lays the foundation for Theorem \ref{th:limitfixed}.
We consider the fixed alternative, where the random variables $X_{in},\, 1\leq i\leq [n\tau^\ast]$, have distribution $F$,
and the random variables $X_i=X_{in},\, [n\tau^\ast]+1\leq i\leq n$, have distribution $G$. 
More specifically, we assume that $(X_i,Y_i)$ is an i.i.d. process, where the $X_i$'s have distribution $F$,  the $Y_i$'s have distribution $G$, and such that 
\[
X_{in}= \begin{cases}
     X_i & \mbox{ for } 1\leq i \leq k^\ast, \\[1mm]
     Y_i & \mbox{ for } k^\ast +1 \leq i\leq n,
  \end{cases}
\]
where $k^\ast$ is short for $[n\tau^\ast]$.
The main tool in the analysis of the asymptotic distribution of our test statistics will be the Hoeffding decomposition of the kernel $h(x,y)$. 
Under the fixed alternative, we have to deal with three different Hoeffding decompositions of $h(X_{in},X_{jn})$, depending on whether the random variables $X_{in}$ and $X_{jn}$ have distribution $F$ or $G$, respectively.  We define the functions 
\[
	h_F(x) =\int h(x,y)\, dF(y), 
	\qquad
h_G(x)  = \int h(x,y)\,  dG(y)
\]
and obtain, 
e.g., for $1\leq i\leq k_n^\ast$, $k_n^\ast+1\leq j\leq n$, 
\[
  h(X_i,X_j)=\theta_{FG} + (h_G(X_i) -\theta_{FG})  + (h_F(X_j)-\theta_{FG}) +h_{2,F,G}(X_i,X_j), 
 \]
 The other two Hoeffding decompositions, valid for $1\leq i<  j\leq k_n^\ast$, and $k_n+1\leq i<j \leq n$, respectively, are given by 
 \begin{align*}
 h(X_i,X_j)&= \theta_F +(h_F(X_i)-\theta_F)+(h_F(X_j)-\theta_F) + h_{F,F}(X_i,X_j)\\
  h(X_i,X_j)&= \theta_G +(h_G(X_i)-\theta_G)+(h_G(X_j)-\theta_G) + h_{G,G}(X_i,X_j).
 \end{align*}
 In each of these decompositions, the 2nd order terms are defined by the above identities. By definition, the first order terms have mean zero, and the 2nd order terms are degenerate in the sense that, e.g., $E(h_{F,G}(X_i,X_j)|X_i)=0$ for $1\leq i\leq k_n^\ast$, $k_n^\ast+1\leq j\leq n$.

\begin{proposition}
\label{prop:procfixed}
Under the condition of Lemma \ref{lem:fixed:independent}, we have
\begin{align}
 \frac{1}{\sqrt{n}} \big(D_n^F([nt]) - n\Psi_1(t) \big)_{0\leq t\leq 1}  & \to (G_1(t))_{0\leq t\leq 1},\\
  \frac{1}{\sqrt{n}} \big(D_n^F([nt]) - n\Psi_2(t) \big)_{0\leq t\leq 1}  & \to (G_2(t))_{0\leq t\leq 1}
\end{align}
in distribution in the Skorokhod space $D[0,1]$, 
where $G_1$ and $G_2$ are mean-zero Gaussian processes with the following representations
\begin{align}
G_1(t) &=  \left\{  
\begin{array}{ll} 
 2(1-t\, \tau^\ast) W_F^{(1)}(t)  - 2t\, \tau^\ast (W_F^{(1)} (\tau^\ast) -W_F^{(1)}(t)) -2t\, \tau^\ast (W_F^{(2)} (1)-W_F^{(2)}(\tau^\ast))  &\\
 -2t(1-\tau^\ast) W_G^{(1)}(\tau^\ast) -2t(1-\tau^\ast)(W_G^{(2)}(1)-W_G^{(2)}(\tau^\ast)), & t\leq \tau^\ast \\[2mm]
2 \tau^\ast(\frac{1}{t}-t) W_F^{(1)}(\tau^\ast) +2 \big(\frac{t-\tau^\ast}{t} -t(1-\tau^\ast)  \big) W_G^{(1)}(\tau^\ast) &
 \\
  + 2\tau^\ast(\frac{1}{t}-t) (W_F^{(2)} (t) -W_F^{(2)}(\tau^\ast))  -2t\tau^\ast (W_F^{(2)}(1) -W_F^{(2)}(t)) &\\
  + 2\big( \frac{t-\tau^\ast}{t} -t(1-\tau^\ast)  \big) (W_G^{(2)}(t)-W_G^{(2)}(\tau^\ast)) -2t(1-\tau^\ast) (W_G^{(2)}(1)-W_G^{(2)}(t)) 
   , &  t\geq \tau^\ast  
\end{array}
\right. \\
G_2(t)&=  \left\{  
\begin{array}{ll} 
 2(1-t) W_F^{(1)}(t) -2t\frac{\tau^\ast -t}{1-t}  (W_F^{(1)}(\tau^\ast)-W_F^{(1)}(t)) -2t\frac{\tau^\ast-t}{1-t} (W_F^{(2)}(1)-W_F^{(2)}(\tau^\ast))& \\
  - 2\frac{t(1-\tau^\ast)}{1-t} (W_G^{(1)}(\tau^\ast)-W_G^{(1)}(t)) -2 t \frac{1-\tau^\ast}{1-t} (W_G^{(2)}(1)-W_G^{(2)}(\tau^\ast)),
&  t\leq \tau^\ast  \\[2mm]
  2\frac{(1-t)\tau^\ast}{t} W_F^{(1)}(\tau^\ast) +2 \frac{(1-t)(t-\tau^\ast) }{t} W_G^{(1)}(\tau^\ast) +2\frac{(1-t)\tau^\ast}{t} (W_F^{(2)}(t)-W_F^{(2)}(\tau^\ast) ) \\
  +2\frac{(t-\tau^\ast)(1-t)}{t} (W_G^{(2)}(t) -W_G^{(2)}(\tau^\ast))  -2t (W_G^{(2)}(1)-W_G^{(2)}(t)),  &  t\geq \tau^\ast,
\end{array}
\right. 
\end{align}
and where $(W_F^{(1)}(t), W_G^{(1)}(t),W_F^{(2)}(t), W_G^{(2)}(t) ) $ is a four-dimensional Brownian motion with mean zero and the same covariance structure as $(h_F(X),h_G(X),h_F(Y),h_G(Y))$.
\end{proposition}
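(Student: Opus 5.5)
The plan is to use the Hoeffding decompositions just introduced, separately for the three kinds of pairs ($F$--$F$, $F$--$G$, $G$--$F$/$G$--$G$). This writes $D_n^F([nt])$ and $D_n^L([nt])$ as the sum of four pieces: a deterministic centring term, a linear part that is a weighted sum of the centred variables $h_F(X_i)-\theta_F$, $h_G(X_i)-\theta_{FG}$, $h_F(Y_i)-\theta_{FG}$, $h_G(Y_i)-\theta_G$, a degenerate remainder, and a rounding error.

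\emph{Deterministic part.} This is the computation in the proof of Theorem \ref{th:fixed}, carried to order $\sqrt n$. With $k=[nt]$ and $k^\ast=[n\tau^\ast]$, the exact centring is a rational function of $k/n$ and $k^\ast/n$ with $1/n$ corrections. Hence it equals $n\Psi_1(t)+O(1)$ uniformly in $t$, and the error is $o(\sqrt n)$ after dividing by $\sqrt n$. Near $t=0$ the factor $2/(k(k-1))$ needs care, but there the prefactor $k/n$ kills it. The same holds for $D^L_n$, where the care is needed near $t=1$.

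\emph{Degenerate part.} By row-wise independence and $E h^2\le M$, the terms $h_{F,F}$, $h_{F,G}$ and $h_{G,G}$ are pairwise uncorrelated. Proposition \ref{prop:local:degenerate}, with its hypotheses verified exactly as in Part~2 of the proof of Lemma \ref{lem:local2}, then gives that $k\,U^{(2)}_{n,1}(k)/\sqrt n$ and its $D^L$ analogue vanish uniformly in probability.

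\emph{Linear part.} Collect, for each $i$, the total coefficient of its first-order term. For $D^F_n$ and $i\le k^\ast$, say, $X_i$ appears in $F$--$F$ pairs inside $\{1,\dots,k\}$ with weight $2/(k-1)$. It also appears in all pairs within $\{1,\dots,n\}$ with weight $-2k/(n(n-1))$, split between $F$-partners (contributing $h_F$) and $G$-partners (contributing $h_G$). Writing these weights as $n^{-1/2}a_n(i/n,t)$ yields limiting weight functions that are piecewise smooth in $(s,t)$, and the linear part takes the form
\[
\frac{1}{\sqrt n}\sum_{i\le k^\ast} \bigl\{\alpha(i/n,t)\,\tilde h_F(X_i)+\beta(i/n,t)\,\tilde h_G(X_i)\bigr\} + \frac{1}{\sqrt n}\sum_{i> k^\ast}\bigl\{\gamma(i/n,t)\,\tilde h_F(Y_i)+\delta(i/n,t)\,\tilde h_G(Y_i)\bigr\},
\]
where the tilde denotes centring. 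In each regime $t\le\tau^\ast$ or $t\ge\tau^\ast$, each coefficient function is piecewise constant in $s$, with breakpoints at $s=t$ and $s=\tau^\ast$ and values depending continuously on $t$. For $D^F_n$ with $s< t\le\tau^\ast$, for instance, the $F$ coefficient is $2-2t\tau^\ast$, i.e.\ $2(1-t\tau^\ast)$. Here the apparent $1/t$ blow-up is cancelled by the prefactor $k/n$, which is why the limit stays bounded. Consequently the linear part is a finite combination, with continuous deterministic coefficients $c_r(t)$, of partial-sum processes $n^{-1/2}\sum_{i\le [ns]}$ of the four centred sequences evaluated at $s\in\{t,\tau^\ast,1\}$.

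\emph{Limit and main obstacle.} By Donsker's theorem for the i.i.d.\ four-dimensional vectors $(h_F(X_i),h_G(X_i),h_F(Y_i),h_G(Y_i))$, these partial sums converge jointly in $D[0,1]^4$ to $(W^{(1)}_F,W^{(1)}_G,W^{(2)}_F,W^{(2)}_G)$. Only the $(1)$-components are used up to $\tau^\ast$ and only the $(2)$-components after it, which matches the representation of $G_1$ and $G_2$. The continuous mapping theorem then gives the stated $G_1$ and $G_2$ after reading off the coefficients. The main obstacle is justifying the replacement of the exact $n$-dependent weights by their limits uniformly in $t$, including at the endpoints where the normalisations are singular. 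I would handle this with an Abel-summation bound: the error is at most $\sup_{s}|a_n-a|$ plus the total variation in $s$ of $a_n-a$, times the maximal partial sum, which is $O_P(\sqrt n)$ by Kolmogorov's inequality (or Theorem 2.4.1 of \citet{stout:1974}, as in Lemma \ref{lem:fixed:independent}). The bookkeeping of the coefficients, checked against the displayed formulas, is routine but lengthy.
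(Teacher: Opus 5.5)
Your proposal is correct and follows essentially the same route as the paper: a three-type Hoeffding decomposition into a centring term, a block-constant linear part and a degenerate remainder, followed by the four-dimensional functional CLT and the continuous mapping theorem. Two steps are more careful than the paper's own treatment without changing the argument: the degenerate part via Proposition \ref{prop:local:degenerate} (with the bound from Part~2 of Lemma \ref{lem:local2}), and the uniform replacement of the exact weights by their limits via a maximal inequality.
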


%
%
%
%

\begin{proof}
We will first analyze $U_{1:k}$ with the help of the Hoeffding decomposition. Recall $k_n^\ast =[n\tau^\ast]$. For $k\leq k_n^\ast$ we obtain
\begin{align*}
U_{1:k} &= \frac{1}{\binom{k}{2}} \sum_{1\leq i<j\leq k} h(X_{in},X_{jn}) \\
&= \frac{2}{k(k-1)} \sum_{1\leq i<j \leq k} \big\{\theta_F+(h_F(X_i)-\theta_F) +(h_F(X_j) -\theta_F) +h_{FF}(X_i,X_j) \big\} \\
&=\theta_F +\frac{2}{k} \sum_{i=1}^k (h_F(X_i)-\theta_F) +\frac{2}{k(k-1)} \sum_{1\leq i<j\leq k} h_{FF}(X_i,X_j).
\end{align*}
For $k_n^\ast \leq k \leq n$ we obtain the decomposition
\begin{align*}
 U_{1:k} &= \frac{1}{\binom{k}{2}} \sum_{1\leq i<j\leq k} h(X_{in},X_{jn}) \\[1mm]
 &=\frac{2}{k(k-1)} \sum_{1\leq i<j \leq k_n^\ast} 
 \big\{  \theta_F+(h_F(X_i)-\theta_F) +(h_F(X_j) -\theta_F) +h_{FF}(X_i,X_j)   \big\} \\
 & \quad + \frac{2}{k(k-1)} \sum_{i=1}^{k_n^\ast} \sum_{j=k_n^\ast+1}^k  \big\{ \theta_{FG} +(h_G(X_i)-\theta_{FG}) 
 +(h_F(Y_j) - \theta_{FG}) +h_{FG}(X_i,Y_j)  \big\} \\
 & \quad + \frac{2}{k(k-1)} \sum_{k_n^\ast + 1 \leq i<j \leq k} \big\{ \theta_G +(h_G(Y_i) -\theta_G) +(h_G(Y_j)-\theta_G) +
 h_{GG}(Y_i,Y_j)  \big\} \\[1mm]
 &= \frac{k_n^\ast (k_n^\ast-1)}{k(k-1)} \theta_F + 2\frac{k_n^\ast (k-k_n^\ast)}{k(k-1)} \theta_{FG} + 
 \frac{(k-k_n^\ast)(k-k_n^\ast -1)}{k(k-1)} \theta_G\\
 &\quad + \frac{2 (k_n^\ast -1)}{k(k-1)}  \sum_{i=1}^{k_n^\ast} (h_F(X_i)-\theta_F) 
 + \frac{2 (k-k_n^\ast)}{k(k-1)}  \sum_{i=1}^{k_n^\ast} (h_G(X_i) -\theta_{FG} ) \\
 &\quad + \frac{2 k_n^\ast }{k(k-1)} \sum_{j=k_n^\ast+1}^k (h_F(Y_j) -\theta_{FG})
 + \frac{2(k-k_n^\ast-1)}{k(k-1)} \sum_{j=k_n^\ast +1}^k (h_G(Y_j)-\theta_G) \\
 &\quad + \frac{2}{k(k-1)} 
 \big\{  
   \sum_{1\leq i<j\leq k_n^\ast} h_{FF}(X_i,X_j) +\sum_{i=1}^{k_n^\ast} \sum_{j=k_n^\ast+1}^k h_{FG} (X_i,Y_j) 
   +\sum_{k_n^\ast+1 \leq i<j\leq k} h_{GG}(Y_i,Y_j)
 \big\}.
\end{align*}
Using the identity $\theta_{FG}=\rho_{FG} +\frac{1}{2}\theta_F +\frac{1}{2}\theta_G$, we can rewrite the constant term in the above decomposition as 
\begin{align*}
& \frac{k_n^\ast (k_n^\ast-1)}{k(k-1)} \theta_F + 2\frac{k_n^\ast (k-k_n^\ast)}{k(k-1)} \theta_{FG} + 
 \frac{(k-k_n^\ast)(k-k_n^\ast -1)}{k(k-1)} \theta_G\\[1mm]
 &\quad 
= \Big( \frac{k_n^\ast (k_n^\ast-1)}{k(k-1)} +  \frac{k_n^\ast (k-k_n^\ast)}{k(k-1)}  \Big) \theta_F + 2\frac{k_n^\ast (k-k_n^\ast)}{k(k-1)} \rho_{FG} + 
 \frac{(k-k_n^\ast)(k-k_n^\ast -1)}{k(k-1)} \theta_G\\
 &\qquad +  \Big( \frac{(k-k_n^\ast)(k-k_n^\ast -1)}{k(k-1)} + \frac{k_n^\ast (k-k_n^\ast)}{k(k-1) } \Big) \theta_G \\[1mm]
 & \quad = \frac{k_n^\ast}{k} \theta_F  + 2\frac{k_n^\ast (k-k_n^\ast)}{k(k-1)} \rho_{FG} 
 +\frac{k-k_n^\ast}{k} \theta_G.
\end{align*}
Thus, we obtain for $k\leq k_n^\ast$ the following decomposition of $D_n^F(k)=k(U_{1:k}-U_{1:n})$ into a constant term, a linear term, and a second order term:
\begin{align*}
 k(U_{1:k}-U_{1:n}) &=\frac{k\, (n-k_n^\ast)}{n}  (\theta_F-\theta_G) -  2\frac{k\, k_n^\ast\, (n-k_n^\ast)}{n(n-1)} \rho_{FG} 
 + 2 \sum_{i=1}^k (h_F(X_i)-\theta_F) \\
 &- \frac{2\, k\,  (k_n^\ast -1)  }{n(n-1)}\sum_{i=1}^{k_n^\ast} (h_F(X_i)-\theta_F) 
 - \frac{2\, k \, (n-k_n^\ast)}{n(n-1)}  \sum_{i=1}^{k_n^\ast} (h_G(X_i) -\theta_{FG} ) \\
 &\quad - \frac{2\, k \, k_n^\ast}{n(n-1)} \sum_{j=k_n^\ast+1}^n (h_F(Y_j) -\theta_{FG})
 - \frac{2\, k\, (n-k_n^\ast-1)}{n(n-1)} \sum_{j=k_n^\ast +1}^n (h_G(Y_j)-\theta_G) \\
 & \quad + \frac{2}{k-1} \sum_{1\leq i<j\leq k} h_{FF}(X_i,X_j)  - 
 \frac{2\, k}{n(n-1)} 
   \sum_{1\leq i<j\leq k_n^\ast} h_{FF}(X_i,X_j) 
   \\
   &\quad  -   \frac{2\, k}{n(n-1)} \sum_{i=1}^{k_n^\ast} \sum_{j=k_n^\ast+1}^n h_{FG} (X_i,Y_j) 
    -   \frac{2\, k}{n(n-1)}  \sum_{k_n^\ast+1 \leq i<j\leq n} h_{GG}(Y_i,Y_j).
\end{align*}
The second order terms on the right hand side are $O_P(1)$, see Lemma \ref{lem:fixed:independent}. 
%
%
Hence, we get for $t\leq \tau^\ast$:
\begin{align*}
& \frac{1}{\sqrt{n}} \Big[[nt]\, (U_{1:[nt]}-U_{1:n})-n\, \Psi_1(t)  \Big] \\
 &\; = \frac{1}{\sqrt{n}}  \Big(\frac{[nt](n-[n\tau^\ast]) }{n} - n\, t(1-\tau^\ast) \Big) (\theta_F-\theta_G) -\frac{2}{\sqrt{n}}\Big(\frac{[nt]\, [n\tau^\ast] (n-[n\tau^\ast])}{n(n-1)} -  nt\tau^\ast(1-\tau^\ast)\Big) \rho_{FG} \\
& \quad+ \frac{2}{\sqrt{n}} \sum_{i=1}^{[nt]} (h_F(X_i)-\theta_F)
 -2\frac{[nt]([n\tau^\ast]-1)}{n(n-1)} \frac{1}{\sqrt{n}} \sum_{i=1}^{[n\tau^\ast]} (h_F(X_i)-\theta_F) \\
 &\quad -2\frac{[nt](n-[n\tau^\ast])}{n(n-1)} \frac{1}{\sqrt{n}}\sum_{i=1}^{[n\tau^\ast]} (h_G(X_i)-\theta_{FG})
 -2\frac{[nt][n\tau^\ast] }{n(n-1)} \sum_{j=[n\tau^\ast]+1}^n (h_F(Y_j)-\theta_{FG})\\
 &\quad  -  \frac{2\, [nt]\, (n-[n\, \tau^\ast]-1)}{n(n-1)} \sum_{j=[n\, \tau^\ast] +1}^n (h_G(Y_j)-\theta_G) +O_P(1/\sqrt{n})\\
 &\;= 2(1-t\, \tau^\ast) \frac{1}{\sqrt{n}} \sum_{i=1}^{[nt]} (h_F(X_i)-\theta_F) -2\, t\, \tau^\ast \frac{1}{\sqrt{n}} \sum_{i=[nt]}^{[n\tau^\ast]} (h_F(X_i)-\theta_F) \\
 &\quad - 2\, t (1-\tau^\ast) \frac{1}{\sqrt{n}} (\sum_{i=1}^{[n\tau^\ast]} h_G(X_i) -\theta_{FG}) - 2\,t\,\tau^\ast
 \frac{1}{\sqrt{n}} \sum_{i=[n\tau^\ast]}^n (h_F(Y_i)-\theta_{FG}) \\
 &\quad - 2\, t(1-\tau^\ast)\frac{1}{\sqrt{n}} \sum_{i=[n\, \tau^\ast]+1}^n (h_G(Y_i)-\theta_G) +o_P(1),
 \end{align*}
To the sums on the right hand side  we apply the functional central limit theorem for the $\R^4$-valued partial sum process
 \[
  \frac{1}{\sqrt{n}} \sum_{i=1}^{[nt]} 
  \big( h_F(X_i)-\theta_F,h_G(X_i)-\theta_{FG},h_F(Y_i)-\theta_{FG}, h_G(Y_i)-\theta_G  \big),
 \]
 whose limit process we denote by $(W_F^{(1)}(t), W_G^{(1)}(t), W_F^{(2)}(t), W_G^{(2)}(t))_{0\leq t\leq 1}$.  The continuous mapping theorem yields convergence of the above process to the limit stated in the formulation of the theorem, for $t\leq \tau^\ast$. 
 \\[1mm]
 For $k\geq k_n^\ast$, we obtain
 \begin{align*}
&  k(U_{1:k}-U_{1:n}) = k_n^\ast\frac{n-k}{n}(\theta_F-\theta_G) + 2 \Big[ \frac{k_n^\ast(k-k_n^\ast)}{k-1} - k\frac{k_n^\ast(n-k_n^\ast)}{n(n-1)}  \Big] \rho_{FG} \\
 &\; + 2\Big[\frac{k_n^\ast-1}{k-1} -\frac{k(k_n^\ast-1)}{n(n-1)}   \Big] \sum_{i=1}^{k_n^\ast} (h_F(X_i) -\theta_F) 
 + 2\Big[ \frac{k-k_n^\ast}{k-1} - \frac{k(n-k_n^\ast)}{n(n-1)} \Big] \sum_{i=1}^{k_n^\ast} (h_G(X_i)-\theta_{FG}) \\
 &\; +2\Big[ \frac{k_n^\ast}{k-1} -\frac{k\, k_n^\ast}{n(n-1)}  \Big] \sum_{i=k_n^\ast+1}^k (h_F(Y_i)-\theta_{FG} )
 -2 \frac{k\, k_n^\ast}{n(n-1)} \sum_{i=k+1}^n (h_F(Y_i)-\theta_{FG}) \\
 & \; +2 \Big[ \frac{k-k_n^\ast -1}{k-1} -\frac{k(n-k_n^\ast -1)}{n(n-1)}  \Big] \sum_{i=k_n^\ast+1}^ k (h_G(Y_i)-\theta_G)
 - 2 \frac{k(n-k_n^\ast-1) }{n(n-1) }  \sum_{i=k+1}^n (h_G(Y_i)-\theta_G) \\
 &\; +O_P(1),
 \end{align*}
 and thus
 \begin{align*}
& \frac{1}{\sqrt{n}} \Big( [nt] (U_{1:[nt]}   -U_{1:n}) -n\, \Psi_1(t)  \Big) \\
&\; =  \frac{1}{\sqrt{n}} \Big( [n\tau^\ast] \frac{n-[nt]}{n}(\theta_F-\theta_G) + 2 \Big[ \frac{[n\tau^\ast]([nt]-[n \tau^\ast])}{[nt]-1} - [nt]\frac{[n\tau^\ast](n-[n\tau^\ast])}{n(n-1)}  \Big] \rho_{FG} \\
&\quad - n (1-t)\tau^\ast (\theta_F-\theta_G) - 2\Big[\frac{t-\tau^\ast}{t} \tau^\ast -t \tau^\ast (1-\tau^\ast) \Big] \rho_{FG} \Big) \\
&\quad + 2\Big[\frac{[n\tau^\ast]-1}{[nt]-1} -\frac{[nt]([n\tau^\ast]-1)}{n(n-1)}  \Big] \frac{1}{\sqrt{n}} \sum_{i=1}^{[n\tau^\ast]} (h_F(X_i) -\theta_F)  \\
&\quad + 2\Big[ \frac{[nt]-[n\tau^\ast]}{[nt]-1} - \frac{[nt](n-n\tau^\ast])}{n(n-1)}  \Big]  \frac{1}{\sqrt{n}} \sum_{i=1}^{[n\tau^\ast]} (h_G(X_i)-\theta_{FG})  \\
 &\quad +2\Big[ \frac{[n\tau^\ast]}{[nt]-1} -\frac{[nt]\, [n\tau^\ast]}{n(n-1)}  \Big] \frac{1}{\sqrt{n}} \sum_{i=[n\tau^\ast]+1}^{[nt]} (h_F(Y_i)-\theta_{FG} ) 
 -2 \frac{[nt]\, [n\tau^\ast]}{n(n-1)} \frac{1}{\sqrt{n}} \sum_{i=[nt]+1}^n (h_F(Y_i)-\theta_{FG}) \\
 & \quad +2 \Big[ \frac{[nt]-[n\tau^\ast] -1}{[nt]-1} -\frac{[nt](n-[n\tau^\ast] -1)}{n(n-1)}  \Big] \frac{1}{\sqrt{n}} \sum_{i=[n\tau^\ast]+1}^{[nt]} (h_G(Y_i)-\theta_G) \\
&\quad  - 2 \frac{[nt](n-[n\tau^\ast]-1) }{n(n-1) } \frac{1}{\sqrt{n}}  \sum_{i=[nt]+1}^n (h_G(Y_i)-\theta_G) \quad  +O_P(\frac{1}{\sqrt{n}} ).
 \end{align*} 
The same argument as above yields convergence of this process to the limit stated in the formulation of the theorem, for $t\geq \tau^\ast$. As the limit process is continuous in $t=\tau^\ast$, we finally obtain convergence of the process 
$ \frac{1}{\sqrt{n}} ( [nt] (U_{1:[nt]}   -U_{1:n}) -n\, \Psi_1(t) )_{0\leq t\leq 1}$. 
\\[1mm]
In order to determine the asymptotic distribution of $D_n^L(k)=\frac{k(n-k)}{n}(U_{1:k}-U_{k+1:n})$, we need to calculate the U-statistic $U_{k+1:n}$. For $k\leq k_n^\ast$, we obtain
\begin{align*}
 U_{k+1:n} &= \frac{2}{(n-k)(n-k-1) } \sum_{k+1\leq i<j\leq n} h(X_{in},X_{jn}) \\
 & =  \frac{2}{(n-k)(n-k-1) } \sum_{k+1\leq i<j\leq k_n^\ast} \big[\theta_F+(h_F(X_i)-\theta_F)+(h_F(X_j)-\theta_F)+h_{FF}(X_i,X_j)    \big] \\
 & \quad +  \frac{2}{(n-k)(n-k-1) } \sum_{i=k+1}^{k_n^\ast} \sum_{j=k_n^\ast}^n \big[\theta_{FG} +(h_G(X_i) -\theta_{FG}) +(h_F(Y_j)-\theta_{FG}) +h_{FG}(X_i,Y_j)  \big] \\
  & \quad +  \frac{2}{(n-k)(n-k-1) } \sum_{k_n^\ast +1 \leq i<j \leq n}
  \big[\theta_G + (h_G(Y_i) -\theta_G) +(h_G(Y_j)-\theta_G) +h_{GG}(Y_i,Y_j) \big]\\
& = \frac{(k_n^\ast -k) (k_n^\ast -k-1)}{(n-k)(n-k-1)}\theta_F +2 \frac{(k_n^\ast -k)(n-k_n^\ast)}{(n-k)(n-k-1)} \theta_{FG} 
+ \frac{(n-k_n^\ast)(n-k_n^\ast-1)}{(n-k)(n-k-1)} \\
&\; + 2\frac{k_n^\ast-k-1}{(n-k)(n-k-1)} \sum_{i=k+1}^{k_n^\ast} (h_F(X_i)-\theta_F) \\
&\;+ 2\frac{n-k_n^\ast}{(n-k)(n-k-1)} \sum_{i=k+1}^{k_n^\ast} (h_G(X_i)-\theta_{FG}) \\
&\; + 2\frac{k_n^\ast -k}{(n-k)(n-k-1)} \sum_{i=k_n^\ast +1}^n (h_F(Y_i) -\theta_{FG}) \\
&\; + 2\frac{n-k_n^\ast-1}{(n-k)(n-k-1)} \sum_{i=k_n^\ast+1}^n (h_G(Y_i)-\theta_G)\\
&\; + \frac{2}{n-k)(n-k-1)} \Big[ \sum_{k+1 \leq i<j\leq k_n^\ast} h_{FF}(X_i,X_j) +\sum_{i=k+1}^{k_n^\ast} \sum_{j=k_n^\ast+1}^n  h_{FG}(X_i,Y_j)  \\
&\qquad +\sum_{k_n^\ast+1 \leq i<j\leq n} h_{GG}(Y_i,Y_j) \Big],
\end{align*}
while for $k\geq k_n^\ast$ we have
\[
 U_{k+1:n} 
 =\theta_G+\frac{2}{n-k} \sum_{i=k+1}^n (h_G(Y_i) -\theta_G) +\frac{2}{(n-k)(n-k-1)} 
 \sum_{k+1\leq i<j \leq n} h_{GG} (Y_i,Y_j).
\]

Thus, we obtain for $k\leq k_n^\ast$ 
\begin{align*}
& \frac{k(n-k)}{n} (U_{1:k}-U_{k+1:n}) \\
&\; = \frac{k(n-k_n^\ast)}{n} (\theta_F-\theta_G) -2 \frac{k(k_n^\ast-k)(n-k_n^\ast)}{n(n-k-1)} \rho_{FG} +2\frac{(n-k)}{n} \sum_{i=1}^k (h_F(X_i)-\theta_F) \\
& \quad - 2\frac{k(k_n^\ast-k-1)}{n(n-k-1)} \sum_{i=k+1}^{k_n^\ast} (h_F(X_i)-\theta_F) 
-2\frac{k(n-k_n^\ast)}{n(n-k-1)} \sum_{i=k+1}^{k_n^\ast} (h_G(X_i)-\theta_{FG}) \\
&\quad -2 \frac{k(k_n^\ast-k)}{n(n-k-1)} \sum_{i=k_n^\ast+1}^n (h_F(Y_i)-\theta_{FG}) 
-2\frac{k(n-k_n^\ast-1)}{n)n-k-1)} \sum_{i=k_n^\ast+1}^n (h_G(Y_i) -\theta_G) +O_P(1),
\end{align*}
and hence
\begin{align*}
 & \frac{1}{\sqrt{n}}\Big[ \frac{[nt](n-[nt]) }{n } \Big( U_{1:[nt]} -U_{[nt]+1:n} \Big)   -n\Psi_2(t)\Big] \\
 & \; = \frac{1}{\sqrt{n}} \Big[ \frac{[nt](n-[nt])}{n} -nt(1-\tau^\ast)  \Big](\theta_F-\theta_G) \\
 &\quad -\frac{2}{\sqrt{n}} \Big[\frac{[nt]([n\tau^\ast]-[nt])(n-[n\tau^\ast])}{ n (n-[nt]-1)} - \frac{t(\tau^\ast-t)}{1-t }(1-\tau^\ast)   \Big] \rho_{FG} \\
 &\quad+2\frac{n-[nt]}{n}\frac{1}{\sqrt{n}} \sum_{i=1}^{[nt]} (h_F(X_i)-\theta_F) \\
 &\quad -2\frac{[nt]([n\tau^\ast]-[nt] -1)}{n(n-[nt]-1)} \frac{1}{\sqrt{n}} \sum_{i=[nt]+1}^{[n\tau^\ast]} (h_F(X_i)-\theta_F) \\
 &\quad -2 \frac{[nt](n-[n\tau^\ast])}{n(n-[nt]-1)}\frac{1}{\sqrt{n}} \sum_{i=[nt]+1}^{[n\tau^\ast]} (h_G(X_i)-\theta_{FG}) \\
 &\quad - 2\frac{[nt]([n\tau^\ast]-[nt])}{n(n-[nt]-1)} \frac{1}{\sqrt{n}} \sum_{i=[n\tau^\ast]+1} (h_F(Y_i)-\theta_{FG}) \\
 &\quad -2\frac{[nt](n-[n\tau^\ast]-1)}{n(n-[nt]-1)} \frac{1}{\sqrt{n}} \sum_{i=[n\tau^\ast] +1}^n (h_G(Y_i)-\theta_G)
\end{align*}
The same argument as above yields convergence of this process to the limit stated in the formulation of the theorem, for $t\leq \tau^\ast$. 
\\[1mm]
For $k\geq k_n^\ast+1$, we obtain
\begin{align*}
& \frac{k(n-k)}{n} (U_{1:k}-U_{k+1:n}) \\
&\; =  \frac{k_n^\ast (n-k)}{n } (\theta_F-\theta_G) + 2\frac{k_n^\ast (k-k_n^\ast) (n-k) }{n(k-1)} \rho_{FG}
  + \frac{2(n-k) (k_n^\ast -1)}{n(k-1)} \sum_{i=1}^{k_n^\ast} (h_F(X_i)-\theta_F) \\
 &\quad + \frac{2(n-k) (k-k_n^\ast) }{n(k-1)} \sum_{i=1}^{k_n^\ast} (h_G(X_i) -\theta_{FG} ) 
  + \frac{2(n-k) k_n^\ast }{n(k-1)} \sum_{j=k_n^\ast+1}^k (h_F(Y_j) -\theta_{FG})\\
 &\quad+ \frac{2(k-k_n^\ast-1)(n-k)}{n(k-1)} \sum_{j=k_n^\ast +1}^k (h_G(Y_j)-\theta_G) 
 - \frac{2k}{n}  \sum_{i=k+1}^n (h_G(Y_i)-\theta_G) +O_P(1),
\end{align*}
and hence for $t\geq \tau^\ast$
\begin{align*}
& \frac{1}{\sqrt{n}}\Big[ \frac{[nt](n-[nt])}{n}  \big(U_{1:[nt]} -U_{[nt]+1:n} \big) -n\Psi_2(t)  \Big] \\
 &\; = \frac{1}{\sqrt{n}} \Big[ \frac{[n\tau^\ast] (n-[nt])}{n }   - n (1-t)\tau^\ast \Big] (\theta_F-\theta_G) \\
 & + 2 \Big[ \frac{[n\tau^\ast] ([nt]-[n\tau^\ast) (n-[nt]) }{n([nt]-1)}  - n(t-\tau^\ast)\tau^\ast \big(\frac{1}{t}-1\big) 
 \Big] \rho_{FG}   \\
 &\quad 
  + \frac{2(n-[nt]) ([n\tau^\ast] -1)}{n([nt]-1)}  \frac{1}{\sqrt{n}} \sum_{i=1}^{[n\tau^\ast]} (h_F(X_i)-\theta_F) \\
 &\quad + \frac{2(n-[nt]) ([nt]-[n\tau^\ast]) }{n([nt]-1)}  \frac{1}{\sqrt{n}}\sum_{i=1}^{[n\tau^\ast]} (h_G(X_i) -\theta_{FG} ) \\
 &\quad  + \frac{2(n-[nt]) [n\tau^\ast] }{n([nt]-1)}  \frac{1}{\sqrt{n}} \sum_{j=[n\tau^\ast]+1}^{[nt]} (h_F(Y_j) -\theta_{FG})\\
 &\quad+ \frac{2([nt]-[n\tau^\ast]-1)(n-[nt])}{n([nt]-1)}  \frac{1}{\sqrt{n}}\sum_{j=[n\tau^\ast] +1}^{[nt]} (h_G(Y_j)-\theta_G) \\
  & \quad - \frac{2[nt]}{n}   \frac{1}{\sqrt{n}} \sum_{i=[nt]+1}^n (h_G(Y_i)-\theta_G) +O_P(1/\sqrt{n}),
\end{align*} 
The same argument as above yields convergence of this process to the limit stated in the formulation of the theorem, for $t\geq \tau^\ast$. As the limit process is continuous in $t=\tau^\ast$, we finally obtain convergence of the process 
$ \frac{1}{\sqrt{n}} ( [nt] (U_{1:[nt]}   -U_{1:n}) -n\, \Psi_1(t) )_{0\leq t\leq 1}$. 
This completes the proof of Proposition \ref{prop:procfixed}.
\end{proof}


\begin{proof}[Proof of Theorem \ref{th:limitfixed}] 
As we are in the independent-sequences case, recall that the condition of Lemma \ref{lem:fixed:independent} implies Assumption \ref{ass:fixed}.
Since the proofs for the \FF\ and the \FL\ approach are essentially the same, we only give the details for one of them. 
Without loss of generality, we can assume that $\Psi_1(\tau^\star)>0$. By the continuity of $\Psi_1$, this will also hold in a neighborhood of $\tau^\star$. Furthermore, by Theorem \ref{th:fixed}, $D_n^F([nt])$ will also be positive with probability going to 1. First note that
\begin{multline*}
\big(T_1(h)-\sqrt{n}|\Psi_1(\tau^\star)|\big)=\frac{1}{\sqrt{n}}\Big(\sup_{t\in[0,1]}D_n^F([nt])-n\Psi_1(\tau^\star)\Big)\\
\geq \frac{1}{\sqrt{n}}\big(D_n^F([n\tau^\star])-n\Psi_1(\tau^\star)\big)
\end{multline*}
By Proposition \ref{prop:procfixed}, the right hand side converges in distribution to $G_1(\tau^\star)$, which is a centered normal random variable. A short calculation gives the variance as stated in Theorem \ref{th:limitfixed}. We still have to show that there also exists an upper bound which converges to the same limit distribution. For this note that 
\begin{equation*}
T_1(h)=\frac{1}{\sqrt{n}}D_n^F([n\hat{\tau}_{1,n}])
\end{equation*}
because $\hat{\tau}_{1,n}=\argmax_t |D_n^F([nt])|$, see (\ref{eq:location_estimator}), so
\begin{multline*}
\frac{1}{\sqrt{n}}\big(T_1(h)-n|\Psi_1(\tau^\star)|\big)=\frac{1}{\sqrt{n}}\Big(D_n^F([n\hat{\tau}_{1,n}])-n\Psi_1(\tau^\star)\Big)\\
\leq \frac{1}{\sqrt{n}}\Big(D_n^F([n\hat{\tau}_{1,n}])-n\Psi_1(\hat{\tau}_{1,n})\Big)
\end{multline*}
as, by Proposition \ref{prop:max}, $|\Psi_1(t)|$  has its maximal value at $t=\tau^\star$. We know from  the argmax theorem that $\hat{\tau}_{1,n}\rightarrow \tau^\star$ in probability. So for any $\epsilon>0$ with probability going to 1
\begin{equation*}
\Big(D_n^F([n\hat{\tau}_{1,n}])-\sqrt{n}\Psi_1(\hat{\tau}_{1,n})\Big)\leq \sup_{|t-\tau^\star|\leq \epsilon}\frac{1}{\sqrt{n}}\Big(D_n^F([nt])-n\Psi_1(t)\Big).
\end{equation*}
By the continuous mapping theorem, we have the convergence
\begin{equation*}
\sup_{|t-\tau^\star|\leq \epsilon}\frac{1}{\sqrt{n}}\Big(D_n^F([nt])-n|\Psi_1(t)\Big)\rightarrow \sup_{|t-\tau^\star|\leq \epsilon}G_1(t)
\end{equation*}
in distribution. By the continuity of the limit process, we have $\sup_{|t-\tau^\star|\leq \epsilon}G_1(t)\rightarrow G_1(\tau^\star)$ for $\epsilon\rightarrow 0$, which completes the proof of Theorem \ref{th:limitfixed}.

\end{proof}


\section{Proofs and further results for Section \ref{sec:examples}}
\label{app:examples}

\begin{proof}[Proof of Proposition \ref{prop:gmd}]
We will show 
\be \label{eq:rhoFG:nonnegative}
  \rho_{FG} = - \frac{1}{2} \int\int |x-y| \{f(x)-g(x)\}\{f(y)-g(y)\} dx dy \ge 0
\ee
using properties of characteristic functions. Let $a > 0$ be a parameter. The density
\[
	\phi_a(x) = \frac{1}{\pi} \frac{1-\cos(a x)}{a x^2}, \qquad x \in \R,
\]
has the characteristic function
\[
	\chi_a(t) = \left( 1 - \frac{|t|}{a} \right) \Varind{[-a,a]}(t), \qquad t \in \R,
\]
see \citet[][p.~503]{feller:1971} or \citet[][p.~271]{chow:teicher:1978} Hence, by Bochner's theorem \citep[][p.~623]{feller:1971}, the function $\chi_a$ is positive definite. 
This implies 
\be \label{eq:gmd1}
	\iint\left( 1 - \frac{|x - y|}{a} \right) \Varind{[-a,a]}(x-y) k(x) k(y) dx dy \ge 0, 
\ee
where $k(x)$ is short for $(f(x) - g(x))\Varind{\{|x|\leq b\}}$ with $b=a/2$. Note that $|x-y|\leq a$ if $|x|,|y|\leq b$. Hence
\begin{multline*}
0 \leq a\iint\left( 1 - \frac{|x - y|}{a} \right) \Varind{[-a,a]}(x-y) k(x) k(y) dx dy \\
=a\int_{-b}^b \!\int_{-b}^b (f(x) - g(x))(f(y) - g(y)) dx dy  \ - \  \int_{-b}^b\! \int_{-b}^b  |x - y| (f(x) - g(x))(f(y) - g(y)) dx dy
\end{multline*}
for all $a > 0$. The finite first moments of $F$ and $G$ imply that the first term converges to 0 as $a \to \infty$:
\begin{multline*} 
  a\int_{-b}^b \!\int_{-b}^b (f(x) - g(x))(f(y) - g(y)) dx dy 
 = 2 b \big(P_F(|X|\leq b)-P_G(|X|\leq b)\big)^2\\
 = 2 b \big(P_F(|X|>b)-P_G(|X|> b)\big)^2 \leq \frac{2}{b} \big(E_F[|X|]+E_G[|X|]\big)^2 \rightarrow  0
\end{multline*}
by Markov inequality as $b=a/2\rightarrow\infty$. By virtue of the dominated converge theorem (again using $\int_\R | x k(x)| < \infty$), 
we conclude 
\begin{multline*}\rho_{F,G}=-\frac{1}{2} \int_\R\int_\R |x-y| (f(x)-g(x))(f(y)-g(y)) dx dy\\
=\lim_{b\rightarrow\infty} \Big(- \frac{1}{2}   \int_{-b}^b\! \int_{-b}^b  |x - y| \{f(x) - g(x)\} \{f(y) - g(y)\} \Big) \geq 0
\end{multline*}
This completes the proof of Proposition \ref{prop:gmd}. 
\end{proof}

%
%

\bigskip
\begin{proof}[Proof of Proposition \ref{prop:gmd:location}]
Consider the function
\begin{equation*}
h_1(y)=\int |x-y|dF(x).
\end{equation*}
It has a minimum in $y=0$ and is convex because for all $x$, the mapping $y\mapsto |x-y|$ is convex. We conclude that $h_1(y)$ is non-increasing for $y<0$ and non-decreasing for $y>0$. Let $c>1$, then
\begin{multline*}
\theta_F=\iint |x-y|dF(x)dF(y)=\int h_1(y)dF(y)\leq \int h_1(cy)dF(y)\\
=\int h_1(y)dG(y)=\iint |x-y|dF(x)dG(y)=\theta_{FG}
\end{multline*}
and with the same reasoning $\theta_{FG} \leq \theta_G$. The other case $c<1$ follows by changing the roles of $F$ and $G$, and we always have $\theta_{FG} \in [\min\{ \theta_F, \theta_G \}, \max\{ \theta_F,\theta_G \}]$.
\end{proof}

%
%

\bigskip
\begin{proof}[Proof of Proposition \ref{prop:var}]
Recall $h(x,y) = (x - y)^2/2$. Let $X \sim F$ and $Y \sim G$ be independent and let $\mu = E(X)$ and $\nu = E(Y)$. Then 
$\theta_F = \Var(X)$, $\theta_G = \Var(Y)$, and
\[
	\theta_{FG} = \frac{1}{2} E (X-Y)^2
	 = \frac{1}{2} E \Big\{ (X-\mu) - (Y-\nu) + (\mu-\nu) \Big\}^2
\]\[
	= \frac{1}{2} E [ (X-\mu) - (Y-\nu) ]^2 +  E [(X-\mu) - (Y-\nu)](\mu-\nu)  + \frac{1}{2} (\mu-\nu)^2 
\]\[
	= \frac{1}{2} [ \Var(X) + \Var(Y) ]^2 + \frac{1}{2} (\mu-\nu)^2 
\]
and hence
\[
	\rho_{FG} = \theta_{FG} - \frac{1}{2}(\theta_F + \theta_G) 
	= \frac{1}{2} (\mu-\nu)^2,
\]
which completes the proof of Proposition \ref{prop:var}.
\end{proof}

The analogous formula for the sample covariance, as given in Section \ref{sec:var} of the paper, is obtained likewise: 
Let $(X,Y)$ and $(X',Y')$  be independent random vectors with bivariate distributions $F$ and $G$, respectively, and consider the kernel
$h((x,y),(x',y')) = (x-x')(y-y')/2$. 
Defining $\mu_F = E(X)$, $\nu_F = E(Y)$, $\mu_G = E(X')$, and $\nu_G = E(Y')$, we have
$\theta_F = \Cov(X,Y)$, $\theta_G = \Cov(X',Y')$, and
\begin{align*}
\theta_{FG}&=E h \{ (X,Y),(X^\prime,Y^\prime)\} =  \frac{1}{2} E\{(X-X^\prime)(Y-Y^\prime)  \}\\
 &=\frac{1}{2} E\big\{\big((X-\mu_F)-(X^\prime-\mu_G) + (\mu_F-\mu_G)\big)\big((Y-\nu_F)-(Y^\prime-\nu_G)+(\nu_F-\nu_G)\big) \big \}\\
 &=\frac{1}{2} \big(\Cov(X,Y)+\Cov(X^\prime,Y^\prime) +(\mu_F-\mu_G)(\nu_F-\nu_G)\big), 
\end{align*}
as all the mixed terms vanish by independence of $(X,Y)$ and $(X^\prime,Y^\prime)$. Thus, the eccentricity  is given by
\[
\rho_{FG}=\theta_{FG}-\frac{1}{2} (\theta_F+\theta_G) =\frac{1}{2} (E_FX-E_GX)(E_FY-E_GY).
\]


%
%

\bigskip
The following example illustrates Corollary \ref{cor:consistency} for Gini's mean difference. A change in location only may lead to a non-zero eccentricity $\rho_{FG}$ and hence both tests will detect this change with asymptotic probability 1. 

\begin{example}\label{ex:gmd:change-in-mean} 
Let $h(x,y)=|x-y|$, so the corresponding $U$-statistic is Gini's mean difference. Let $F$ be the distribution function for the uniform distribution on $[0,1]$ and $G$ be the uniform distribution function on $[1,2]$. Then
\begin{equation*}
\theta_F=\theta_G=\int_0^1\int_0^1|x-y|dxdy=\frac{1}{3}
\end{equation*}
but
\begin{equation*}
\theta_{FG}=\int_0^1(1-x)dx+\int_1^2(y-1)dy=1
\end{equation*}
and $\rho_{FG}=2/3$. The reasoning can be extended to a suitable class of distributions, e.g. $F$ having a symmetric density $f$ whose support is a connected set, and $G$ is a shifted version of $F$.
\end{example}

%
%

\bigskip
The next example illustrates Proposition \ref{prop:max} for Gini's mean difference. It describes a situation where both tests are consistent, i.e., they detect the change with asymptotic probability 1, but both associated argmax estimators of the change-point are not consistent.
\begin{example} 
\label{ex:gmd:location}
Let $h(x,y)=|x-y|$, let $F$ be the distribution function for the uniform distribution on $[0,1]$ and $G$ be the distribution function for the uniform distribution on $[1,3]$. Some short calculations give $\theta_F=1/3$, $\theta_G=2/3$, $\theta_{FG}=3/2$ and $\rho_{FG}=1$. For $\tau^\star=1/2$, we obtain
\begin{equation*}
\Psi_2(t) =\begin{cases}t\frac{1}{2}\frac{1}{3}-2\frac{t(\frac{1}{2}-t)}{1-t}\frac{1}{2}=\frac{5t^2-2t}{6(1-t)} &\text{ for } t<\tau^\star, \\
(1-t)\frac{1}{2}\frac{1}{3}+2\frac{(t-\frac{1}{2})}{t}\frac{1}{2}-2(t-\frac{1}{2})\frac{1}{2}=\frac{-7t^2+10t-3}{6t} &\text{ for } t\geq\tau^\star. \end{cases}
\end{equation*}
We have $\Psi_2(1/2)=1/12$, but the maximum of the function is at $t=\sqrt{3/7}$ with $\Psi_2(\sqrt{3/7})=5/3-\sqrt{7/3}>1/12$.
So we find the change-point location estimator 
$\hat{\tau}_{2,n}=\operatorname*{argmax}_{t\in[0,1]}\left|D_n^L([nt])\right|$ is inconsistent in this case. 
The reasoning for $\hat{\tau}_{1,n}$ runs analogously. 
\end{example}

%
%

\bigskip
\begin{proof}[Proof of Proposition \ref{prop:kendall:symmetry}] Recall that $
h((x_1,y_1),(x_2,y_2)) = \Ind{(x_2-x_1)(y_2-y_1) > 0}-\Ind{(x_2-x_1)(y_2-y_1) < 0},$ hence
\begin{align}
h\{(x_1,-y_1),(x_2,-y_2)\} &= - h\{(x_1,y_1),(x_2,y_2)\}\label{eq:antisym}, \\
h\{(x_1,y_1),(x_2,y_2)\} &= h\{(x_2,y_2),(x_1,y_1)\}\label{eq:kendalsym}.  
\end{align}
We show in the following: (I) $\theta_F = - \theta_G$ and (II) $\rho_{FG} = 0$.  Let $(X_1,Y_1)$, $(X_2,Y_2)$ be two independent random vectors with distribution $F$, so both $(X_1,-Y_1)$ and $(X_2,-Y_2)$ have distribution $G$. \\ 
Part (I): By \eqref{eq:antisym}
\begin{equation*}
\theta_G=E\left[h\{(X_1,-Y_1),(X_2,-Y_2)\}\right]\\
=-E\left[h\{(X_1,Y_1),(X_2,Y_2)\}\right]=-\theta_F
\end{equation*}
Part (II): We use \eqref{eq:antisym} again
\begin{equation*}
\theta_{FG}=E\left[h\{(X_1,Y_1),(X_2,-Y_2)\}\right]=-E\left[h\{(X_1,-Y_1),(X_2,+Y_2)\}\right].
\end{equation*}
Now \eqref{eq:kendalsym} leads to
\begin{equation*}
\theta_{FG}=-E\left[h\{(X_1,-Y_1),(X_2,+Y_2)\}\right]=-E\left[h\{(X_2,Y_2),(X_1,-Y_1)\}\right]=-\theta_{FG}
\end{equation*}
and we can conclude that $\theta_{FG}=0$. Furthermore by (I), $\theta_F+\theta_G=0$, so  $\rho_{FG}=\theta_{FG}-\frac{1}{2}(\theta_F+\theta_G)=0$.
\end{proof}

\section{Details on long-run variance estimation} 
\label{app:data}

The p-values reported in Section \ref{sec:data} are obtained by comparing the studentized test statistics to the Kolmogorov distribution, i.e., the distribution of $\sup_{0 \le t \le 1} | B(t)|$, where $B$ is a Brownian bridge. This means, we divide both test statistics $T_1(h)$ and $T_2(h)$, (see (\ref{eq:cusum:ustat1})) by the following estimate of the long-run variance $\sigma^2_h$ (see (\ref{eq:sigma2})):
\[
	\hat{\sigma}_{h}^2 = \, 4 \! \sum_{k=-(n-1)}^{n-1} W\left(\frac{|k|}{b_n}\right)
	\frac{1}{n}\sum_{i=1}^{n-|k|} \hat{h}_1(X_i) \hat{h}_1(X_{i+|k|}), 
\]
where $\hat{h}_1(x) = n^{-1}\sum_{i=1}^{n} h(x,X_i) - U_n$, furthermore $W$ is a suitable kernel function, downweighting higher-lag covariance terms, and $b_n$ a sample-size dependent bandwidth parameter. We follow here \citet{dehling:vogel:wendler:wied:2017}, who adopt the kernel estimation by \citet{dejong:2000} for U-statistics and give general conditions on $W$ and $b_n$. 

Throughout the current paper, we use the Bartlett (or triangular) kernel, i.e., 
\[
	W(x) = (1 - |x|)\Ind{|x| \le 1}.
\]

\begin{figure}[t]%
\centering
\includegraphics[width=0.9\columnwidth]{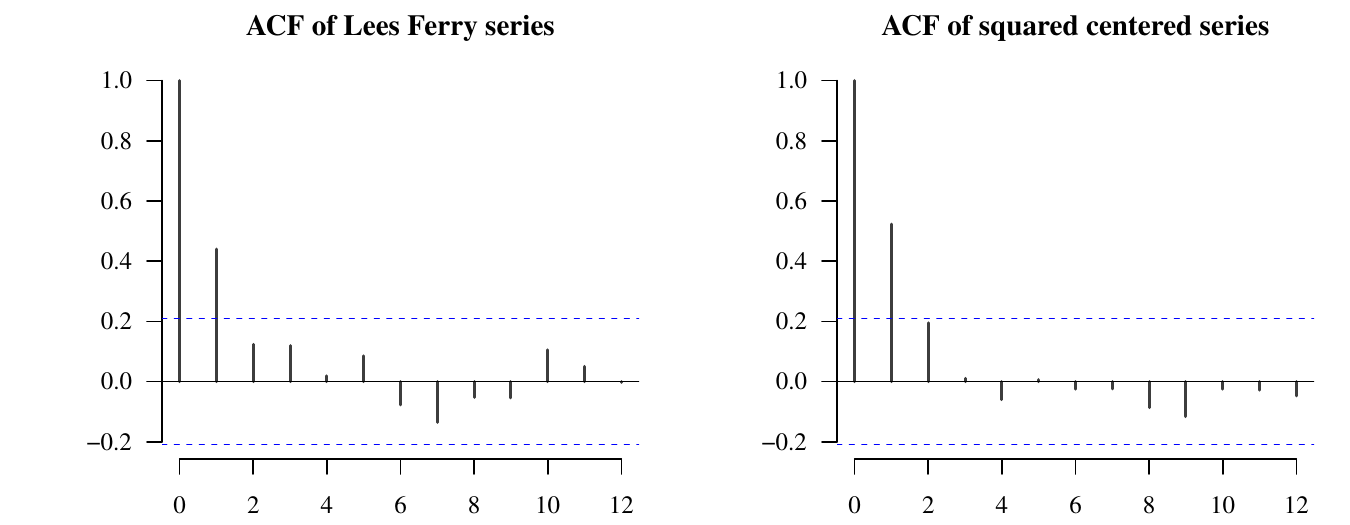}%
\caption{Auto-correlation function of the Lees Ferry times series (left) and the squared mean-centered Lees Ferry time series (right)}%
\label{fig:app:1}%
\end{figure}

\begin{figure}[t]%
\centering
\includegraphics[width=0.9\columnwidth]{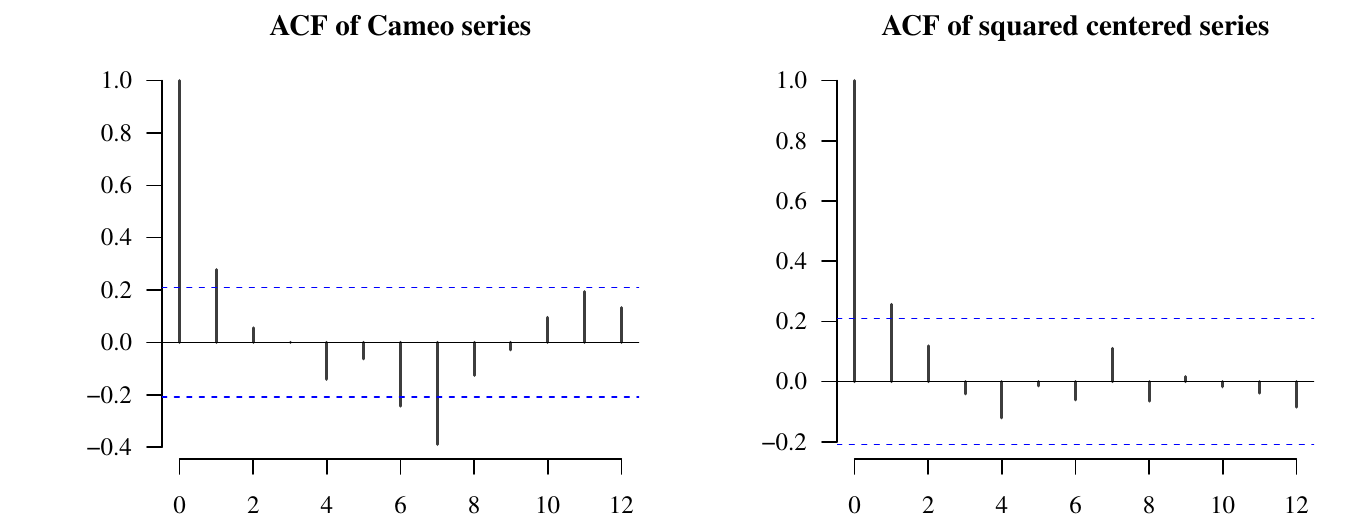}%
\caption{Auto-correlation function of the Cameo times series (left) and the squared mean-centered Cameo time series (right)}%
\label{fig:app:2}%
\end{figure}

The optimal bandwidth selection problem for long-run variance estimation has itself produced a substantial body of literature, see, e.g., \citet{belotti:2023} and the references therein. 
Our choice of $b_n = 2$ in the data example is motivated by a simple hands-on criterion: Figures \ref{fig:app:1} and \ref{fig:app:2} show the auto-correlation function of both time series along with the auto-correlation function of the squared mean-centered time series. This gives a rough indication that serial dependencies of lag 2 or higher are negligible for this series.


\bibliographystyle{abbrvnat} 
\bibliography{DVW-References}

\end{document}